\documentclass[11pt]{article}

\usepackage[margin=1in]{geometry}
\usepackage[T1]{fontenc}
\usepackage[utf8]{inputenc}
\usepackage{lmodern}
\usepackage{microtype}
\usepackage{amsmath,amssymb,amsthm,mathtools,bm}
\usepackage{graphicx}
\usepackage{booktabs}
\usepackage{array}
\usepackage{amscd}
\usepackage{tikz}
\usetikzlibrary{positioning,fit,backgrounds,arrows.meta}
\usepackage{tikz-cd}
\usepackage{pgfplots}
\pgfplotsset{compat=1.18}
\usepackage{caption}
\usepackage{subcaption}
\usepackage{enumitem}
\usepackage{cite}
\usepackage{hyperref}
\usepackage{xcolor}
\usepackage{float}
\usepackage{setspace}
\usepackage{multirow}
\usepackage{url}
\usepackage{placeins}

\hypersetup{
  colorlinks=true,
  linkcolor=blue!50!black,
  citecolor=blue!50!black,
  urlcolor=blue!60!black,
  pdftitle={Topological Ignorability, Weak Ignorability, and Structural Causal Effects Beyond Means},
  pdfauthor={Anonymous}
}

\newtheorem{theorem}{Theorem}
\newtheorem{proposition}[theorem]{Proposition}
\newtheorem{corollary}[theorem]{Corollary}
\newtheorem{lemma}[theorem]{Lemma}
\theoremstyle{definition}
\newtheorem{definition}[theorem]{Definition}
\newtheorem{remark}[theorem]{Remark}
\newtheorem{example}[theorem]{Example}

\newcommand{\R}{\mathbb{R}}
\newcommand{\E}{\mathbb{E}}
\newcommand{\Prob}{\mathbb{P}}
\newcommand{\Law}{\mathcal{L}}

\newcommand{\norm}[1]{\left\lVert #1 \right\rVert}
\DeclareMathOperator{\supp}{supp}
\DeclareMathOperator{\ATE}{ATE}
\DeclareMathOperator{\ESS}{ESS}
\DeclareMathOperator{\logit}{logit}

\title{\bfseries Topological Ignorability for Structural Causal Effects Beyond Means}
\author{%
	Usef Faghihi\thanks{Université du Québec à Trois-Rivières (UQTR).\\
		Email: usef.faghihi@uqtr.ca}
	\and
	Amir Saki\thanks{Université du Québec à Trois-Rivières (UQTR).\\
		Email: amir.saki@uqtr.ca}
}
\date{}

\begin{document}
\maketitle

\begin{abstract}
Many interventions change the structure of an outcome distribution rather than its mean: they may split a population into disconnected regimes, create loops or holes, generate new branches, or reorganize the geometry of an outcome cloud while leaving the average response nearly unchanged. In such settings, classical mean-based causal estimands such as the average treatment effect may fail to capture important structural changes.

To quantify such effects, we introduce topological-geometrical causal metrics based on topological and geometrical summaries of interventional outcome laws, including density-superlevel Betti summaries, Euler-based signatures, and
persistent-homology summaries. These metrics measure structural differences between treated and untreated outcome laws beyond averages.

We then study the identifying assumptions required for these metrics to admit a causal interpretation. We introduce topological ignorability, a topological analogue of classical conditional ignorability that requires invariance of the chosen structural feature rather than the full counterfactual distribution. When the chosen summary is injective, topological ignorability coincides with weak ignorability, the classical conditional ignorability assumption. For noninjective summaries, however, topological ignorability can identify the structural feature of interest without identifying the full interventional law.

To address this setting, we introduce a covariate-standardized topological-geometrical causal effect and develop practical estimation procedures. We validate the framework in two hidden-confounding benchmarks: a fully
synthetic exact benchmark and a real-covariate semi-synthetic benchmark using
Wisconsin breast-cancer covariates.  In both cases, weak ignorability fails and
balancing observed covariates nearly eliminates standardized mean differences,
yet the coordinate-mean average treatment effect remains biased.  In contrast,
selected finite density-superlevel Betti and Euler contrasts remain stable
across oracle, observational, and weighted analyses.

\end{abstract}

\textbf{Keywords:} causal inference; ignorability; topological data analysis; persistent homology; structural causal effect

\section{Introduction}

Classical causal inference is organized around the potential outcomes $Y(0)$ and $Y(1)$ and mean-based targets such as the average treatment effect (ATE) and the conditional average treatment effect (CATE) \cite{RosenbaumRubin1983,Holland1986,Pearl2009,ImbensRubin2015}. Those targets are exactly right when the scientific question is whether treatment changes the average response. But many contemporary applications ask a different question. A perturbation may create a new cell state, split one population into two phenotypic regimes, or stretch an outcome cloud into a filament while leaving its centroid almost unchanged. In labor and mental-health interventions, the relevant question may be whether the joint response cloud reorganizes into distinct subgroups rather than whether a single coordinate mean changes. In those settings, a mean effect can be zero while the structural causal effect is large and scientifically decisive.

This is precisely the regime in which topological summaries are useful. Persistent homology, persistence landscapes, Betti curves, Euler characteristic transforms, Euler integral transforms on constructible functions, and weighted Euler curve transforms are designed to capture connectivity, branching, multiscale structure, and location-sensitive shape \cite{Bubenik2015,Turner2014,Baryshnikov2011,Ghrist2018,Curry2022,Jiang2020}. In this paper, we use \(\Psi\) to denote a topological or geometrical summary map that assigns to an outcome law a structural signature derived from its density. To avoid ambiguity, we work with a concrete geometric representation: for each law $\mu$, we consider its density $f_\mu$ (with respect to a fixed reference measure). We assume throughout that all laws considered admit densities with respect to this reference measure.  In continuous models, the ordinary support is often the entire ambient space (e.g., for Gaussian laws on $\mathbb{R}^d$), and therefore carries no useful topological information. Instead, topological structure is extracted from the geometry of the density $f_\mu$, for example through its superlevel-set filtration $\{x : f_\mu(x) \ge \lambda\}$ or related filtration-based constructions. Thus, throughout the paper, $\Psi(\mu)$ should be interpreted as a topological summary derived from the geometry of the density $f_\mu$.  The causal challenge is not defining such summaries, but determining the weakest identifying assumptions under which they admit a causal interpretation, and characterizing which topological-geometrical estimands remain identifiable when $\Psi$ is noninjective. This idea is illustrated in Figure~\ref{fig:filtration}, which shows the density-induced filtration given by nested superlevel sets and the resulting evolution of connectivity across levels. Therefore, $\Psi(\mu)$ depends on the entire filtration $\{x : f_\mu(x)\ge \lambda\}_\lambda$, rather than on any single level.
\begin{figure}[t]
	\centering
\begin{tikzpicture}[
	scale=0.8,
	axis/.style={->, line width=0.6pt, black!80},
	outer/.style={fill=blue!12, draw=blue!55!black, line width=0.65pt, opacity=0.75},
	middle/.style={fill=cyan!25, draw=cyan!50!black, line width=0.75pt, opacity=0.78},
	inner/.style={fill=violet!55, draw=violet!60!black, line width=0.85pt, opacity=0.88},
	labelbox/.style={
		draw=#1,
		rounded corners=2.5pt,
		fill=white,
		line width=0.55pt,
		inner xsep=5pt,
		inner ysep=2pt,
		font=\large,
		text=#1
	},
	leader/.style={line width=0.55pt, -{Latex[length=2mm]}, #1}
	]
	
	
	\draw[axis] (-4.2,0) -- (4.2,0) node[right] {$x$};
	\draw[axis] (0,-2.05) -- (0,2.05) node[above] {$y$};
	
	\fill[outer] (-1.5,0) ellipse (2.2 and 1.3);
	\fill[outer] ( 1.5,0) ellipse (2.2 and 1.3);
	
	\fill[middle] (-1.5,0) ellipse (1.4 and 0.9);
	\fill[middle] ( 1.5,0) ellipse (1.4 and 0.9);
	
	\fill[inner] (-1.5,0) ellipse (0.7 and 0.5);
	\fill[inner] ( 1.5,0) ellipse (0.7 and 0.5);
	
	\draw[blue!60!black, dashed, line width=0.75pt] 
	(-1.5,0) ellipse (2.2 and 1.3);
	\draw[blue!60!black, dashed, line width=0.75pt] 
	(1.5,0) ellipse (2.2 and 1.3);
	
	\draw[cyan!45!black, line width=0.8pt] 
	(-1.5,0) ellipse (1.4 and 0.9);
	\draw[cyan!45!black, line width=0.8pt] 
	(1.5,0) ellipse (1.4 and 0.9);
	
	\draw[violet!60!black, line width=0.95pt] 
	(-1.5,0) ellipse (0.7 and 0.5);
	\draw[violet!60!black, line width=0.95pt] 
	(1.5,0) ellipse (0.7 and 0.5);
	
	\node[labelbox=blue!60!black] (L1) at (-3.75,1.95) {$\lambda_1$};
	\node[labelbox=cyan!45!black] (L2) at (-3.75,1.25) {$\lambda_2$};
	\node[labelbox=violet!60!black] (L3) at (-3.75,0.55) {$\lambda_3$};
	
	\draw[leader=blue!60!black] 
	(L1.east) -- (-2.05,1.3);
	
	\draw[leader=cyan!45!black] 
	(L2.east) -- (-2 ,0.85);
	
	\draw[leader=violet!60!black] 
	(L3.east) -- (-2.05,0.25);

\end{tikzpicture}
	
\caption{
	Filtration induced by the density $f_\mu$. Nested shaded regions depict superlevel sets $\{x : f_\mu(x)\ge \lambda\}$ for decreasing thresholds. Darker regions represent higher density. At high levels, two disconnected components are present; as the threshold decreases, these regions expand and eventually merge into a single connected set, revealing the multiscale topological structure captured by summaries such as persistent homology, Betti curves, and Euler characteristic.
}
	\label{fig:filtration}
\end{figure}

The key distinction should appear immediately. Weak ignorability, also known as the classical conditional ignorability assumption in causal inference, controls the \emph{entire} arm-specific counterfactual law within each covariate stratum:
\[
\Law(Y(t)\mid T=t,Z=z)=\Law(Y(t)\mid Z=z).
\]
Topological ignorability relative to a chosen transform or functional $\Psi$ asks only that the \emph{chosen structural target} agree:
\[
\Psi\!\bigl(\Law(Y(t)\mid T=t,Z=z)\bigr)=\Psi\!\bigl(\Law(Y(t)\mid Z=z)\bigr).
\]
Intuitively, weak ignorability says ``the whole law is unchanged by treatment selection once we condition on $Z$''; topological ignorability says only ``the structural feature we care about is unchanged.'' If $\Psi$ is injective, these are the same statement written in different languages. If $\Psi$ is coarse and noninjective, topological ignorability can be strictly weaker than weak ignorability, in the sense that it may hold even when the full counterfactual law is not identified. Hidden confounding may alter densities (via changes in mixture weights or component structure) without changing whether the high-density regions of the treated law have one connected component or two. When the scientific target is exactly that bifurcation, the weaker target-dependent notion is the relevant one.

The noninjective case is not a corner case. In many applications $\Psi$ is a persistent-homology-type summary, a persistence landscape, a Betti curve, a persistence image, or a connectivity functional applied to the density via its superlevel-set filtration.  Those transforms are typically not injective as maps from outcome laws to their topological signatures.  Different laws can yield identical topological summaries, such as the same persistence diagram, the same persistence landscape, or the same number of connected components in the high-density regions of the distribution. In that setting, topological ignorability does not imply weak ignorability, and one should not expect identification of the full interventional laws. The correct response is not to abandon topological targets, but to focus on estimands that remain identifiable under topological ignorability.  Formally, we view $\Psi$ as a functional of the density $f_\mu$,
\[
\Psi(\mu) = \Phi(f_\mu),
\]
where $\Phi$ extracts topological information from the geometry of $f_\mu$, for example via its superlevel-set filtration or related constructions. This perspective makes explicit that causal identification operates on geometric features of the density rather than on the full law $\mu$, and that the choice of topological summary determines both the estimand and the identifying assumptions required to identify and estimate it from observed data. In particular, different probability laws may induce densities whose geometric structures are topologically indistinguishable under the chosen summary $\Phi$, and hence identical topological summaries, reflecting the inherent noninjectivity of $\Psi$.

Figure~\ref{fig:sam} gives a geometric picture in terms of high-density regions of the outcome law. The control arm exhibits a single connected region. Intervention~1 deforms this region into an elongated or filament-like structure, while Intervention~2 produces a differently shaped but still connected cloud. In both cases, the distribution remains connected, but its geometric organization changes substantially. After recentering, the centroids may coincide, so a mean-based estimand can be negligible even though the structural consequences are very different. This illustrates that causal effects may manifest as geometric deformations of the distribution, even when classical mean-based summaries are unchanged, and that the relevant signal lies in the geometry of the distribution rather than in its mean or in its full measure-theoretic support.
\begin{figure}[htbp]
  \centering
  \includegraphics[ scale =0.5]{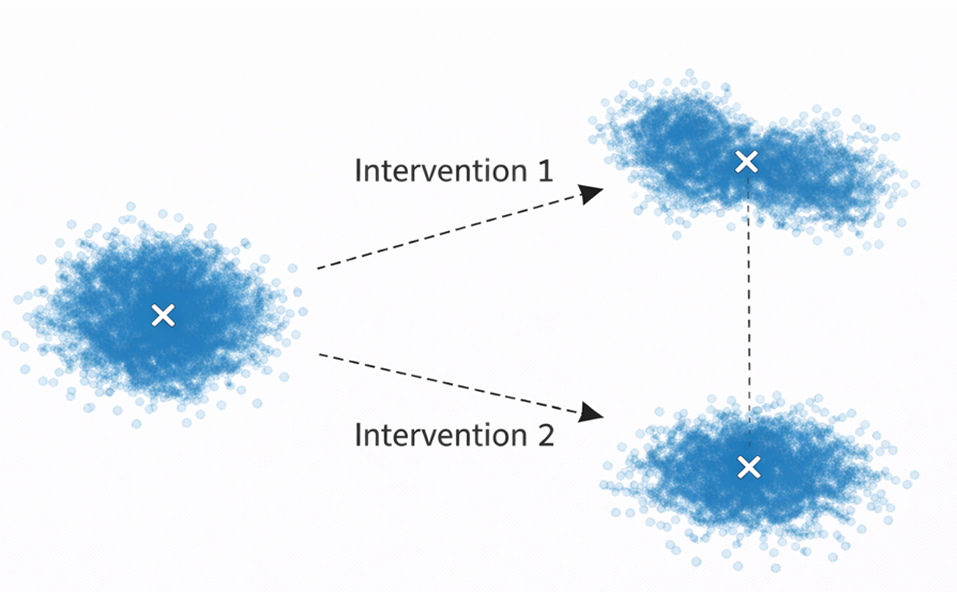}
\caption{Motivating structural effects beyond means. The control law (left) is connected. Intervention~1 deforms the distribution into an elongated or filament-like structure, while Intervention~2 produces a differently shaped but still connected cloud. Mean-based targets can be near zero even when the structural contrast is substantial.}
  \label{fig:sam}
\end{figure}

The paper makes five contributions. First, it formalizes a law-level topological-geometrical contrast, where \(\Psi(P_t)\) takes values in a normed space \(E\), and \(\norm {\;\cdot\;}_E\) denotes the corresponding norm:
\[
\Theta_{\Psi}=\norm{\Psi(P_1)-\Psi(P_0)}_E,\qquad P_t=\Law(Y(t)),
\]
together with a conditional contrast and a \emph{covariate-standardized} topological effect
\[
\tau_{\Psi}=\int \norm{\Psi(P_{1,z})-\Psi(P_{0,z})}_E\,dP_Z(z).
\]
Second, it shows that on injective model classes, topological ignorability is equivalent to weak ignorability; topology then changes the \emph{estimand}, not the identifying assumptions. Third, it shows why the noninjective regime is different: topological ignorability can identify 
 $\tau_{\Psi}$ even when it does not identify $\Theta_{\Psi}$ or the full counterfactual law. Fourth, it gives a fully explicit hidden-confounding SEM example with
 closed-form densities, posterior weights, and a second SEM with the same
 observational law but a different ATE.
 Fifth, it restores the estimation layer needed for applications with noninjective transforms: conditional plug-in consistency, a practical score-bin estimator, finite-signature approximation, overlap diagnostics, and a simple hidden-bias stress test.
 
\paragraph{Related work and positioning.}
This paper is related to three strands of work.  First, several causal-inference
literatures replace the ordinary average treatment effect by richer targets,
including distributional treatment effects and quantile treatment effects
\cite{Firpo2007,Chernozhukov2013}, kernel or metric comparisons of distributions
\cite{GrettonEtAl2012,SzekelyRizzo2013}, and causal inference with functional
or random-object outcomes \cite{Ecker2024,Santambrogio2015}.  These approaches
share the idea that the causal estimand should match the scientific question
rather than defaulting to \(\E[Y(1)-Y(0)]\).  Our target differs because the
summaries considered here are topological or geometric features of
density-induced filtrations.

Second, topology has also appeared in causal inference in work on topological
perspectives on causal structure, causal homotopy, topology-based causal
analysis of dynamical systems, topology-aware representation balancing, and
persistent-homology-based causal effects
\cite{ibeling_icard_2021,mahadevan_2021,bando_kaji_yaguchi_2022,farzam_2025,kim_lee_tce_2026}.
In most of these settings, topology is used to study causal structure, support
causal discovery, stabilize representation learning, or propose specific
topological statistics.  By contrast, our focus is a general law-level
identification framework: the estimand is itself topological, and the main
question is when a chosen structural feature of the interventional outcome law
is identified under weak or topological ignorability.

Third, statistical topological data analysis provides many of the summaries
used here, including persistence landscapes, Betti curves, persistence images,
Euler characteristic transforms, and weighted Euler curve transforms
\cite{Bubenik2015,Adams2017,Baryshnikov2011,Turner2014,Ghrist2018,Curry2022,Jiang2020}.
These tools provide candidate transforms \(\Psi\) and norms \(\norm{\cdot}_E\),
but they do not by themselves give a causal interpretation.  The contribution
of the present paper is to connect such summaries to causal identification
through consistency, positivity, weak ignorability, and the target-dependent
topological ignorability condition introduced here.

The central point is not that topology provides another descriptive summary of
data.  Rather, the choice of a topological transform \(\Psi\) changes the causal
target and therefore changes the relevant identification question.  The paper
asks when a structural feature of the interventional law is identified even
when the full interventional law is not.  This places the framework within the
standard causal-inference logic of targets, assumptions, identification,
estimation, and sensitivity analysis.

The paper is intentionally focused. Because the page budget is tight, the
emphasis is on conditional topological ignorability, noninjective transforms,
the worked hidden-confounding example, and the estimable target that remains
available when weak ignorability is too strong for the scientific question.

\section{Preliminaries: topology, persistence, and Euler signatures}
\label{sec:preliminaries}
This section gives a compact summary of the topological objects used later.
The paper treats an outcome law through the geometry of its density, rather
than only through coordinates, moments, or ordinary support.  In continuous
models, ordinary supports may be topologically uninformative, so the relevant
spaces are density-induced superlevel sets
\[
\{y:f_\mu(y)\ge c\}.
\]
Topological summaries then record how connected components, holes, Euler
characteristics, or directional Euler scans evolve across density levels.

\paragraph{Superlevel sets, components, and Betti numbers.}
Although the support \(\supp(P)\) of a probability law is a standard
topological object, it is often too coarse for the continuous models considered
in this paper.  For example, Gaussian and Gaussian-mixture laws on
\(\mathbb R^d\) may have full support, even when their densities have very
different modal or high-density geometry.  We therefore work primarily with
density superlevel sets.  For a law \(\mu\) with density \(f_\mu\), and for a
density level \(c\), define
\[
A_c(\mu):=\{y\in\mathcal Y:f_\mu(y)\ge c\}.
\]
As \(c\) varies, these sets form a filtration of high-density regions.  The
zeroth Betti number \(\beta_0(A_c(\mu))\) counts the number of connected
components of the superlevel set.  Thus \(\beta_0=1\) for one connected
high-density region and \(\beta_0=2\) for two separated high-density regions.
The first Betti number \(\beta_1(A_c(\mu))\) counts one-dimensional holes or
loops, and higher Betti numbers count higher-dimensional voids.

The Euler characteristic is a coarser alternating summary of these Betti
numbers:
\[
\chi(A_c(\mu))
=
\beta_0(A_c(\mu))-\beta_1(A_c(\mu))+\beta_2(A_c(\mu))-\cdots .
\]
In the two-dimensional examples in this paper, a connected high-density bar has
one component and no hole, so \(\chi=1\); two separated high-density blobs have
two components and no holes, so \(\chi=2\).  A finite superlevel-set effect of
\(+1\) means that, at the selected density levels, treatment changes the
high-density topology from one connected regime to two regimes, independently
of the coordinate mean.

\paragraph{Filtrations and persistent homology.}    Persistent homology is a multiscale version of these counts. Early persistence-like ideas appeared in Frosini’s Size Theory and size functions in the early 1990s, particularly for connected components \(H_0\) \cite{Frosini1990,Frosini1992,Verri1993}. Modern persistent homology was later formalized in computational topology and topological data analysis \cite{ZomorodianCarlsson2005,EdelsbrunnerHarer2010,Carlsson2009}. It has also been extended to statistical summaries such as persistence landscapes \cite{Bubenik2015} and linked to Euler-based transforms \cite{Ghrist2018}.

 Starting from a point cloud, one builds a nested family of spaces
\[
K_{r_1}\subseteq K_{r_2}\subseteq\cdots,
\]
where the scale $r$ may be the radius of balls around data points, a threshold in a density estimator, or a sublevel/superlevel threshold of a function. Homology is computed at every scale. A feature is ``born'' when it appears and ``dies'' when it merges with another feature or is filled in. The collection of birth--death pairs is a persistence diagram or barcode. Long bars are interpreted as stable geometric signal; very short bars are often treated as sampling noise. The $H_0$ part of persistent homology tracks connected components. In a cloud with two well-separated blobs, the two components remain distinct until a relatively large scale, producing a large $H_0$ death time. This is why the experiments report an $H_0$ largest-gap or death-scale diagnostic: it is a simple numerical proxy for the separation between components.

\paragraph{Finite summaries of persistence.}
Persistence diagrams are rich objects, but statistical analysis often uses finite or functional summaries. Persistence landscapes, Betti curves, and persistence images convert a diagram into an object that can be averaged, compared, or inserted into a regression-style estimator \cite{Bubenik2015}. In this paper such summaries are represented abstractly by a transform $\Psi(P)$. The transform may be noninjective: many different laws can have the same number of components, the same Betti curve, or even the same finite persistence summary. This loss of information is not a defect when the scientific target is deliberately coarse. It is exactly what allows a topological estimand to ignore density changes that are irrelevant to the structural question, while a mean estimand may remain sensitive to those changes.

\paragraph{Finite density-superlevel signatures.}
In applications, the density \(f_\mu\) is not observed directly.  It is
estimated from a finite sample, usually on a finite grid and with a smoothing
parameter.  The superlevel filtration is therefore replaced by a finite
collection of estimated superlevel sets
\[
\{\widehat f \ge c\},
\qquad c\in\Lambda,
\]
where \(\Lambda\) is a finite set of density levels.  The resulting Betti
curves, Euler curves, persistence summaries, and finite ECT vectors should be
understood as numerical approximations to the corresponding population
signatures.  This is why the empirical section reports finite-grid summaries
together with robustness diagnostics over grid size, smoothing, level choices,
and small-component thresholds.

\paragraph{Euler characteristic transform.}
The Euler characteristic transform (ECT) is a location-sensitive signature of shape \cite{Baryshnikov2011,Turner2014,Ghrist2018,Curry2022}. For each direction $v$ on the unit sphere and each threshold $a$, one looks at the part of a shape whose projection onto $v$ is at most $a$:
\[
S_{v,a}=\{x\in S: \langle x,v\rangle\le a\}.
\]
The ECT records
\[
\operatorname{ECT}_S(v,a)=\chi(S_{v,a}),
\]
that is, the Euler characteristic of these directional slices as the threshold moves. Intuitively, one scans the shape from many directions and records when components and holes appear or disappear. Because the scan is directional, the ECT contains more spatial information than the single number $\chi(S)$. In computations, only finitely many directions and thresholds are used, so the ECT becomes a finite vector signature. Comparing two ECT vectors gives a practical metric for shape change.

\paragraph{How these concepts enter the causal argument.}
The paper does not claim that topology identifies every causal quantity. Instead, it asks which causal quantity is being targeted. If $\Psi$ is injective on the model class, topological ignorability is equivalent to weak ignorability because equality of $\Psi$ forces equality of the full conditional law. If $\Psi$ is noninjective, equality of $\Psi$ can hold even when hidden confounding changes unobserved mixture weights, coordinate means, or densities. The estimable target is then the covariate-standardized contrast in the chosen signature, not the full counterfactual law. The finite superlevel-set experiments illustrate this point directly: hidden
\(U\) biases the coordinate-mean ATE, but the selected Betti and Euler
summaries remain stable because \(U\) changes sampling density within the
outcome regions without changing the reported finite superlevel-set topology.

\section{Setup, estimands, and identification}

Work on a complete probability space $(\Omega,\mathcal A,\Prob)$. Let $T\in\{0,1\}$ be treatment, let $Z$ be observed covariates in a standard Borel space $\mathcal Z$, and let $Y(0),Y(1)$ be potential outcomes taking values in $\mathcal Y\subseteq\R^d$. Consistency holds:
\[
Y=Y(T)\qquad\text{a.s.}
\]
For $t,s\in\{0,1\}$ and covariate value $z$, write
\[
P_{t,s,z}:=\Law(Y(t)\mid T=s,Z=z),\qquad P_{t,z}:=\Law(Y(t)\mid Z=z),
\]
and let $P_t=\Law(Y(t))$ denote the marginal interventional law. When a dominated model is useful, we assume all relevant conditional laws admit densities with respect to a common $\sigma$-finite measure $\lambda$ on $\mathcal Y$ and we write $f_{t,s,z}$ and $f_{t,z}$ for the corresponding densities.

\begin{definition}[Topological causal contrasts]
Let $\Phi$ be a measurable map from a class of densities on $\mathcal Y$ into a Banach space $(E,\norm{\cdot}_E)$. We define
	\[
	\Psi(\mu) := \Phi(f_\mu),
	\]
	where $f_\mu$ denotes the density of $\mu$.
 The \emph{marginal topological causal metric} is
\[
\Theta_{\Psi}:=\norm{\Psi(P_1)-\Psi(P_0)}_E.
\]
Its conditional version is
\[
\theta_{\Psi}(z):=\norm{\Psi(P_{1,z})-\Psi(P_{0,z})}_E.
\]
The \emph{covariate-standardized topological causal effect} is
\[
\tau_{\Psi}:=\int \theta_{\Psi}(z)\,dP_Z(z)=\E[\theta_{\Psi}(Z)].
\]
When $E=\R$ and $\Phi$ is taken to be a connectivity-based functional (e.g., the Betti curve $\lambda \mapsto \beta_0(\{x : f_\mu(x)\ge \lambda\})$ or a persistence summary), the contrast records changes in the connectivity structure of the density across levels of the induced filtration.
\end{definition}
This formulation makes explicit that causal contrasts are defined in terms of topological features of the density $f_\mu$ rather than the full probability law itself. Intuitively, $\Theta_\Psi$ compares the global topological structure of the treated and untreated interventional laws, while $\theta_\Psi(z)$ compares this structure within a fixed covariate stratum. The quantity $\tau_\Psi$ aggregates these within-stratum differences over the covariate distribution.

\begin{definition}[Weak, marginal topological, and conditional topological ignorability]
	Weak ignorability means
	\[
	Y(t)\perp T\mid Z,\qquad t\in\{0,1\}.
	\]
Marginal topological ignorability relative to $\Psi$ is the condition
	\[
	\Psi\!\bigl(\Law(Y(t)\mid T=t)\bigr)=\Psi(P_t),
	\qquad t\in\{0,1\}.
	\]
	This condition targets identification of the marginal contrast $\Theta_\Psi$.
	
	Conditional topological ignorability relative to $\Psi$ means
	\[
	\Psi(P_{t,t,z})=\Psi(P_{t,z})
	\qquad\text{for }t\in\{0,1\}\text{ and }P_Z\text{-a.e. }z.
	\]
	This condition is tied to the covariate-standardized effect $\tau_\Psi$.
\end{definition}

Thus weak ignorability is a law-level condition, whereas conditional
topological ignorability is a target-level condition.  The latter should not be
read as a weaker universal assumption; it is weaker only relative to the chosen
noninjective summary \(\Psi\).  Changing \(\Psi\) changes both the estimand and
the identification burden.

\begin{proposition}[Identification under weak ignorability]
\label{prop:gformula}
Assume positivity,
\[
0<e(z):=\Prob(T=1\mid Z=z)<1\qquad\text{a.e.},
\]
and weak ignorability. Then for every measurable $B\subseteq\mathcal Y$,
\[
P_t(B)=\int \Prob(Y\in B\mid T=t,Z=z)\,dP_Z(z).
\]
Consequently, $P_t$, $\Theta_{\Psi}$, $\theta_{\Psi}(z)$, and $\tau_{\Psi}$ are identified whenever $\Psi$ is measurable on the identified law.
\end{proposition}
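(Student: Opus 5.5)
The plan is the standard g-formula argument, carried out at the level of conditional laws and then pushed through $\Psi$. First fix $t\in\{0,1\}$ and a measurable $B\subseteq\mathcal Y$. By the tower property, $P_t(B)=\Prob(Y(t)\in B)=\int \Prob(Y(t)\in B\mid Z=z)\,dP_Z(z)$. Here the conditional probabilities are taken from a regular conditional distribution of $Y(t)$ given $(T,Z)$; it exists because $\mathcal Y\subseteq\R^d$ is Polish and $\mathcal Z$ is standard Borel. Weak ignorability $Y(t)\perp T\mid Z$ gives $\Prob(Y(t)\in B\mid Z=z)=\Prob(Y(t)\in B\mid T=t,Z=z)$ for $P_Z$-a.e.\ $z$, i.e.\ $P_{t,z}=P_{t,t,z}$. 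On the event $\{T=t\}$, consistency $Y=Y(T)$ lets us replace $Y(t)$ by $Y$, so $P_{t,t,z}(B)=\Prob(Y\in B\mid T=t,Z=z)$, which is a functional of the observed law of $(Y,T,Z)$. Substituting back yields the displayed formula.

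The step needing care is making the conditioning on $\{T=t,Z=z\}$ well defined $P_Z$-a.e. This is where positivity enters. Since $0<e(z)<1$ a.e., the joint law of $(T,Z)$ puts positive conditional mass on each arm. Hence the kernel $z\mapsto \Law(Y\mid T=t,Z=z)$ is determined $P_Z$-a.e., not merely on the set $\{e>0\}$ or $\{e<1\}$; otherwise one arm's kernel would be arbitrary on a set of positive $P_Z$-measure. I would phrase this via disintegration. The measure $\Prob(Y\in\cdot,\,T=t,\,Z\in\cdot)$ is disintegrated against $\Prob(T=t,Z\in\cdot)=\int_{\cdot} e_t(z)\,dP_Z$, where $e_1=e$ and $e_0=1-e$. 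Because $e_t>0$ a.e., this measure is equivalent to $P_Z$, so a.e.\ uniqueness of the kernel transfers to $P_Z$-a.e.\ uniqueness. I also need the conditional-independence statement to be read as equality of kernels off a $P_Z$-null set; this is a routine measurability check using a countable generating $\pi$-system for the Borel sets of $\mathcal Y$.

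For the consequences, each conditional law $P_{t,z}=P_{t,t,z}$ is identified for a.e.\ $z$, and $P_t$ is identified by the formula. In the dominated case, the densities $f_{t,z}$ and $f_t=\int f_{t,t,z}\,dP_Z(z)$ (by Fubini/Tonelli) are therefore functionals of the observed law, defined $\lambda$-a.e. Applying the measurable map $\Phi$ gives identification of $\Psi(P_t)$ and $\Psi(P_{t,z})$. We take "measurable on the identified law" to mean that $\Phi$ is well defined on density classes, i.e.\ it does not depend on the a.e.-version chosen. I flag this as an implicit hypothesis, since superlevel-set topology can be sensitive to versions. Then $\Theta_\Psi$ and $\theta_\Psi(z)$ are identified as norms of differences in $E$. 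For $\tau_\Psi$, I need $z\mapsto\theta_\Psi(z)$ to be measurable. It is the composition of the measurable kernel $z\mapsto P_{t,z}$, the measurable map $\Psi$, subtraction and the norm, all of which are continuous or measurable on the Banach space $E$. Since $\theta_\Psi\ge0$, the integral $\int\theta_\Psi\,dP_Z$ is well defined in $[0,\infty]$ and is a functional of the observed law.
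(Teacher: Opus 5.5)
Your proposal is correct and follows essentially the same route as the paper's proof: weak ignorability gives $P_{t,z}=P_{t,t,z}$, consistency identifies $P_{t,t,z}$ with $\Law(Y\mid T=t,Z=z)$, integrating over $P_Z$ yields the g-formula, and applying the deterministic map $\Psi$ identifies the effect functionals. Your additions fill in detail the paper leaves implicit: you use positivity to make the arm-specific kernels determined $P_Z$-a.e., you add the countable $\pi$-system step, and you flag the version-dependence of $\Phi$ and the measurability of $\theta_\Psi$ (one caveat there: Borel measurability of $z\mapsto\Psi(P_{1,z})-\Psi(P_{0,z})$ is automatic when $E$ is separable, but not for an arbitrary Banach space).
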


\begin{proof}
Weak ignorability implies $P_{t,z}=P_{t,t,z}$ almost surely; consistency identifies $P_{t,t,z}$ with $\Law(Y\mid T=t,Z=z)$. Integrating over $Z$ yields the usual g-formula, and applying the deterministic map $\Psi$ identifies the displayed effect functionals.
\end{proof}

\begin{remark}[Mixtures must be handled before the transform]
	\label{rem:noncommute}
	For nonlinear topological transforms, $\Psi$ generally does not commute with mixing:
	\[
	\Psi\!\left(\int P_{t,z}\,dP_Z(z)\right)\neq\int \Psi(P_{t,z})\,dP_Z(z).
	\]
	Thus the topology of a mixture cannot, in general, be recovered from the topologies of its components.
\end{remark}

\begin{theorem}[Injective transforms collapse topological ignorability to weak ignorability]
\label{thm:exact}
Assume positivity and suppose $\Psi$ is injective on the model class. Then conditional topological ignorability is equivalent to weak ignorability.

\end{theorem}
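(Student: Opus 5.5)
The argument has two directions, and the forward one is immediate. Assume weak ignorability, $Y(t)\perp T\mid Z$. Then for $P_Z$-a.e.\ $z$ with $\Prob(T=t\mid Z=z)>0$, the conditional law of $Y(t)$ given $(T=t,Z=z)$ equals its law given $Z=z$, that is, $P_{t,t,z}=P_{t,z}$. Positivity guarantees that $\Prob(T=t\mid Z=z)>0$ for both $t$ and a.e.\ $z$, so this equality holds $P_Z$-a.e. Applying the deterministic map $\Psi$ to both sides gives $\Psi(P_{t,t,z})=\Psi(P_{t,z})$, which is conditional topological ignorability. Injectivity is not used in this direction.

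For the converse, assume conditional topological ignorability. Since $\Psi$ is injective on the model class, and both $P_{t,t,z}$ and $P_{t,z}$ are assumed to lie in it, the equality $\Psi(P_{t,t,z})=\Psi(P_{t,z})$ forces $P_{t,t,z}=P_{t,z}$ for $P_Z$-a.e.\ $z$. One point needs care here. $\Psi=\Phi\circ(\mu\mapsto f_\mu)$ is defined through densities, which are determined only $\lambda$-a.e. I would read ``injective on the model class'' as injectivity at the level of laws: equal $\Psi$ values imply equal laws, and hence densities that agree $\lambda$-a.e.

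It remains to upgrade $P_{t,t,z}=P_{t,z}$ to conditional independence $Y(t)\perp T\mid Z$. Decompose by the law of total probability:
\[
P_{t,z}=e_t(z)\,P_{t,t,z}+(1-e_t(z))\,P_{t,1-t,z},
\qquad e_t(z):=\Prob(T=t\mid Z=z).
\]
Substituting $P_{t,z}=P_{t,t,z}$ gives $(1-e_t(z))(P_{t,t,z}-P_{t,1-t,z})=0$. By positivity $1-e_t(z)>0$, so $P_{t,1-t,z}=P_{t,t,z}=P_{t,z}$. The conditional law of $Y(t)$ given $(T,Z)$ therefore does not depend on $T$, which is exactly $Y(t)\perp T\mid Z$ for each $t$.

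The main obstacle is measure-theoretic rather than conceptual. Conditional laws are defined only up to $P_Z$-null sets, so the argument should use regular conditional distributions. These exist because $\mathcal Y\subseteq\R^d$ is Polish and $\mathcal Z$ is standard Borel. The pointwise identities above should be stated as holding off a single $P_Z$-null set, formed as a countable union of exceptional sets over $t\in\{0,1\}$. The reduction of conditional independence to the statement ``the kernel given $(T,Z)$ does not depend on $T$'' should then be checked on a countable $\pi$-system generating the Borel sets of $\mathcal Y$, for example rational rectangles. That check is routine but is where the a.e.\ qualifiers must be tracked carefully.
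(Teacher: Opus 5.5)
Your proof is correct and follows essentially the same route as the paper. The forward direction applies $\Psi$ to $P_{t,t,z}=P_{t,z}$; the converse uses injectivity, then the decomposition $P_{t,z}=e_t(z)P_{t,t,z}+(1-e_t(z))P_{t,1-t,z}$ and positivity to get $P_{t,1-t,z}=P_{t,z}$. The only differences are that you argue with laws rather than densities (so no common dominating measure is needed) and track the a.e.\ qualifiers and regular conditional distributions more explicitly than the paper does.
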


\begin{proof}
Weak ignorability gives $P_{t,t,z}=P_{t,z}$ almost surely, hence applying $\Psi$ yields topological ignorability.

Conversely, suppose conditional topological ignorability holds and $\Psi$ is injective. Then $P_{t,t,z}=P_{t,z}$ almost surely for $t=0,1$. In density notation,
\[
f_{t,z}=e(z)f_{t,1,z}+(1-e(z))f_{t,0,z}.
\]
For $t=1$, the equality $f_{1,1,z}=f_{1,z}$ implies
\[
(1-e(z))f_{1,0,z}=(1-e(z))f_{1,z},
\]
so positivity yields $f_{1,0,z}=f_{1,z}$. Thus $f_{1,0,z}=f_{1,1,z}=f_{1,z}$. The argument for $t=0$ is symmetric: $f_{0,0,z}=f_{0,z}$ implies $f_{0,1,z}=f_{0,z}$. Therefore the conditional law of $Y(t)$ does not depend on treatment once $Z$ is fixed, which is exactly weak ignorability.
\end{proof}

\begin{proposition}[Identification under conditional topological ignorability for noninjective transforms]
\label{prop:tau-ident}
Assume positivity and conditional topological ignorability relative to $\Psi$. Then for $P_Z$-a.e.\ $z$,
\[
\theta_{\Psi}(z)=
\norm{\Psi\!\bigl(\Law(Y\mid T=1,Z=z)\bigr)-\Psi\!\bigl(\Law(Y\mid T=0,Z=z)\bigr)}_E.
\]
Consequently the standardized effect
\[
\tau_{\Psi}=\int \theta_{\Psi}(z)\,dP_Z(z)
\]
is identified by the observational law of $(Y,T,Z)$.
\end{proposition}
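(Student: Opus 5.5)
The argument is a direct substitution, parallel to the proof of Proposition~\ref{prop:gformula}, except that the law-level equality $P_{t,z}=P_{t,t,z}$ is replaced by an equality of $\Psi$-values only. Fix $t\in\{0,1\}$. By consistency, $Y=Y(T)$ almost surely, so on the event $\{T=t\}$ we have $Y=Y(t)$ almost surely. Hence the regular conditional laws agree:
\[
P_{t,t,z}=\Law(Y(t)\mid T=t,Z=z)=\Law(Y\mid T=t,Z=z)
\]
for $P_{Z\mid T=t}$-a.e.\ $z$.

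Positivity is used to upgrade this ``a.e.\ with respect to $P_{Z\mid T=t}$'' statement to ``$P_Z$-a.e.''. Since $0<e(z)<1$ a.e., each measure $P_{Z\mid T=t}$ is mutually absolutely continuous with $P_Z$, because its density $e(z)/\Prob(T=1)$, respectively $(1-e(z))/\Prob(T=0)$, is strictly positive. It also makes $\Law(Y\mid T=t,Z=z)$ well defined, as an observational quantity, on a $P_Z$-full set. I expect this to be the only delicate step: choosing versions of the regular conditional distributions on a standard Borel space and checking that the null sets from consistency, from positivity, and from conditional topological ignorability can be merged into one $P_Z$-null set. A countable union of null sets handles this for $t=0,1$.

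On that common full-measure set, conditional topological ignorability gives
\[
\Psi(P_{t,z})=\Psi(P_{t,t,z})=\Psi\bigl(\Law(Y\mid T=t,Z=z)\bigr),\qquad t=0,1.
\]
Substituting these into the definition of $\theta_\Psi(z)$ yields the displayed formula. The right-hand side depends only on the observational law of $(Y,T,Z)$, through the conditional laws of $Y$ given $(T,Z)$, or their densities $f_{t,t,z}$ in the dominated case, to which $\Phi$ is applied.

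For $\tau_\Psi$, integrate the identified $\theta_\Psi$ against $P_Z$, which is itself a marginal of the observational law. Two routine points remain. First, we need measurability of $z\mapsto\theta_\Psi(z)$; this follows from measurability of $\Phi$, a measurable choice of conditional densities, and continuity of the norm. Second, $\theta_\Psi\ge 0$, so the integral is well defined in $[0,\infty]$. Changing $\theta_\Psi$ on a $P_Z$-null set does not change the integral, so $\tau_\Psi$ is a functional of the observational law.

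Finally, I will note explicitly that no claim is made about $P_{t,z}$ itself, nor about $\Theta_\Psi$. By Remark~\ref{rem:noncommute}, $\Psi$ does not commute with mixing over $Z$, so the stratumwise equalities of $\Psi$-values cannot be aggregated into $\Psi(P_t)$. This is exactly why the proposition identifies $\tau_\Psi$ rather than $\Theta_\Psi$.
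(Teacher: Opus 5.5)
Your proposal is correct and follows essentially the paper's proof: consistency gives $P_{t,t,z}=\Law(Y\mid T=t,Z=z)$, conditional topological ignorability gives $\Psi(P_{t,z})=\Psi(P_{t,t,z})$, and one substitutes into $\theta_\Psi$ and integrates against $P_Z$. Your extra care (using positivity to pass from $P_{Z\mid T=t}$-a.e.\ to $P_Z$-a.e.\ statements, merging null sets, and noting measurability) makes explicit steps the paper leaves implicit, but it does not change the argument.
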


\begin{proof}
Conditional topological ignorability gives
\[
\Psi(P_{t,z})=\Psi(P_{t,t,z})\qquad\text{a.s. for }t=0,1.
\]
By consistency,
\[
P_{t,t,z}=\Law(Y\mid T=t,Z=z).
\]
Substituting these identities into the definition of $\theta_{\Psi}(z)$ proves the displayed representation, and integrating over $P_Z$ identifies $\tau_{\Psi}$.
\end{proof}
This result shows that in the noninjective regime, causal identification operates at the level of topological summaries rather than at the level of the full interventional laws.
\begin{remark}[Why $\Theta_{\Psi}$ may fail to be identified under conditional  topological ignorability alone]
\label{rem:theta-not-id}
Proposition~\ref{prop:tau-ident} identifies the transformed \emph{conditional} laws, not the marginal interventional law itself. Unless weak ignorability holds or $\Psi$ commutes with mixing, conditional topological ignorability does \emph{not} identify
\[
\Theta_{\Psi}=\norm{\Psi(P_1)-\Psi(P_0)}_E.
\]
This is the fundamental reason to target $\tau_{\Psi}$ rather than $\Theta_{\Psi}$ in the noninjective regime.
\end{remark}

\begin{table}[htbp]
	\centering
	\small
	\renewcommand{\arraystretch}{1.15}
	\caption{Identification of topological causal targets under different assumptions. Injectivity of $\Psi$ determines whether topological equality implies equality of laws.}
	\label{tab:identification-summary}
	\begin{tabular}{>{\raggedright\arraybackslash}p{3.2cm} >{\raggedright\arraybackslash}p{3.1cm} >{\raggedright\arraybackslash}p{6.2cm}}
		\toprule
		Assumptions & Properties of $\Psi$ & Identified topological target \\
		\midrule
		
		Weak ignorability $+$ positivity 
		& Arbitrary measurable $\Psi$ 
		& The interventional laws $P_t$, the marginal contrast $\Theta_{\Psi}$, the conditional contrast $\theta_{\Psi}(z)$, and the standardized effect $\tau_{\Psi}$ \\
		
		\addlinespace[0.2em]
		
		Marginal topological ignorability $+$ positivity 
		& Injective $\Psi$ 
		& The marginal interventional laws $P_t$ and the marginal contrast $\Theta_{\Psi}$ \\
		
		\addlinespace[0.2em]
		
		Marginal topological ignorability $+$ positivity 
		& Noninjective $\Psi$ 
		& The marginal contrast $\Theta_{\Psi}$ only \\
		
		\addlinespace[0.2em]
		
		Conditional topological ignorability $+$ positivity 
		& Injective $\Psi$ 
		& The interventional laws $P_t$, the marginal contrast $\Theta_{\Psi}$, the conditional contrast $\theta_{\Psi}(z)$, and the standardized effect $\tau_{\Psi}$ \\
		
		\addlinespace[0.2em]
		
		Conditional topological ignorability $+$ positivity 
		& Noninjective, nonlinear $\Psi$ 
		& The conditional contrast $\theta_{\Psi}(z)$ and the standardized effect $\tau_{\Psi}$; the marginal contrast $\Theta_{\Psi}$ is not identified in general \\
		
		\bottomrule
	\end{tabular}
\end{table}

\begin{example}[Persistent homology and nonidentification of $\Theta_\Psi$]
	Let $\Psi_{\mathrm{PH}}(\mu)=\Phi_{\mathrm{PH}}(f_\mu)$ be a persistent homology summary derived from the superlevel-set filtration of the density $f_\mu$. Such summaries are noninjective: distinct probability laws may induce identical persistence diagrams.
	
	Consider a covariate $Z\in\{0,1\}$ with $P(Z=0)=P(Z=1)=1/2$. Suppose that for each $z$, the conditional laws under treatment and control have densities whose high-density regions consist of two well-separated modes, so that $
	\Psi_{\mathrm{PH}}(P_{t,z})$ indicates two persistent components
for all $t,z$.
Thus the conditional contrasts $\theta_{\Psi_{\mathrm{PH}}}(z)$ are identified from the observational law.
	
	Now consider two different marginal interventional laws obtained by mixing these conditional distributions:
	
	\begin{itemize}
		\item In Model A, the two modes corresponding to $z=0$ and $z=1$ are aligned in the same locations, so that the mixture still exhibits two well-separated modes.
		\item In Model B, the modes are shifted so that the high-density regions overlap across $z$, producing a single connected high-density region in the mixture.
	\end{itemize}
	
	In both models, the conditional summaries $\Psi_{\mathrm{PH}}(P_{t,z})$ are identical and hence observationally indistinguishable under conditional topological ignorability. However, the marginal summaries differ:
	\[
	\Psi_{\mathrm{PH}}(P_t^{\mathrm{A}})\neq \Psi_{\mathrm{PH}}(P_t^{\mathrm{B}}),
	\]
	since the topology of the mixture changes.
	Hence, the marginal contrast $\Theta_{\Psi_{\mathrm{PH}}}$ is not identified, whereas the conditional contrasts $\theta_{\Psi_{\mathrm{PH}}}(z)$ and their average $\tau_{\Psi_{\mathrm{PH}}}$ remain identifiable.
\end{example}
This example illustrates that topological summaries of mixtures depend on the geometric alignment of conditional distributions, which is not identified from observational data.

\section{SEM-Level Sufficient Conditions for Conditional Topological Ignorability}
\label{sec:sem-sufficient}

Conditional topological ignorability was defined above as the equality
\[
\Psi(P_{t,t,z})=\Psi(P_{t,z}),
\qquad t\in\{0,1\},
\]
inside almost every covariate stratum.  This section gives structural conditions under which this equality follows from an SEM.  The key mechanism is simple: conditioning on factual treatment may reweight latent regimes, so weak ignorability can fail, but the reweighted density may remain in the same topological chamber as the interventional density.  In that case the law changes, while the topological signature chosen by \(\Psi\) does not. 

The purpose of the SEM-based results is not to provide directly testable criteria for conditional topological ignorability, but rather to explain the structural mechanisms under which such topological stability can naturally arise under hidden confounding.  As in classical causal inference, the interventional laws \(P_{t,z}\) are generally unobserved, so conditional topological ignorability cannot typically be verified directly from observational data alone.  Instead, the SEM framework provides a mechanistic interpretation: latent confounding induces a reweighting of latent regimes, which deforms the observational density relative to the interventional density.  The results below characterize situations in which this deformation preserves selected topological features---such as connected components, loops, persistence barcodes, or Euler summaries—even though the full probability law changes and weak ignorability may fail. In this sense, they should be interpreted as structural explanations for why certain topological summaries may remain identifiable despite confounded probabilistic deformation.

\subsection{Latent reweighting induced by an SEM}
\label{subsec:sem-latent-reweighting}

Assume an SEM of the form
\[
T=h_T(Z,U,\varepsilon_T),
\qquad
Y(t)=g_t(Z,U,\varepsilon_t),
\qquad t\in\{0,1\},
\]
where \(U\) is a latent variable.  We assume that, conditional on \((Z,U)\), the outcome noise \(\varepsilon_t\) is not affected by the treatment-assignment noise.  Equivalently, the conditional law of \(Y(t)\) given \((Z,U,T)\) is the same as its conditional law given \((Z,U)\).  This is the usual exogenous-noise condition in an SEM; the treatment may depend on \(U\), but it does not change the latent-regime kernel for the potential outcome.

For fixed \((t,z,u)\), define
\[
\kappa_{t,z,u}:=\Law\{g_t(z,u,\varepsilon_t)\mid Z=z,U=u\},
\qquad
f_{t,z,u}:=\text{density of }\kappa_{t,z,u}.
\]
Also write
\[
\Pi_z:=\Law(U\mid Z=z),
\qquad
\Pi^{\mathrm{obs}}_{t,z}:=\Law(U\mid T=t,Z=z).
\]
Thus \(\Pi_z\) is the latent mixing law under the intervention, while \(\Pi^{\mathrm{obs}}_{t,z}\) is the latent mixing law among factual units in arm \(t\).

\begin{lemma}[SEM mixture representation]
	\label{lem:sem-mixture-representation}
	Under the SEM assumptions above, whenever the conditional laws are dominated by the common reference measure used earlier in the paper,
	\[
	f_{t,z}(y)=\int f_{t,z,u}(y)\,d\Pi_z(u),
	\qquad
	f_{t,t,z}(y)=\int f_{t,z,u}(y)\,d\Pi^{\mathrm{obs}}_{t,z}(u).
	\]
	Thus the difference between \(P_{t,z}\) and \(P_{t,t,z}\) is a difference in the mixing law over the same latent-regime kernels.
\end{lemma}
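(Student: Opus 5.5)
The plan is to prove both identities at the level of measures, using the law of total probability with respect to the latent variable \(U\), and then pass to densities via Fubini--Tonelli. Fix \(t\in\{0,1\}\) and a measurable \(B\subseteq\mathcal Y\). Since \(\mathcal Z\) is standard Borel and \(\mathcal Y\subseteq\R^d\), regular conditional distributions exist, so \(\kappa_{t,z,u}\), \(\Pi_z\) and \(\Pi^{\mathrm{obs}}_{t,z}\) are well-defined Markov kernels, up to null sets. I will assume \(U\) takes values in a standard Borel space; this is implicit in the SEM.

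\textbf{Interventional identity.} By the tower property,
\[
P_{t,z}(B)=\Prob(Y(t)\in B\mid Z=z)=\int \Prob(Y(t)\in B\mid Z=z,U=u)\,d\Pi_z(u).
\]
Because \(Y(t)=g_t(Z,U,\varepsilon_t)\), the inner conditional probability is \(\kappa_{t,z,u}(B)\) for \(\Pi_z\)-a.e.\ \(u\); this is exactly the definition of \(\kappa_{t,z,u}\).

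\textbf{Observational identity.} Condition additionally on \(T=t\):
\[
P_{t,t,z}(B)=\int \Prob(Y(t)\in B\mid T=t,Z=z,U=u)\,d\Pi^{\mathrm{obs}}_{t,z}(u).
\]
The stated exogenous-noise condition, that \(\Law(Y(t)\mid Z,U,T)=\Law(Y(t)\mid Z,U)\), replaces the integrand by \(\kappa_{t,z,u}(B)\).

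This replacement is the one genuinely structural step, and it is also where I expect the main care to be needed. The equality holds only almost everywhere with respect to the joint law of \((Z,U)\). To use it, I must check that the null set on which it fails is also null for \(\Pi^{\mathrm{obs}}_{t,z}\), for \(P_Z\)-a.e.\ \(z\). This follows because \(\Law(U\mid T=t,Z=z)\ll\Law(U\mid Z=z)\) whenever \(\Prob(T=t\mid Z=z)>0\), which positivity guarantees. Explicitly, \(d\Pi^{\mathrm{obs}}_{t,z}/d\Pi_z(u)=\Prob(T=t\mid Z=z,U=u)/\Prob(T=t\mid Z=z)\), and I will state this Radon--Nikodym derivative in the proof as a byproduct.

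\textbf{Passing to densities.} Write \(\kappa_{t,z,u}(B)=\int_B f_{t,z,u}\,d\lambda\), with a version of \(f_{t,z,u}\) chosen jointly measurable in \((u,y)\). Such a version exists by taking Radon--Nikodym derivatives of a measurable kernel, for instance via a disintegration or martingale construction on standard Borel spaces. Tonelli's theorem then applies since everything is nonnegative, giving
\[
P_{t,z}(B)=\int_B\Bigl(\int f_{t,z,u}(y)\,d\Pi_z(u)\Bigr)d\lambda(y),
\]
and likewise for \(P_{t,t,z}\). Hence \(y\mapsto\int f_{t,z,u}(y)\,d\Pi_z(u)\) is a density of \(P_{t,z}\) with respect to \(\lambda\). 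By uniqueness of Radon--Nikodym derivatives it agrees with \(f_{t,z}\) \(\lambda\)-a.e., and the same holds for \(f_{t,t,z}\). This yields both displayed formulas as \(\lambda\)-a.e.\ identities, which is the appropriate sense for densities. The final sentence of the lemma is then immediate, since the two integrands coincide and only the mixing measure differs.
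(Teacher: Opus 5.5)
Your proposal is correct and follows the paper's proof: the law of total probability over \(U\) for \(P_{t,z}\), the exogenous-noise condition to replace \(\Law(Y(t)\mid T=t,Z=z,U=u)\) by \(\kappa_{t,z,u}\) for \(P_{t,t,z}\), and then passage to densities. Your additional steps (using \(\Pi^{\mathrm{obs}}_{t,z}\ll\Pi_z\) to control null sets, choosing a jointly measurable version of \(f_{t,z,u}\), and applying Tonelli together with uniqueness of Radon--Nikodym derivatives) make rigorous what the paper compresses into ``taking Radon--Nikodym derivatives.''
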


\begin{proof}
	By the law of total probability conditional on \(Z=z\),
	\[
	P_{t,z}(B)=\int \kappa_{t,z,u}(B)\,d\Pi_z(u)
	\]
	for every measurable \(B\subseteq\mathcal Y\).  Similarly, conditional on \((T=t,Z=z)\), the exogenous-noise assumption gives
	\[
	\Law(Y(t)\mid T=t,Z=z,U=u)=\kappa_{t,z,u},
	\]
	so
	\[
	P_{t,t,z}(B)=\int \kappa_{t,z,u}(B)\,d\Pi^{\mathrm{obs}}_{t,z}(u).
	\]
	Taking Radon--Nikodym derivatives with respect to the common reference measure gives the two displayed density formulas.
\end{proof}

For any probability law \(\Pi\) on the latent space, define the corresponding SEM mixture density
\[
q^{t,z}_{\Pi}(y):=\int f_{t,z,u}(y)\,d\Pi(u).
\]
Then Lemma~\ref{lem:sem-mixture-representation} says
\[
q^{t,z}_{\Pi_z}=f_{t,z},
\qquad
q^{t,z}_{\Pi^{\mathrm{obs}}_{t,z}}=f_{t,t,z}.
\]
For a density level \(c>0\), write
\[
A^{t,z}_{c}(\Pi):=\{y\in\mathcal Y:q^{t,z}_{\Pi}(y)\ge c\}.
\]
The paper's topological summaries are applied to these density-induced
superlevel sets, or to a finite collection of them.

\begin{remark}[Choice of filtration]
	The choice of filtration is part of the definition of the topological summary
	\(\Psi\).  In this paper we focus on density superlevel filtrations because
	the SEM mixture representation gives explicit densities and because ordinary
	supports may be topologically uninformative in continuous models.  Other
	filtrations may be used, but they correspond to different choices of
	\(\Psi\), and hence to different causal topological estimands.
\end{remark}

\begin{remark}[How hidden confounding appears here]
	If \(T\) depends on \(U\), then generally \(\Pi^{\mathrm{obs}}_{t,z}\neq \Pi_z\), so weak ignorability fails unless the latent kernels happen to average to the same law.  Conditional topological ignorability only asks whether the reweighting \(\Pi_z\mapsto \Pi^{\mathrm{obs}}_{t,z}\) changes the selected topological signature of the density.  This is a geometric question about the mixture density, not a claim that the full law is unchanged.
\end{remark}

\subsection{A general chamber criterion}
\label{subsec:sem-chamber-criterion}

Let \(\Lambda=\{c_1>\cdots>c_m\}\) be the finite set of density levels used by the chosen signature.  A finite set is enough for the finite Betti curves, persistence summaries, Euler curves, and finite ECT signatures used in applications.  The same statements apply to a continuum of levels if all compatibility assumptions are imposed for every level.

\begin{definition}[Filtered topological chamber]
	\label{def:filtered-chamber}
	Fix \((t,z)\) and \(\Lambda=\{c_1>\cdots>c_m\}\).  
	Two latent mixing laws \(\Pi\) and \(\Pi'\) are in the same
	\(\Lambda\)-filtered topological chamber if there exist homotopy equivalences
	\[
	h_c:A^{t,z}_{c}(\Pi)\longrightarrow A^{t,z}_{c}(\Pi'),
	\qquad c\in\Lambda,
	\]
	such that, for every \(c_i\ge c_j\), the diagram
	\[
	\begin{CD}
		A^{t,z}_{c_i}(\Pi) @>>> A^{t,z}_{c_j}(\Pi) \\
		@V{h_{c_i}}VV             @VV{h_{c_j}}V \\
		A^{t,z}_{c_i}(\Pi') @>>> A^{t,z}_{c_j}(\Pi')
	\end{CD}
	\]
	commutes up to homotopy, where the horizontal maps are the canonical inclusions.
	Equivalently, the two restricted superlevel filtrations determine isomorphic persistence modules over the ordered set \(\Lambda\).
\end{definition}

\begin{figure}
	\centering
	\includegraphics[scale = 0.2]{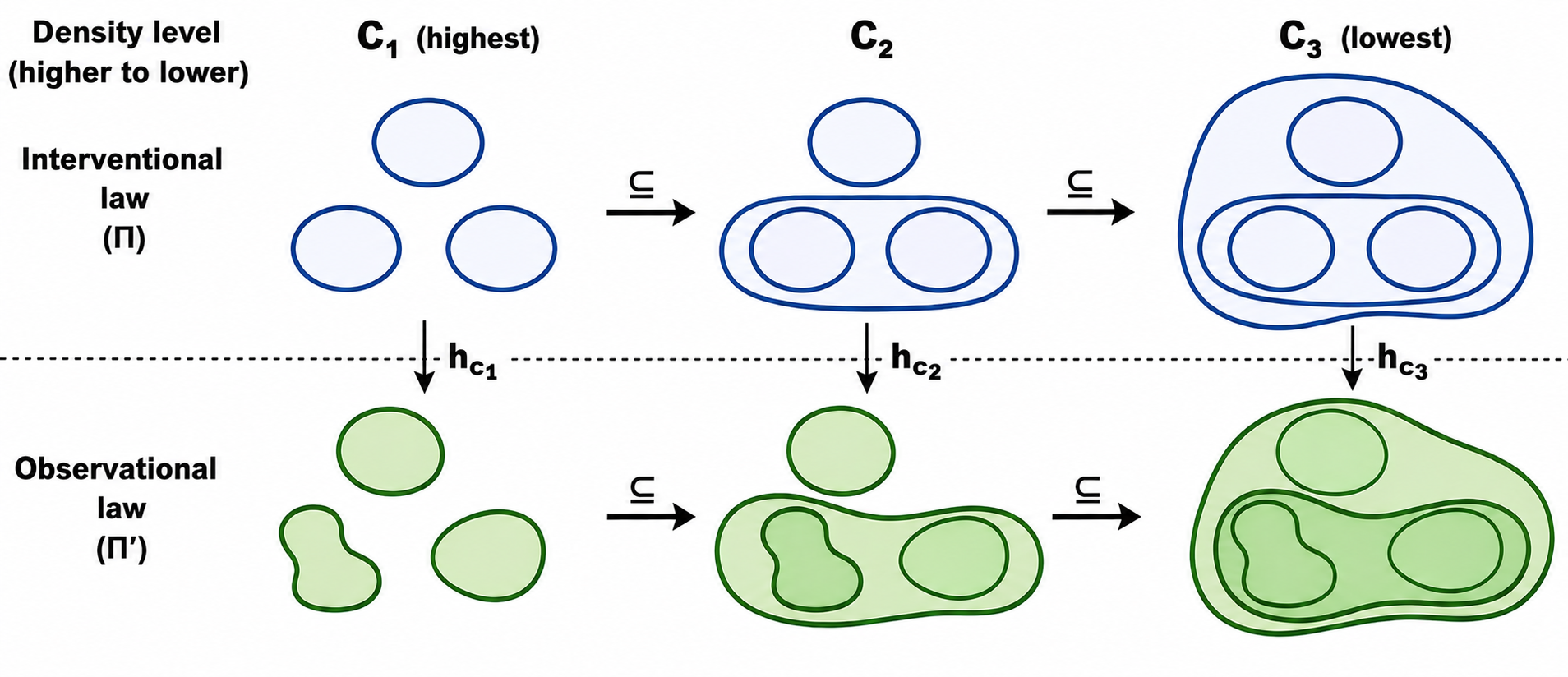}
\caption{Illustration of two filtrations lying in the same $\Lambda$-filtered topological chamber. The top row corresponds to the interventional mixing law $\Pi$, and the bottom row to the reweighted factual mixing law $\Pi'$. The maps $h_{c_i}$ denote homotopy equivalences compatible with the inclusion maps across levels, yielding isomorphic persistence modules over $\Lambda=\{C_1>C_2>C_3\}$.}\label{chamber-def}
\end{figure}
\begin{theorem}[Topology preservation under reweighting]
	\label{thm:sem-topo}
	Fix \((t,z)\).  Let \(\Psi(\mu)=\Phi(f_\mu)\) be any topological summary that depends only on the \(\Lambda\)-filtered homotopy type of the density superlevel filtration.  This includes Betti numbers at levels in \(\Lambda\), Betti curves on \(\Lambda\), Euler curves on \(\Lambda\), and finite persistent-homology summaries whose birth and death coordinates are indexed by \(\Lambda\).
	
	If \(\Pi_z\) and \(\Pi^{\mathrm{obs}}_{t,z}\) lie in the same \(\Lambda\)-filtered topological chamber, then
	\[
	\Psi(P_{t,z})=\Psi(P_{t,t,z}).
	\]
	Consequently, if this chamber condition holds for \(t=0,1\) and for \(P_Z\)-almost every \(z\), then conditional topological ignorability relative to \(\Psi\) holds.
\end{theorem}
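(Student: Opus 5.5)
The plan is to reduce the statement to Lemma~\ref{lem:sem-mixture-representation} and the functoriality of homology. First, by Lemma~\ref{lem:sem-mixture-representation} we have \(f_{t,z}=q^{t,z}_{\Pi_z}\) and \(f_{t,t,z}=q^{t,z}_{\Pi^{\mathrm{obs}}_{t,z}}\). Hence for every \(c\in\Lambda\),
\[
A_c(P_{t,z})=A^{t,z}_c(\Pi_z),
\qquad
A_c(P_{t,t,z})=A^{t,z}_c(\Pi^{\mathrm{obs}}_{t,z}).
\]
This holds up to the usual \(\lambda\)-null ambiguity in the choice of density version. I will fix the versions given by the mixture formulas, so that superlevel sets are well defined. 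Consequently the restricted filtrations \(\{A_c(P_{t,z})\}_{c\in\Lambda}\) and \(\{A_c(P_{t,t,z})\}_{c\in\Lambda}\) are exactly the two filtrations compared in Definition~\ref{def:filtered-chamber}.

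Second, I will apply a homology functor \(H_k(\cdot;\mathbb F)\) to the chamber data. Each homotopy equivalence \(h_c\) induces an isomorphism \(H_k(h_c)\). Commutativity of the squares up to homotopy gives strict commutativity after \(H_k\), because homotopic maps induce equal maps on homology. So \(\{H_k(h_c)\}_{c\in\Lambda}\) is an isomorphism of persistence modules over the finite poset \(\Lambda\), for every \(k\). It suffices to check the squares for consecutive levels \(c_i>c_{i+1}\), since the inclusion maps compose.

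Third, I will use the hypothesis that \(\Psi\) depends only on the \(\Lambda\)-filtered homotopy type. I read this as: \(\Phi\) factors through the isomorphism class of the persistence modules \((H_k(A_c))_{c\in\Lambda,\,k\ge0}\). Then \(\Psi(P_{t,z})=\Psi(P_{t,t,z})\) follows immediately. For the listed examples I will verify the factorization explicitly:
\begin{itemize}
\item Betti numbers and Betti curves are \(\dim H_k(A_c)\).
\item Euler curves are the alternating sums \(\sum_k(-1)^k\dim H_k(A_c)\).
\item Finite barcodes over \(\Lambda\) are determined by the ranks of the structure maps \(H_k(A_{c_i})\to H_k(A_{c_j})\), by the decomposition of finite-type modules over a totally ordered finite set. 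These ranks are invariant under module isomorphism.
\end{itemize}
The final clause then follows by applying this pointwise for \(t=0,1\) and on the \(P_Z\)-full-measure set of \(z\) where the chamber condition holds. That is exactly the definition of conditional topological ignorability.

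The main obstacle is a finiteness and well-definedness issue rather than the algebra. Betti numbers and Euler characteristics need \(H_k(A_c)\) to be finite-dimensional, and only finitely many of them should be nonzero, so that \(\chi\) and the barcode decomposition make sense. Superlevel sets of arbitrary densities need not satisfy this. I will handle it by stating, as part of the hypothesis that \(\Psi\) is defined on the model class, that the relevant superlevel sets have finite-type homology, for instance because they are tame or compact ENRs. Since homotopy equivalence preserves finite type, it suffices to assume this on one side. I will also remark on density versions: \(\Psi\) must be computed from the canonical mixture version of the density, or else be insensitive to null modifications.
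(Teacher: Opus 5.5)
Your proposal is correct and follows essentially the same route as the paper: Lemma~\ref{lem:sem-mixture-representation} identifies the two superlevel filtrations with the ones in the chamber definition, the chamber equivalences give isomorphic persistence modules, and the equality follows because $\Psi$ depends only on this data; you then apply this for $t=0,1$ and $P_Z$-a.e.\ $z$. Your explicit homology-functor step, the rank-invariant argument for barcodes, and the finiteness and density-version caveats make rigorous what the paper asserts in one line; note only that reading the hypothesis as factorization through the persistence modules is slightly narrower than the literal ``depends only on the filtered homotopy type,'' though it covers every summary listed in the statement.
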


\begin{proof}
	By Lemma~\ref{lem:sem-mixture-representation},
	\[
	f_{t,z}=q^{t,z}_{\Pi_z},
	\qquad
	f_{t,t,z}=q^{t,z}_{\Pi^{\mathrm{obs}}_{t,z}}.
	\]
	Hence the corresponding superlevel filtrations are
	\[
	\bigl(A^{t,z}_{c}(\Pi_z)\bigr)_{c\in\Lambda}
	\quad\text{and}\quad
	\bigl(A^{t,z}_{c}(\Pi^{\mathrm{obs}}_{t,z})\bigr)_{c\in\Lambda}.
	\]
	The chamber condition gives compatible homotopy equivalences between these two filtrations. Therefore they have the same filtered homotopy type and induce isomorphic persistence modules over any fixed coefficient field. Since \(\Psi\) depends only on this filtered homotopy type,
	\[
	\Psi(P_{t,z})=\Psi(P_{t,t,z}).
	\]
	If this holds for both treatment arms and for \(P_Z\)-almost every \(z\), this is precisely conditional topological ignorability relative to \(\Psi\).
\end{proof}
Theorem~\ref{thm:sem-topo} is the general principle: latent reweighting is harmless for \(\Psi\) when it does not cross a topological wall.  The following results give increasingly concrete mechanisms by which this chamber condition can hold.

\begin{corollary}[Finite-cell no-switching criterion]
	\label{cor:sem-no-active-cell-switching}
	Fix \((t,z)\).  Suppose \(\mathcal Y\) is represented by a finite cell complex \(\mathcal K\), and suppose each component density is represented by a cell-constant function.  Let \(a_\sigma(u)\) be the value of \(f_{t,z,u}\) on the relative interior of cell \(\sigma\).  For a latent mixing law \(\Pi\), set
	\[
	a_\sigma(\Pi):=\int a_\sigma(u)\,d\Pi(u).
	\]
	For \(c\in\Lambda\), define the cell superlevel set
	\[
	A^{t,z,\mathcal K}_{c}(\Pi)
	:=
	\overline{\bigcup\{\sigma^\circ:a_\sigma(\Pi)\ge c\}}.
	\]
	If, for every cell \(\sigma\in\mathcal K\) and every \(c\in\Lambda\),
	\[
	a_\sigma(\Pi_z)\ge c
	\quad\Longleftrightarrow\quad
	a_\sigma(\Pi^{\mathrm{obs}}_{t,z})\ge c,
	\]
	then the two finite filtrations are identical. Hence
\[
\Psi(P_{t,z})=\Psi(P_{t,t,z})
\]
for every \(\Psi\) computed from that finite cell filtration.
\end{corollary}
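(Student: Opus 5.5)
The plan is a direct set-level argument followed by an appeal to Theorem~\ref{thm:sem-topo}, or more simply, equality of the inputs to \(\Psi\). First I would make the cell-constant representation explicit. By Lemma~\ref{lem:sem-mixture-representation}, \(f_{t,z}=q^{t,z}_{\Pi_z}\) and \(f_{t,t,z}=q^{t,z}_{\Pi^{\mathrm{obs}}_{t,z}}\). Since each \(f_{t,z,u}\) is constant on \(\sigma^\circ\) with value \(a_\sigma(u)\), Fubini/Tonelli shows that the mixture \(q^{t,z}_\Pi\) is also cell-constant, with value \(\int a_\sigma(u)\,d\Pi(u)=a_\sigma(\Pi)\) on \(\sigma^\circ\). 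This holds for both mixing laws, so the finite cell filtration used to compute \(\Psi\) is exactly \(\bigl(A^{t,z,\mathcal K}_c(\Pi)\bigr)_{c\in\Lambda}\). Measurability of \(u\mapsto a_\sigma(u)\) and integrability (the integrals are at most the densities, which are finite a.e.) are the routine checks here.

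Next, fix \(c\in\Lambda\) and define the index set \(S_c(\Pi):=\{\sigma\in\mathcal K: a_\sigma(\Pi)\ge c\}\). The hypothesis says precisely that \(S_c(\Pi_z)=S_c(\Pi^{\mathrm{obs}}_{t,z})\) for every \(c\in\Lambda\). Since \(A^{t,z,\mathcal K}_c(\Pi)\) is by definition the closure of \(\bigcup_{\sigma\in S_c(\Pi)}\sigma^\circ\), equal index sets give equal subsets of \(\mathcal Y\). Hence \(A^{t,z,\mathcal K}_c(\Pi_z)=A^{t,z,\mathcal K}_c(\Pi^{\mathrm{obs}}_{t,z})\) for all \(c\in\Lambda\). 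The inclusion maps between levels \(c_i\ge c_j\) are then literally the same maps, because for \(c_i\ge c_j\) we have \(S_{c_i}\subseteq S_{c_j}\) and the closures are nested. So the two \(\Lambda\)-indexed filtrations coincide as diagrams of spaces, not merely up to homotopy.

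Finally, I would conclude in two ways. The direct way is that any \(\Psi\) computed from the finite cell filtration is a function of that diagram, so it takes equal values on \(P_{t,z}\) and \(P_{t,t,z}\). The alternative is to take \(h_c=\mathrm{id}\) in Definition~\ref{def:filtered-chamber}: the squares commute strictly, so the two mixing laws lie in the same chamber, and Theorem~\ref{thm:sem-topo} applies.

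The only delicate point I expect is bookkeeping about what "computed from that finite cell filtration" means. \(\Psi\) should be read as applied to the cell-constant representative, whose superlevel sets are the closed subcomplexes, rather than to arbitrary a.e.-modifications of the density on lower-dimensional cells, where pointwise superlevel sets could differ. I would handle this by stating explicitly that the density is identified with its cell-constant version. Under that version the closure of the union of open cells is a subcomplex, so homology is well defined and finitely computable.
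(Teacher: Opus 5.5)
Your proposal is correct and is essentially the paper's argument: the hypothesis makes the active index sets \(\{\sigma: a_\sigma(\Pi)\ge c\}\) agree for \(\Pi_z\) and \(\Pi^{\mathrm{obs}}_{t,z}\) at every \(c\in\Lambda\), so the closed cell superlevel sets, the inclusions between them, and hence every \(\Psi\) computed from the filtration coincide. Your additional checks make explicit what the paper's one-line proof leaves implicit: that the mixture density is cell-constant with value \(a_\sigma(\Pi)\) via Lemma~\ref{lem:sem-mixture-representation}, and that a convention must fix the density values on lower-dimensional cells.
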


\begin{proof}
	The displayed equivalence says that the active cells are exactly the same for \(\Pi_z\) and \(\Pi^{\mathrm{obs}}_{t,z}\) at every level \(c\in\Lambda\).  Therefore the cell superlevel sets, and hence the whole finite filtration, are identical.  All summaries computed from that filtration are identical.
\end{proof}

\begin{remark}[The chamber is a set of linear inequalities]
	\label{rem:linear-chamber-inequalities}
	If \(U\in\{1,\ldots,r\}\), write \(w_i=\Pi(U=i)\).  Then
	\[
	a_\sigma(\Pi)=\sum_{i=1}^r w_i a_\sigma(i).
	\]
	For each pair \((\sigma,c)\), the wall
	\[
	\sum_{i=1}^r w_i a_\sigma(i)=c
	\]
	is a hyperplane in the probability simplex.  The no-switching condition says that the interventional weights and the factual weights lie on the same side of every relevant hyperplane.  Thus hidden confounding may change the weights substantially, but it is harmless for the selected finite topological signature as long as the change remains inside the same chamber (see Figure \ref{chamber-wall}).
\end{remark}
\begin{figure}
\centering
\includegraphics[scale =0.2]{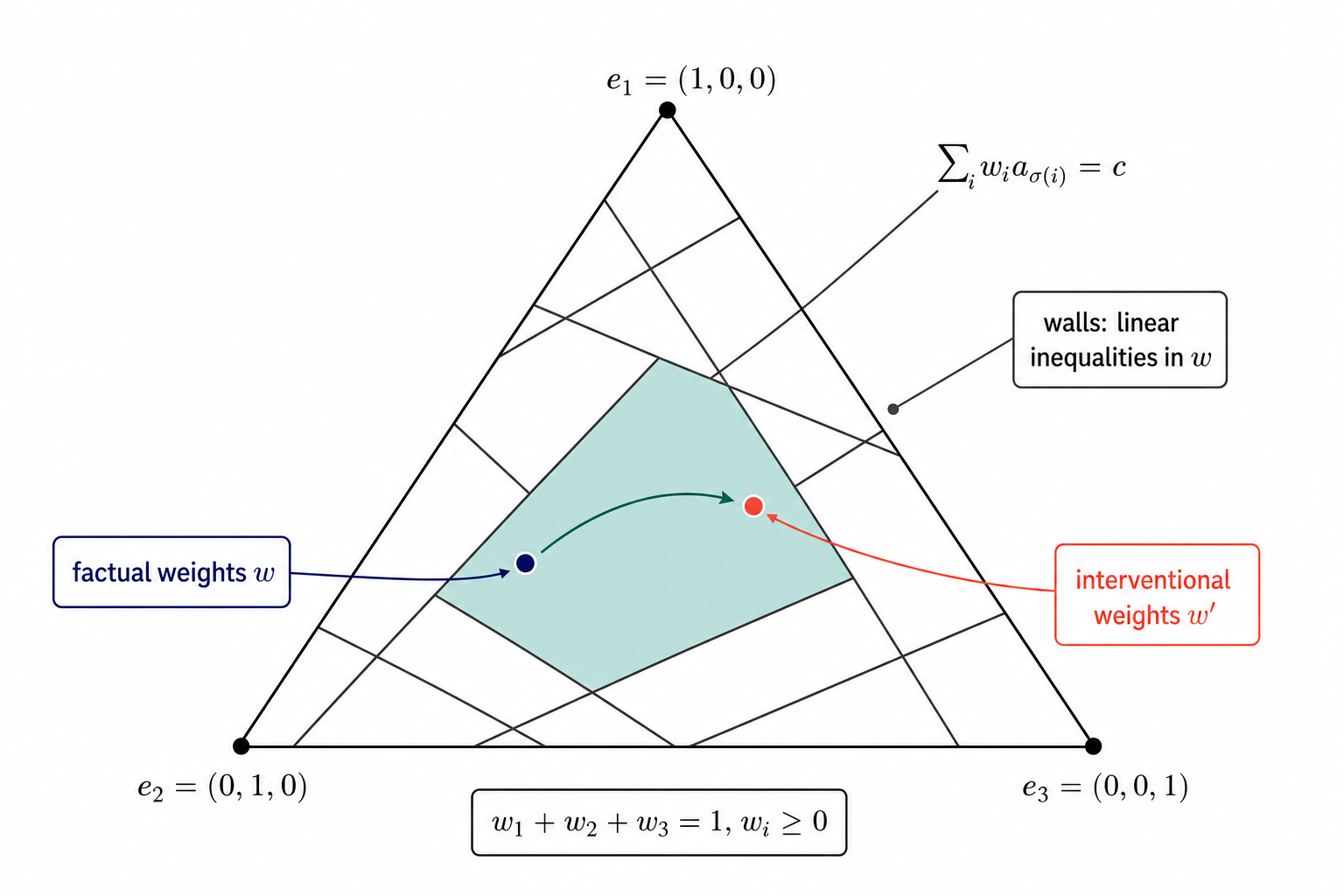}
\caption{
	Hyperplane walls in the probability simplex.  
	Changing between interventional and factual weights moves the latent mixture weights \(w_i=\Pi(U=i)\).  
	The finite-cell topology is preserved as long as the path does not cross any wall
	\(\sum_i w_i a_\sigma(i)=c\).
}\label{chamber-wall}
\end{figure}

\begin{theorem}[Smooth no-bifurcation criterion]
	\label{thm:sem-no-bifurcation}
	Fix \((t,z)\), and let
	\[
	\Pi_s=(1-s)\Pi_z+s\Pi^{\mathrm{obs}}_{t,z},
	\qquad 0\le s\le 1,
	\]
	with \(q_s=q^{t,z}_{\Pi_s}\).  Assume that the relevant level sets are compact and that \((y,s)\mapsto q_s(y)\) is \(C^2\).  Suppose that for every \(c\in\Lambda\), \(c\) is a regular value of \(q_s\) for every \(s\in[0,1]\); equivalently,
	\[
	q_s(y)=c \quad\Longrightarrow\quad \nabla_y q_s(y)\neq 0.
	\]
	Then, for each \(c\in\Lambda\), the superlevel sets
	\[
	A^{t,z}_{c}(\Pi_z)
	\quad\text{and}\quad
	A^{t,z}_{c}(\Pi^{\mathrm{obs}}_{t,z})
	\]
	are diffeomorphic.  If the induced diffeomorphisms are chosen compatibly over \(\Lambda\), then the two latent mixing laws lie in the same \(\Lambda\)-filtered topological chamber, and Theorem~\ref{thm:sem-topo} applies.
\end{theorem}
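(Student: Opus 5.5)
This is a Morse-theoretic isotopy argument: I will build a flow in \(y\) that carries the superlevel set of \(q_0\) at level \(c\) onto that of \(q_1\), using the regular-value hypothesis to control the moving boundary \(\{q_s=c\}\).

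\emph{Step 1: the region near the boundary.} Fix \(c\in\Lambda\) and set \(W_c:=\{(y,s): q_s(y)=c,\ s\in[0,1]\}\). By assumption this set is compact, and \(\nabla_y q_s\neq 0\) on it. By continuity of \(\nabla_y q_s\), there is an open neighbourhood \(N\) of \(W_c\) in \(\mathcal Y\times[0,1]\) on which \(\abs{\nabla_y q_s}\ge \delta>0\). Since the paper defines no \verb|\abs| macro, I will write this as \(\norm{\nabla_y q_s(y)}\ge\delta\).

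\emph{Step 2: the time-dependent vector field.} On \(N\) define
\[
X_s(y):=-\frac{\partial_s q_s(y)}{\norm{\nabla_y q_s(y)}^2}\,\nabla_y q_s(y),
\]
which is \(C^1\) because \(q\) is \(C^2\). This field is designed so that \(\frac{d}{ds}\,q_s(\gamma(s))=\partial_s q_s+\nabla_y q_s\cdot X_s=0\) along its integral curves. Choose a smooth bump \(\rho\) equal to \(1\) on a smaller neighbourhood of \(W_c\) and supported in \(N\), and replace \(X_s\) by \(\rho X_s\). The result is a compactly supported \(C^1\) field on \(\mathcal Y\) (after also cutting off outside a compact set containing all \(A_c(\Pi_s)\)). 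Its flow \(\phi_s\) therefore exists for all \(s\in[0,1]\) and consists of diffeomorphisms.

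\emph{Step 3: the flow carries superlevel sets onto superlevel sets.}
\begin{itemize}
\item \emph{The level hypersurface is transported.} Near \(W_c\) the field is unmodified, so \(q_s(\phi_s(y))\) is constant along trajectories starting on \(\{q_0=c\}\). Hence \(\phi_s\) maps \(\{q_0=c\}\) onto \(\{q_s=c\}\).
\item \emph{The two sides are preserved.} A trajectory cannot cross the level \(c\) anywhere else, because any crossing point would lie in \(W_c\), where \(q\) is conserved. Therefore \(\phi_s\) maps the interior \(\{q_0>c\}\) into \(\{q_s>c\}\), and the same holds for the complements.
\item \emph{Conclusion for one level.} Taking \(s=1\) gives a diffeomorphism \(\phi_1\) of \(\mathcal Y\) restricting to a diffeomorphism \(A^{t,z}_c(\Pi_z)\to A^{t,z}_c(\Pi^{\mathrm{obs}}_{t,z})\). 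The fact that \(\Pi_s\) is affine in \(s\), so \(q_s\) is affine in \(s\), is not even needed beyond the \(C^2\) hypothesis.
\end{itemize}

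\emph{Step 4: from one level to the filtered chamber.} The last clause of the theorem explicitly assumes that the level-wise diffeomorphisms are chosen compatibly over \(\Lambda\). Granting this, the diagrams in Definition~\ref{def:filtered-chamber} commute up to homotopy, the two mixing laws lie in the same chamber, and Theorem~\ref{thm:sem-topo} yields \(\Psi(P_{t,z})=\Psi(P_{t,t,z})\).

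One natural route to compatibility, which I would remark on but not rely on, needs the sets \(W_{c}\) for distinct \(c\) to be disjoint. They are, since they are different level sets of the same function \(q_s\). One can then glue the local fields from the different levels using disjointly supported bumps, producing a single flow that works for all \(c\in\Lambda\) simultaneously. That single \(\phi_1\) commutes strictly with the inclusions, which gives compatibility automatically.

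\emph{Main obstacle.} The delicate step is the boundary analysis in Step 3. One must justify that trajectories starting inside \(A_c\) stay inside, using compactness of the level sets and the fact that the cutoff region avoids \(W_c\). One must also handle the case where \(A_c\) is empty or the level set is unbounded, which is excluded by the compactness assumption. I would address this by working inside a compact set \(K\) containing all \(A_c(\Pi_s)\), and choosing \(\rho\) and the outer cutoff so that the field vanishes only away from \(\{q_s=c\}\) inside \(K\).
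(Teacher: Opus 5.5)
Your proof is correct and follows the same route as the paper: the regular-value hypothesis along $s\mapsto q_s$ gives an isotopy carrying $A^{t,z}_c(\Pi_z)$ onto $A^{t,z}_c(\Pi^{\mathrm{obs}}_{t,z})$. Where the paper cites Milnor and Guillemin--Pollack, you build that isotopy explicitly from the normalized field $X_s=-\partial_s q_s\,\nabla_y q_s/\norm{\nabla_y q_s}^2$; this yields a $C^1$ diffeomorphism, which is all the chamber argument needs. Your side remark deserves to be part of the proof rather than an aside. A single cut-off field equal to $X_s$ near the compact set $\bigcup_{c\in\Lambda}W_c$ preserves every level in $\Lambda$ at once, so one map $\phi_1$ gives restrictions $h_c=\phi_1|_{A_c(\Pi_z)}$ that commute with the inclusions on the nose. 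Hence the compatibility hypothesis that the paper simply assumes is in fact automatic.
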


\begin{proof}
	Fix \(c\in\Lambda\). Since \((y,s)\mapsto q_s(y)\) is \(C^2\) and \(c\) is a regular value of \(q_s\) for every \(s\in[0,1]\), the boundaries
	\[
	\{y:q_s(y)=c\}
	\]
	remain nonsingular hypersurfaces along the path. Together with compactness, the standard deformation principle in Morse theory implies that no topological bifurcation occurs at level \(c\) (see Milnor~\cite{MilnorMorse}). Equivalently, the superlevel sets
	\[
	A_c(s)=\{y:q_s(y)\ge c\}
	\]
	form a smooth isotopy as \(s\) varies; see, e.g., Guillemin and Pollack~\cite{GuilleminPollack}. Hence \(A_c(0)\) and \(A_c(1)\) are diffeomorphic. Applying this to every \(c\in\Lambda\), and assuming compatibility with inclusions across levels, gives equality of the filtered homotopy type. The conclusion follows from Theorem~\ref{thm:sem-topo}.
\end{proof}

\subsection{Component and skeleton criteria}
For \(H_0\)-based summaries, preservation means that connected components and their merge history are unchanged across the selected levels. Equivalently, the interventional and factual superlevel filtrations must have isomorphic merge trees over \(\Lambda\). In that case, the persistent \(H_0\) modules and finite \(H_0\) barcodes agree.

For loops and higher-dimensional holes, one can either work with nerves of good covers or use the stronger condition that both filtrations deformation retract onto a common skeleton. The formal nerve-based criterion is given in Appendix~\ref{app:nerve-criterion}. Here we state the common-skeleton condition because it is geometrically transparent and is illustrated in Figure~\ref{deformation}.

\begin{theorem}[Common deformation-retract skeleton]
	\label{thm:sem-common-skeleton}
	Fix \((t,z)\) and \(\Lambda\).  Suppose that, for every \(c\in\Lambda\), there exists a space \(M_c\) and deformation retractions
	\[
	r_c:A^{t,z}_{c}(\Pi_z)\to M_c,
	\qquad
	r_c^{\mathrm{obs}}:A^{t,z}_{c}(\Pi^{\mathrm{obs}}_{t,z})\to M_c,
	\]
	and suppose these deformation retractions commute with the inclusion maps across levels up to homotopy.   Then the factual and interventional filtrations have isomorphic persistent homology in every dimension over \(\Lambda\).  Consequently, all Betti curves, finite barcodes, and Euler characteristics over \(\Lambda\) are preserved.
\end{theorem}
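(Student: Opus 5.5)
The plan is to build compatible homotopy equivalences between the two filtrations out of the retractions, and then apply Theorem~\ref{thm:sem-topo}, or argue directly at the level of homology. Write $A_c:=A^{t,z}_{c}(\Pi_z)$ and $A'_c:=A^{t,z}_{c}(\Pi^{\mathrm{obs}}_{t,z})$. Let $\iota_c:M_c\hookrightarrow A_c$ and $\iota'_c:M_c\hookrightarrow A'_c$ be the inclusions of the common skeleton. Because $r_c$ is a deformation retraction, $r_c\circ\iota_c=\mathrm{id}_{M_c}$ and $\iota_c\circ r_c\simeq \mathrm{id}_{A_c}$, so $r_c$ and $\iota_c$ are mutually inverse homotopy equivalences; the same holds for $r'_c$ and $\iota'_c$. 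For each $c\in\Lambda$ I will define
\[
h_c:=\iota'_c\circ r_c : A_c\to A'_c ,
\]
which is a composite of homotopy equivalences, hence a homotopy equivalence, with homotopy inverse $\iota_c\circ r'_c$.

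Next I will check the compatibility square of Definition~\ref{def:filtered-chamber}. Take $c_i\ge c_j$, and let $j_{ij}:A_{c_i}\hookrightarrow A_{c_j}$ and $j'_{ij}$ be the canonical inclusions. I read the hypothesis that the retractions ``commute with the inclusions up to homotopy'' as the existence of maps $m_{ij}:M_{c_i}\to M_{c_j}$ satisfying
\[
r_{c_j}\circ j_{ij}\simeq m_{ij}\circ r_{c_i},\qquad r'_{c_j}\circ j'_{ij}\simeq m_{ij}\circ r'_{c_i}.
\]
If the skeleta are nested, $m_{ij}$ is simply the inclusion $M_{c_i}\subseteq M_{c_j}$. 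Composing with $\iota_{c_i}$ and using $r_{c_i}\iota_{c_i}=\mathrm{id}$ gives $r_{c_j}\, j_{ij}\,\iota_{c_i}\simeq m_{ij}$. The primed relation similarly gives $j'_{ij}\,\iota'_{c_i}\simeq \iota'_{c_j}\, r'_{c_j}\, j'_{ij}\,\iota'_{c_i}\simeq \iota'_{c_j}\, m_{ij}$, where the first step uses $\iota'_{c_j} r'_{c_j}\simeq\mathrm{id}$. The square then follows by chaining:
\[
h_{c_j}\circ j_{ij}=\iota'_{c_j}\, r_{c_j}\, j_{ij}\simeq \iota'_{c_j}\, m_{ij}\, r_{c_i}\simeq j'_{ij}\,\iota'_{c_i}\, r_{c_i}=j'_{ij}\circ h_{c_i}.
\]
So the two parameters lie in the same $\Lambda$-filtered chamber.

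Finally, I will apply homology $H_k(-;\mathbb F)$ in every degree $k$. Homotopic maps induce equal maps on homology, so the $H_k(h_c)$ are isomorphisms that commute with the structure maps. This gives isomorphic persistence modules over $\Lambda$, and the structure theorem for finite totally ordered index sets then gives equal barcodes. Betti curves are the pointwise dimensions, so they agree. Euler characteristics over $\Lambda$ are alternating sums of these dimensions; they agree provided the homology is finitely generated, which I will assume as implicit in the definition of $\chi$ (it holds, for example, for compact tame superlevel sets). Equality of every summary $\Psi$ then also follows from Theorem~\ref{thm:sem-topo}.

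The main obstacle is interpreting the compatibility hypothesis precisely: the statement does not name the maps between the $M_c$ at different levels. I will fix the reading above, with a common transition map $m_{ij}$ for both filtrations. Without a shared $m_{ij}$ the two filtrations could have isomorphic levels but different structure maps, and persistence would not be preserved, so I will make this reading explicit in the proof. I will also note that full deformation retractions are not needed; homotopy equivalences onto $M_c$ suffice.
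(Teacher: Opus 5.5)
Your proposal is correct and follows the same argument as the paper: both level sets are identified through the common skeleton $M_c$ (you via the composite $h_c=\iota'_c\circ r_c$, the paper via $H_k(A_c;\mathbb F)\cong H_k(M_c;\mathbb F)\cong H_k(A'_c;\mathbb F)$), the compatibility hypothesis makes these identifications commute with the inclusion-induced maps, and one then passes to homology and to the alternating-sum formula for $\chi$. What you add is an explicit reading of that hypothesis through transition maps $m_{ij}$ shared by both filtrations, together with a space-level check of the squares in Definition~\ref{def:filtered-chamber}; the paper's one-line appeal to ``compatibility across levels'' leaves both of these implicit.
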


\begin{proof}
	A deformation retract is a homotopy equivalence.  Hence both level sets have homology isomorphic to \(H_k(M_c;\mathbb F)\) at every level and in every dimension.  Compatibility across levels makes these isomorphisms commute with the persistence maps.  Therefore the persistence modules are isomorphic.  Euler characteristic is preserved because, for finite CW complexes, \(\chi=\sum_k(-1)^k\beta_k\).
\end{proof}

\begin{figure}
	\centering
	\includegraphics[scale =0.2]{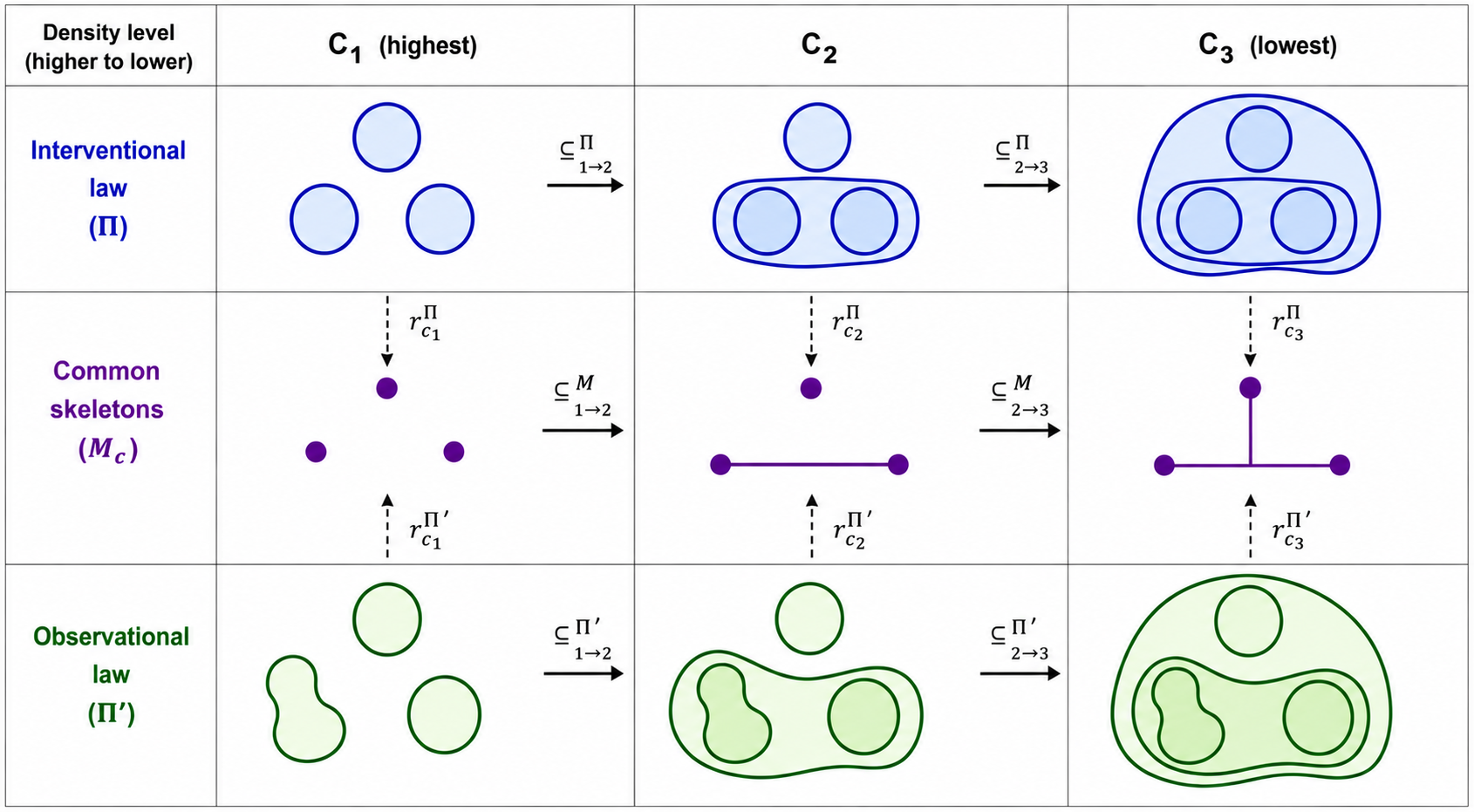}
\caption{
	Illustration of the common deformation-retract skeleton criterion.  
	At each density level \(c_i\), the interventional and factual superlevel sets deformation retract onto the same skeleton \(M_{c_i}\).  
	The inclusion maps between levels commute with the deformation retractions, yielding isomorphic persistent homology across the two filtrations.
}\label{deformation}
\end{figure}
\subsection{Euler characteristic and finite ECT summaries}
\label{subsec:sem-euler-ect}
Euler characteristic is coarser than persistent homology: it may remain unchanged even when individual Betti numbers change, because alternating contributions can cancel. The following criterion captures this weaker form of preservation.

\begin{theorem}[Euler cancellation criterion \cite{Hatcher2002}]
	\label{thm:sem-euler-cancellation}
	Let \(S\) and \(S'\) be finite CW complexes with decompositions
	\[
	S=C\cup B,
	\qquad
	S'=C\cup B',
	\]
	where all displayed sets and intersections are finite CW subcomplexes.  If
	\[
	\chi(B)-\chi(C\cap B)
	=
	\chi(B')-\chi(C\cap B'),
	\]
	then
	\[
	\chi(S)=\chi(S').
	\]
	In particular, if \(S'=C\subseteq S=C\cup B\) and \(\chi(B)=\chi(C\cap B)\), then \(\chi(S)=\chi(S')\).
\end{theorem}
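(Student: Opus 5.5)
The plan is to prove the statement with the inclusion--exclusion (Mayer--Vietoris) formula for the Euler characteristic of finite CW complexes. The key fact is that if \(X=A\cup B\) with \(A\), \(B\), and \(A\cap B\) finite CW subcomplexes, then
\[
\chi(X)=\chi(A)+\chi(B)-\chi(A\cap B).
\]
I would justify this by counting cells. Here \(\chi\) is the alternating sum of the numbers of \(k\)-cells, and this agrees with \(\sum_k(-1)^k\beta_k\) by the standard rank argument on the cellular chain complex over a field. A cell of \(X\) lies in \(A\), in \(B\), or in both, and it lies in both exactly when it is a cell of \(A\cap B\). Hence the cell counts in each dimension satisfy \(n_k(X)=n_k(A)+n_k(B)-n_k(A\cap B)\). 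Taking the alternating sum over \(k\) gives the formula. If one prefers a homological route, the same identity follows from the Mayer--Vietoris long exact sequence for the CW pair, since the alternating sum of ranks in a finite exact sequence of finite-dimensional vector spaces vanishes.

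Apply the formula to \(S=C\cup B\) and to \(S'=C\cup B'\). This gives
\[
\chi(S)=\chi(C)+\bigl[\chi(B)-\chi(C\cap B)\bigr],
\qquad
\chi(S')=\chi(C)+\bigl[\chi(B')-\chi(C\cap B')\bigr].
\]
The common term \(\chi(C)\) cancels. The hypothesis says the two bracketed increments are equal, so \(\chi(S)=\chi(S')\).

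For the special case, take \(B'=\emptyset\), so that \(S'=C\). Then \(\chi(B')=\chi(C\cap B')=0\) and the right-hand increment is zero. The assumption \(\chi(B)=\chi(C\cap B)\) makes the left increment zero as well. Alternatively, apply the formula once directly to \(S=C\cup B\).

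The only step needing care is the cell-counting step. It requires \(C\), \(B\), and \(C\cap B\) to be subcomplexes of one CW structure on \(S\), and similarly for \(S'\), so that cells are shared and not merely homeomorphic. This is exactly what the hypothesis ``all displayed sets and intersections are finite CW subcomplexes'' provides, so I would state explicitly that the CW structures are the ones induced from \(S\) and \(S'\). With that in place, the rest is routine bookkeeping; see Hatcher~\cite{Hatcher2002}.
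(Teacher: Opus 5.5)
Your proposal is correct and matches the paper's proof: additivity $\chi(C\cup B)=\chi(C)+\chi(B)-\chi(C\cap B)$ is applied to both decompositions, the common term $\chi(C)$ cancels, and $B'=\emptyset$ gives the special case. Your cell-counting (or Mayer--Vietoris) justification of additivity fills in the step the paper takes as standard. One small point: the CW structures on $C$ induced from $S$ and from $S'$ may differ, but $\chi(C)$ is the same for both because $\chi=\sum_k(-1)^k\beta_k$ is a homotopy invariant, which you already note.
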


\begin{proof}
	Euler characteristic is additive over finite CW unions:
	\[
	\chi(C\cup B)=\chi(C)+\chi(B)-\chi(C\cap B),
	\]
	and similarly for \(C\cup B'\).  The equality of the two increments gives \(\chi(S)=\chi(S')\).  The special case follows by taking \(B'=\emptyset\).
\end{proof}
For example, adding a contractible appendage along a contractible overlap changes the geometry but contributes zero Euler increment.
\begin{corollary}[Euler-curve conditional topological ignorability]
	\label{cor:sem-euler-curve-ignorability}
	Let
	\[
	\Psi_\chi(\mu)=\bigl(\chi(\{y:f_\mu(y)\ge c\})\bigr)_{c\in\Lambda}.
	\]
	If, for each \(t\in\{0,1\}\), for \(P_Z\)-almost every \(z\), and for every \(c\in\Lambda\), the pair
	\[
	A^{t,z}_{c}(\Pi_z),
	\qquad
	A^{t,z}_{c}(\Pi^{\mathrm{obs}}_{t,z})
	\]
	satisfies the Euler cancellation criterion, then conditional topological ignorability holds relative to \(\Psi_\chi\).
\end{corollary}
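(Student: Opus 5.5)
The plan is to reduce the corollary to the Euler cancellation criterion (Theorem~\ref{thm:sem-euler-cancellation}) applied level by level, and then to the definition of conditional topological ignorability. First I fix $t\in\{0,1\}$ and a $z$ in the full-measure set where the hypothesis holds. By Lemma~\ref{lem:sem-mixture-representation},
\[
f_{t,z}=q^{t,z}_{\Pi_z},
\qquad
f_{t,t,z}=q^{t,z}_{\Pi^{\mathrm{obs}}_{t,z}}.
\]
Hence, for every $c\in\Lambda$, the superlevel sets entering $\Psi_\chi(P_{t,z})$ and $\Psi_\chi(P_{t,t,z})$ are exactly
\[
A^{t,z}_c(\Pi_z)
\quad\text{and}\quad
A^{t,z}_c(\Pi^{\mathrm{obs}}_{t,z}).
\]
Here I use that $\Psi_\chi$ depends on $\mu$ only through $f_\mu$, and that densities agreeing a.e.\ are identified with the given representatives $q^{t,z}_\Pi$. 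I will state this convention explicitly, since superlevel sets are not a.e.-invariant.

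Next, for each $c\in\Lambda$, the hypothesis supplies finite CW decompositions
\[
A^{t,z}_c(\Pi_z)=C\cup B,
\qquad
A^{t,z}_c(\Pi^{\mathrm{obs}}_{t,z})=C\cup B',
\]
with equal Euler increments. Theorem~\ref{thm:sem-euler-cancellation} then gives
\[
\chi\bigl(A^{t,z}_c(\Pi_z)\bigr)=\chi\bigl(A^{t,z}_c(\Pi^{\mathrm{obs}}_{t,z})\bigr).
\]
Because $\Lambda$ is finite and this holds for every $c\in\Lambda$, the two vectors $\Psi_\chi(P_{t,z})$ and $\Psi_\chi(P_{t,t,z})$ agree coordinatewise. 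So $\Psi_\chi(P_{t,t,z})=\Psi_\chi(P_{t,z})$.

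Finally, this argument applies to both arms $t=0,1$ and to $P_Z$-a.e.\ $z$. Taking the intersection of the two full-measure sets, still of full measure, gives exactly the defining condition of conditional topological ignorability relative to $\Psi_\chi$.

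The only real obstacle is bookkeeping rather than mathematics: making precise that "satisfies the Euler cancellation criterion" includes the standing requirement that the superlevel sets and the pieces $C,B,B',C\cap B,C\cap B'$ are finite CW complexes, so that $\chi$ is well defined and additive. If these sets were not finite CW complexes, I would instead assume they are compact sets homotopy equivalent to finite CW complexes via the same decomposition, and invoke Mayer--Vietoris additivity of $\chi$ directly.
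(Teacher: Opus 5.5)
Your proposal is correct and matches the paper's argument: apply the Euler cancellation criterion at each level $c\in\Lambda$ to get equal Euler characteristics, so the Euler-curve vectors of $P_{t,z}$ and $P_{t,t,z}$ agree for both arms and $P_Z$-a.e.\ $z$. The extra steps you include are bookkeeping the paper leaves implicit rather than a different route: identifying the superlevel sets via Lemma~\ref{lem:sem-mixture-representation}, fixing density representatives, and intersecting the full-measure sets.
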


\begin{proof}
	The Euler cancellation criterion gives equality of Euler characteristic at every selected density level.  Therefore the Euler-curve vectors for \(P_{t,z}\) and \(P_{t,t,z}\) are equal for every relevant \((t,z)\).
\end{proof}

For location-sensitive summaries, the paper also considers Euler-transform-type signatures.  For a compact tame set \(S\subseteq\mathbb R^d\), define
\[
\operatorname{ECT}_S(v,a):=\chi(S\cap H_{v,a}),
\qquad
H_{v,a}:=\{y\in\mathbb R^d:y\cdot v\le a\},
\]
where \(v\in S^{d-1}\) and \(a\in\mathbb R\).  In computations, \((v,a)\) is usually restricted to a finite grid.

The same cancellation idea applies slice-by-slice to finite ECT signatures. Indeed, each finite ECT value is an Euler characteristic of a sliced set \(S\cap H_{v,a}\). Therefore, if the Euler cancellation condition holds for every selected slice and every selected density level, the finite ECT signature is preserved under reweighting.
\subsection{Linear Gaussian SEMs and topological bifurcations}
\label{subsec:linear-gaussian-bifurcations}
We now interpret conditional topological ignorability in a concrete and geometrically intuitive SEM family.  
In the linear Gaussian SEM below, the component means and covariance are fixed within each treatment arm, while hidden confounding changes only the relative weights of the components.  Thus treatment selection changes the prominence of existing component regions without moving their centers, and topological changes occur through geometric bifurcations of the corresponding superlevel sets.

Let \(U\in\{1,\ldots,r\}\), and define
\[
p_i(z):=\Prob(U=i\mid Z=z),
\qquad
\sum_{i=1}^r p_i(z)=1.
\]
Assume treatment assignment probabilities
\[
e_{t,i}(z):=\Prob(T=t\mid Z=z,U=i),
\qquad t\in\{0,1\}.
\]
Bayes' rule gives the factual latent weights
\[
p^{\mathrm{obs}}_{t,i}(z)
=
\Prob(U=i\mid T=t,Z=z)
=
\frac{e_{t,i}(z)p_i(z)}
{\sum_{j=1}^r e_{t,j}(z)p_j(z)}.
\]
Consider the linear Gaussian SEM
\[
Y(t)=\alpha_t+B_t z+\Gamma_t u_i+\varepsilon_t,
\qquad
\varepsilon_t\sim N(0,\Sigma_t),
\]
where \(\Sigma_t\) is positive definite and \(\varepsilon_t\) is conditionally independent of \((T,U)\) given \(Z\).  Then
\[
f_{t,z}(y)
=
\sum_{i=1}^r p_i(z)\,
\varphi_d(y;\mu_{t,z,i},\Sigma_t),
\]
and
\[
f_{t,t,z}(y)
=
\sum_{i=1}^r p^{\mathrm{obs}}_{t,i}(z)\,
\varphi_d(y;\mu_{t,z,i},\Sigma_t),
\]
where
\[
\mu_{t,z,i}
=
\alpha_t+B_t z+\Gamma_t u_i.
\]

The interventional and factual laws therefore differ only through the mixture weights.  Geometrically, treatment selection changes the prominence of the component regions without moving their centers.

For \(w=(w_1,\ldots,w_r)\in\Delta_r\), define
\[
q_w(y)
:=
\sum_{i=1}^r
w_i\,
\varphi_d(y;\mu_{t,z,i},\Sigma_t).
\]
The interventional and factual laws correspond to two points
\[
w=(p_i(z))_{i=1}^r,
\qquad
w'=(p^{\mathrm{obs}}_{t,i}(z))_{i=1}^r
\]
inside the probability simplex \(\Delta_r\).

Fix density levels \(\Lambda\).  Define the Gaussian bifurcation set
\[
\mathcal B_{t,z,\Lambda}
:=
\Bigl\{
w\in\Delta_r:
\exists c\in\Lambda,\ \exists y
\text{ such that }
q_w(y)=c
\text{ and }
\nabla_y q_w(y)=0
\Bigr\}.
\]
This is the set of weights at which a selected density level undergoes a topological transition.

In Gaussian mixtures, crossing the bifurcation set has a direct geometric interpretation (see Figure~\ref{mixure}):
\begin{itemize}
	\item a connected component may disappear when one component region falls below the chosen density threshold;
	\item two modal regions may merge through a saddle connection;
	\item a loop surrounding a low-density region may open or collapse.
\end{itemize}
\begin{figure}
	\centering
	\includegraphics[scale = 0.22]{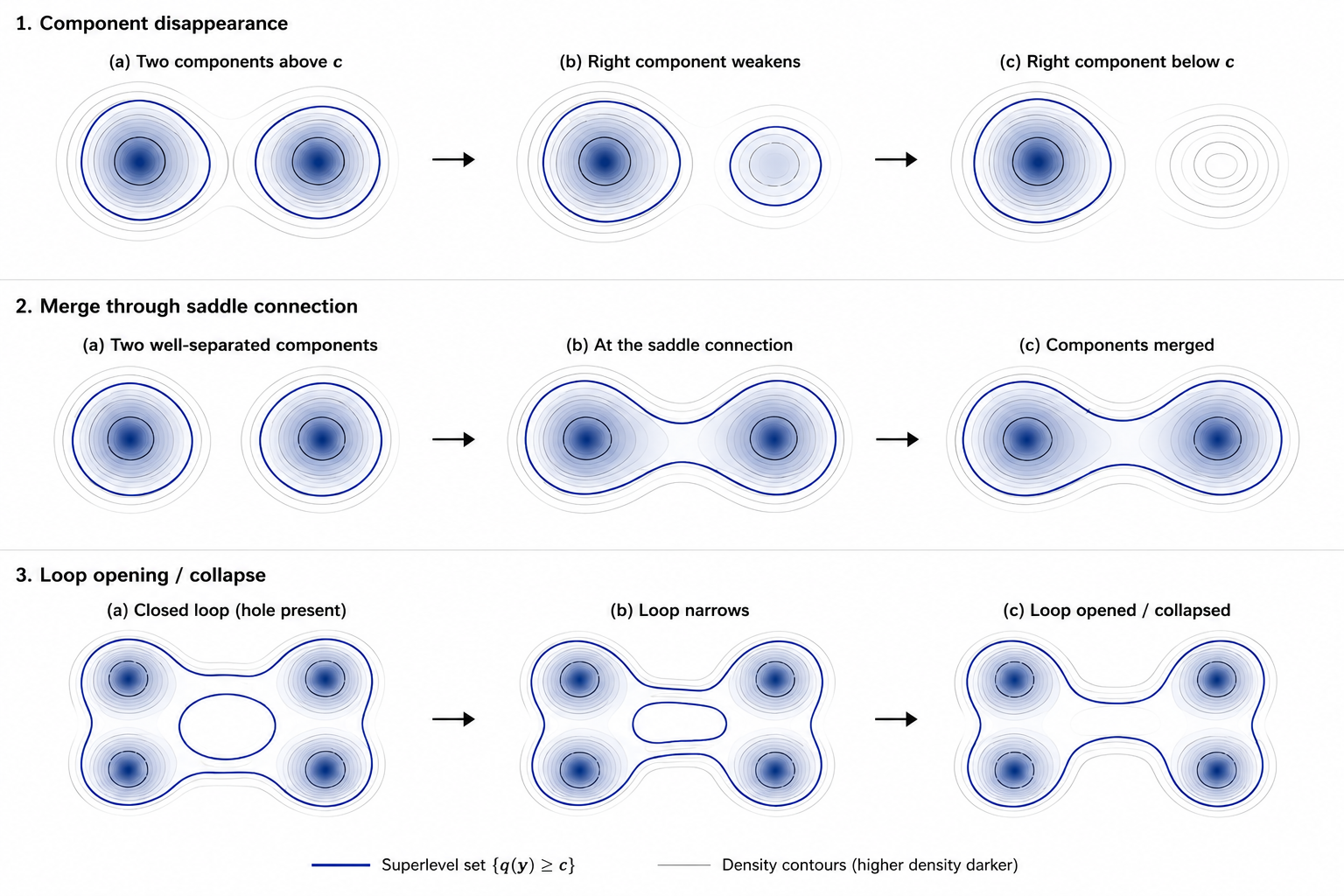}
\caption{
	Topological bifurcations of Gaussian-mixture superlevel sets: component disappearance, saddle merging, and loop collapse under mixture-weight reweighting.
}\label{mixure}
\end{figure}

Thus conditional topological ignorability becomes a geometric wall-crossing problem inside the probability simplex.  The posterior-weight trajectory
\[
(1-s)w+sw',
\qquad 0\le s\le1,
\]
may substantially change the full density and therefore violate weak ignorability, while still preserving the selected topological signature provided it remains inside a single chamber of
\[
\Delta_r\setminus\mathcal B_{t,z,\Lambda}.
\]

\begin{remark}
	The significance of the Gaussian SEM is therefore not merely smoothness.  
	Rather, Gaussian mixtures provide a concrete geometric realization of topological ignorability:
	hidden confounding induces trajectories in mixture-weight space, while topological summaries change only when those trajectories cross bifurcation walls corresponding to component births, component deaths, saddle merges, or loop collapses.
\end{remark}

Overall, the SEM analysis shows that conditional topological ignorability should be understood as a stability property of selected topological summaries under latent reweighting.  The criteria above correspond to different levels of structure: exact cell-wise no-switching, smooth absence of bifurcations, preservation of component merge trees, common deformation-retract skeletons, and Euler/ECT cancellation.  These criteria are not meant as directly observable tests from a single observational distribution, but as structural mechanisms explaining how topological causal summaries can remain identifiable even when the full interventional law is not.

\section{A worked hidden-confounding example}
\label{sec:worked-example}

We now give a fully explicit example showing that hidden confounding may change
the full probability law while preserving a selected superlevel-set topology.
The example is not intended to verify conditional topological ignorability from
observed data alone. Rather, it shows that the SEM mechanisms described above
can occur in an exactly computable model.

Let \(U\in\{0,1\}\) be an unobserved variable with
\[
\Prob(U=0)=\Prob(U=1)=\frac12.
\]
Treatment is assigned by
\[
T=\mathbf 1\{V\le p(U)\},
\qquad
p(0)=\frac15,\quad p(1)=\frac45,
\]
where \(V\sim\mathrm{Unif}(0,1)\) is independent of \(U\). Thus \(U\)
affects treatment assignment.

The potential-outcome maps, given explicitly in Appendix~\ref{app:worked-sem}, are chosen so that the interventional densities are
\[
f_0(y)
=
\frac{4}{15}\mathbf 1_{[0,1]}(y)
+
\frac{19}{60}\mathbf 1_{[1,2]}(y)
+
\frac{5}{12}\mathbf 1_{[2,3]}(y),
\]
and
\[
f_1(y)
=
\frac{7}{20}\mathbf 1_{[0,1]}(y)
+
\frac{13}{20}\mathbf 1_{[2,3]}(y).
\]
By Bayes' rule,
\[
\Prob(U=0\mid T=1)=\frac15,
\qquad
\Prob(U=1\mid T=1)=\frac45,
\]
and
\[
\Prob(U=0\mid T=0)=\frac45,
\qquad
\Prob(U=1\mid T=0)=\frac15.
\]
The resulting factual densities are
\[
h_0(y)
=
\frac{23}{75}\mathbf 1_{[0,1]}(y)
+
\frac{49}{150}\mathbf 1_{[1,2]}(y)
+
\frac{11}{30}\mathbf 1_{[2,3]}(y),
\]
and
\[
h_1(y)
=
\frac{13}{50}\mathbf 1_{[0,1]}(y)
+
\frac{37}{50}\mathbf 1_{[2,3]}(y).
\]
The appendix also verifies the branchwise density calculations.

Let
\[
\mu_a:=\Law(Y(a)),
\qquad
\nu_a:=\Law(Y\mid T=a),
\]
so that \(f_a\) is the density of the interventional law \(\mu_a\), while
\(h_a\) is the density of the factual law \(\nu_a\).

Fix the density level
\[
\Lambda=\{c\},
\qquad
c=0.20,
\]
and define, for any law \(\mu\) with density \(f_\mu\),
\[
\Psi_\Lambda(\mu)
=
\beta_0(\{y:f_\mu(y)\ge c\}).
\]
This is a one-level superlevel-set topological summary of the type used in the
SEM criteria above. For the displayed piecewise-constant densities, superlevel
sets are understood using the natural closed representatives; equivalently, one
may take closures of the corresponding superlevel regions, so endpoint
conventions do not affect the connected-component counts.

\begin{table}[htbp]
	\centering
	\small
	\renewcommand{\arraystretch}{1.15}
	\caption{Superlevel sets at \(c=0.20\). Hidden confounding changes the density values but not the selected superlevel-set topology.}
	\label{tab:superlevel-example}
	\begin{tabular}{cccc}
		\toprule
		Quantity & Density & Superlevel set at \(c=0.20\) & \(\beta_0\)\\
		\midrule
		Interventional control \(f_0\)
		& \(\frac{4}{15}\mathbf 1_{[0,1]}+\frac{19}{60}\mathbf 1_{[1,2]}+\frac{5}{12}\mathbf 1_{[2,3]}\)
		& \([0,3]\)
		& \(1\) \\
		Factual control \(h_0\)
		& \(\frac{23}{75}\mathbf 1_{[0,1]}+\frac{49}{150}\mathbf 1_{[1,2]}+\frac{11}{30}\mathbf 1_{[2,3]}\)
		& \([0,3]\)
		& \(1\) \\
		Interventional treatment \(f_1\)
		& \(\frac{7}{20}\mathbf 1_{[0,1]}+\frac{13}{20}\mathbf 1_{[2,3]}\)
		& \([0,1]\cup[2,3]\)
		& \(2\) \\
		Factual treatment \(h_1\)
		& \(\frac{13}{50}\mathbf 1_{[0,1]}+\frac{37}{50}\mathbf 1_{[2,3]}\)
		& \([0,1]\cup[2,3]\)
		& \(2\) \\
		\bottomrule
	\end{tabular}
\end{table}
\begin{proposition}[Hidden confounding with preserved superlevel topology]
	\label{prop:superlevel-example}
In the SEM above, weak ignorability fails, but the selected superlevel-set
topological summary is preserved under factual reweighting:
	\[
	\Psi_\Lambda(\mu_0)=\Psi_\Lambda(\nu_0)=1,
	\qquad
	\Psi_\Lambda(\mu_1)=\Psi_\Lambda(\nu_1)=2.
	\]
	Thus the selected topological treatment contrast equals
	\[
	\Psi_\Lambda(\mu_1)-\Psi_\Lambda(\mu_0)=1,
	\]
	and the same value is obtained from the factual laws:
	\[
	\Psi_\Lambda(\nu_1)-\Psi_\Lambda(\nu_0)=1.
	\]
	Nevertheless, weak ignorability fails because
	\[
	\mu_1\neq \nu_1.
	\]
\end{proposition}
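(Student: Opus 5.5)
The argument is a direct verification from the four densities already displayed, taking the branchwise density formulas of Appendix~\ref{app:worked-sem} as given. First I will confirm the factual densities. By Bayes' rule with $\Prob(T=1\mid U=0)=1/5$ and $\Prob(T=1\mid U=1)=4/5$, we get $\Prob(T=1)=1/2$. The stated posterior weights then follow: $(1/5,4/5)$ given $T=1$ and $(4/5,1/5)$ given $T=0$.

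By Lemma~\ref{lem:sem-mixture-representation}, with $Z$ trivial, $f_a$ is the mixture of the branch densities $f_{a,u}$ with weights $(1/2,1/2)$, and $h_a$ is the mixture of the same kernels with the posterior weights. Consistency gives $\nu_a=\Law(Y(a)\mid T=a)$. Hence $h_a$ is determined by the appendix branch densities, and I will only check that the cell values agree with those displayed. For example, $\tfrac13+\tfrac13+\tfrac13$-type sanity checks confirm that each displayed density integrates to $1$: $4/15+19/60+5/12=1$ and $7/20+13/20=1$.

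Next I will compute the superlevel sets at $c=0.20$ cell by cell, in the spirit of Corollary~\ref{cor:sem-no-active-cell-switching}.

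For $f_0$, the values are $4/15\approx0.267$, $19/60\approx0.317$ and $5/12\approx0.417$, all at least $0.2$. For $h_0$, the values are $23/75\approx0.307$, $49/150\approx0.327$ and $11/30\approx0.367$, also all at least $0.2$. So both closed superlevel sets are $[0,3]$, which is connected, giving $\beta_0=1$.

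For $f_1$, the values are $7/20$ and $13/20$ on $[0,1]$ and $[2,3]$, with $0<0.2$ on $(1,2)$. For $h_1$, the values are $13/50=0.26$ and $37/50$, again with $0$ on $(1,2)$. So both sets are $[0,1]\cup[2,3]$, which has two components, giving $\beta_0=2$.

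This yields the four equalities stated in the proposition, and the two differences both equal $1$. The active-cell sets coincide in each arm, so this is exactly the no-switching situation of Corollary~\ref{cor:sem-no-active-cell-switching}. The main thing to be careful about here is the endpoint convention: taking closures ensures that $[0,1]$ and $[2,3]$ are disjoint for $f_1$ and $h_1$, while the three cells glue into $[0,3]$ for the control arm.

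Finally, $\mu_1\neq\nu_1$ is immediate because $f_1$ and $h_1$ differ on a set of positive Lebesgue measure: $7/20\neq13/50$ on $[0,1]$. Since $Z$ is trivial here, weak ignorability would require $\Law(Y(1)\mid T=1)=\Law(Y(1))$, that is $\nu_1=\mu_1$, so weak ignorability fails. The step that genuinely needs care is confirming that the displayed $h_a$ and $f_a$ are consistent with the appendix SEM maps. I would check this by recomputing each $h_a$ as the posterior-weighted combination of the branch densities, since the cell-by-cell threshold check itself is routine.
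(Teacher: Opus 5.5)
Your proposal is correct and follows essentially the same route as the paper. You read off the closed superlevel sets at $c=0.20$ from the displayed densities, namely $[0,3]$ for $f_0,h_0$ and $[0,1]\cup[2,3]$ for $f_1,h_1$, count components, and show $\mu_1\neq\nu_1$ using the same witness as the paper, the mass on $[0,1]$, $7/20$ versus $13/50$. Your recomputation of $f_a,h_a$ from the branch densities (done in Appendix~\ref{app:worked-sem}) and the link to Corollary~\ref{cor:sem-no-active-cell-switching} are consistent but not needed.
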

\begin{proof}
	Weak ignorability fails because the law of \(Y(1)\) differs from the factual
	treated law. For example,
	\[
	\Prob(Y(1)\in[0,1])=\int_0^1 f_1(y)\,dy=\frac{7}{20},
	\]
	whereas
	\[
	\Prob(Y\in[0,1]\mid T=1)=\int_0^1 h_1(y)\,dy=\frac{13}{50}.
	\]
	Hence \(\mu_1\neq\nu_1\).
	
	For the topological statement, Table~\ref{tab:superlevel-example} gives
	\[
	\{y:f_0(y)\ge0.20\}
	=
	\{y:h_0(y)\ge0.20\}
	=
	[0,3],
	\]
	and
	\[
	\{y:f_1(y)\ge0.20\}
	=
	\{y:h_1(y)\ge0.20\}
	=
	[0,1]\cup[2,3].
	\]
	Therefore the corresponding numbers of connected components are \(1\) in the
	control arm and \(2\) in the treatment arm for both the interventional and
	factual laws. This proves the claim.
\end{proof}
This example illustrates the central distinction of the paper. Hidden
confounding changes the full density, so classical weak ignorability fails.
However, for the selected density level \(c=0.20\), the superlevel-set
connectivity is unchanged. The identified object is therefore not the full law
or the mean effect, but a coarser topological contrast of high-density regions.

 Appendix~\ref{app:same-observational-different-ate} further gives a second SEM with the same observational law but a different ATE, emphasizing that the mean effect is not identified on this confounded model class.

\FloatBarrier

\section{Estimation under conditional topological ignorability}

The identification results show that the estimand depends on the transform. Under weak ignorability, the full interventional conditional laws are identified from observed conditional laws, so one can target transformed marginal laws and hence \(\Theta_{\Psi}\).  Under conditional topological ignorability with a noninjective \(\Psi\), the naturally identified target is instead the standardized contrast
\[
\tau_{\Psi}=\E\!\left[\norm{\Psi(P_{1,Z})-\Psi(P_{0,Z})}_E\right].
\]
This section develops estimation for that regime.

\subsection{A general conditional plug-in estimator}

Write
\[
\nu_{t,z}:=\Law(Y\mid T=t,Z=z).
\]
Under Proposition~\ref{prop:tau-ident},
\[
\Psi(P_{t,z})=\Psi(\nu_{t,z})\qquad\text{a.s.}
\]
Thus one does not need to recover the full interventional law $P_t$ in order to estimate $\tau_{\Psi}$. It is enough to estimate the arm-specific \emph{factual} conditional laws and then apply the chosen topological transform.

Let $\widehat \nu_{t,z}$ be an estimator of $\nu_{t,z}$ obtained from any arm-specific nuisance procedure: conditional density estimation, localized empirical measures, finite mixtures, nearest-neighbor neighborhoods, or another conditional law estimator appropriate to the outcome type. In practice, \(\widehat\nu_{t,z}\) may be obtained with sample splitting or cross-fitting to reduce overfitting bias when flexible nuisance estimators are used. Define the estimated topological signature
\[
\widehat S_t(z):=\Psi(\widehat \nu_{t,z}),
\qquad
S_t(z):=\Psi(\nu_{t,z}),
\]
the estimated stratum-specific contrast
\[
\widehat\theta_{\Psi}(z):=\norm{\widehat S_1(z)-\widehat S_0(z)}_E,
\]
and the empirical standardized estimator
\[
\widehat\tau_{\Psi}:=\frac1n\sum_{i=1}^n \widehat\theta_{\Psi}(Z_i).
\]
The following result gives a simple sufficient condition for consistency. It is intentionally stated at the level of the estimated signatures, rather than at the level of density estimators.
\begin{proposition}[Consistency of the standardized plug-in estimator]
\label{prop:plugin-tau}
Assume conditional topological ignorability and positivity. Suppose
\[
\sup_{z\in\mathcal Z}\norm{\widehat S_t(z)-S_t(z)}_E\xrightarrow{P}0,\qquad t=0,1,
\]
and $\E[\theta_{\Psi}(Z)]<\infty$. Then
\[
\widehat\tau_{\Psi}\xrightarrow{P}\tau_{\Psi}.
\]
\end{proposition}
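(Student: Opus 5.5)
The plan is to split the error into an estimation term and a sampling term:
\[
\widehat\tau_{\Psi}-\tau_{\Psi}
=\frac1n\sum_{i=1}^n\bigl(\widehat\theta_{\Psi}(Z_i)-\theta_{\Psi}(Z_i)\bigr)
+\Bigl(\frac1n\sum_{i=1}^n\theta_{\Psi}(Z_i)-\E[\theta_{\Psi}(Z)]\Bigr),
\]
and show that each term tends to zero in probability. Before doing so, I will use Proposition~\ref{prop:tau-ident}, which relies on conditional topological ignorability and positivity. It gives $\theta_{\Psi}(z)=\norm{S_1(z)-S_0(z)}_E$ for $P_Z$-a.e.\ $z$, because consistency yields $\Psi(P_{t,z})=\Psi(\nu_{t,z})=S_t(z)$. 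Hence $\theta_{\Psi}(Z_i)$ coincides almost surely with the factual-signature contrast, and this is what makes the decomposition legitimate.

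For the first term, I will apply the reverse triangle inequality in $E$:
\[
\bigl|\widehat\theta_{\Psi}(z)-\theta_{\Psi}(z)\bigr|
\le\norm{(\widehat S_1(z)-S_1(z))-(\widehat S_0(z)-S_0(z))}_E
\le\sum_{t=0,1}\norm{\widehat S_t(z)-S_t(z)}_E.
\]
Averaging over $Z_1,\dots,Z_n$ then bounds the first term by $\sum_t\sup_z\norm{\widehat S_t(z)-S_t(z)}_E$, which tends to zero in probability by hypothesis. The sup-norm assumption is what lets me avoid any independence between $\widehat S_t$ and the evaluation points $Z_i$. The bound holds uniformly, so even without cross-fitting it does not matter that the nuisance estimators were built from the same sample.

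For the second term, I assume the $Z_i$ are i.i.d.\ draws from $P_Z$. The quantity $\theta_{\Psi}(Z)$ is a nonnegative measurable function with $\E[\theta_{\Psi}(Z)]<\infty$, so the weak law of large numbers gives convergence to $\tau_{\Psi}$. Measurability of $z\mapsto\theta_{\Psi}(z)$ follows from measurability of $\Phi$ together with a measurable choice of the conditional laws $\nu_{t,z}$; this uses the standard Borel assumption on $\mathcal Z$. Combining the two terms via Slutsky's lemma (a sum of $o_P(1)$ terms) finishes the argument.

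The main obstacle is not analytic; it lies in the measure-theoretic bookkeeping. There are two issues: the null set on which ignorability might fail must be shown not to affect the empirical average (it has $P_Z$-probability zero, so almost surely no $Z_i$ falls in it), and $\theta_{\Psi}$ must be measurable so that the law of large numbers applies. I will handle both by fixing regular versions of the conditional laws at the outset.
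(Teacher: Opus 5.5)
Your proposal is correct and follows the paper's proof essentially step for step. Like the paper, you bound $\bigl|\widehat\tau_{\Psi}-\frac1n\sum_i\theta_{\Psi}(Z_i)\bigr|$ by $\sum_{t}\sup_z\norm{\widehat S_t(z)-S_t(z)}_E$ via the reverse triangle inequality, and then apply the law of large numbers to $\frac1n\sum_i\theta_{\Psi}(Z_i)$; your explicit appeal to Proposition~\ref{prop:tau-ident} to get $\theta_{\Psi}(z)=\norm{S_1(z)-S_0(z)}_E$ for $P_Z$-a.e.\ $z$, together with the null-set and measurability bookkeeping, makes precise what the paper leaves implicit (one small correction: regular versions of $\nu_{t,z}$ exist because $\mathcal Y\subseteq\R^d$ is standard Borel, not because $\mathcal Z$ is).
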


\begin{proof}
By the reverse triangle inequality,
\[
\left|\widehat\theta_{\Psi}(z)-\theta_{\Psi}(z)\right|
\le \norm{\widehat S_1(z)-S_1(z)}_E+\norm{\widehat S_0(z)-S_0(z)}_E.
\]
Averaging over the sample gives
\[
\left|\widehat\tau_{\Psi}-\frac1n\sum_{i=1}^n \theta_{\Psi}(Z_i)\right|
\le \sup_z \norm{\widehat S_1(z)-S_1(z)}_E+\sup_z \norm{\widehat S_0(z)-S_0(z)}_E,
\]
which converges to zero in probability by assumption. The law of large numbers yields
\[
\frac1n\sum_{i=1}^n \theta_{\Psi}(Z_i)\xrightarrow{P}\E[\theta_{\Psi}(Z)]=\tau_{\Psi}.
\]
Combining the two displays proves the result.
\end{proof}

Proposition~\ref{prop:plugin-tau} is the main estimation result for the noninjective regime. It shows that conditional topological ignorability leads to a direct estimation principle: estimate the observed arm-specific conditional signatures and average their contrast over the covariate distribution. No appeal to weak ignorability is needed once conditional topological ignorability is taken as the identifying assumption.

\subsection{A practical balancing-bin estimator}

In many data sets \(Z\) is high-dimensional, so exact conditioning on \(Z=z\)
is not practical. A pragmatic implementation is to replace exact conditioning
by grouping units according to a low-dimensional balancing summary
\[
b(Z),
\]
such as a propensity score, a prognostic score, or a learned balancing
representation. This step should be viewed as an additional approximation:
the causal estimand is still defined through conditioning on \(Z\), while the
bin estimator approximates this conditioning by comparing treated and control
factual laws within bins of \(b(Z)\). The approximation is credible when the
selected conditional topological signatures are stable within each bin.

Let \(B_1,\ldots,B_K\) be a partition of the support of \(b(Z)\), chosen so
that each bin contains adequate treated and control observations. In practice,
\(b\) is usually unknown and is replaced by an estimator \(\widehat b\). For
example, if \(b(Z)=e(Z)=\Prob(T=1\mid Z)\), then \(\widehat b(Z)=\widehat e(Z)\)
is an estimated propensity score.

Define the empirical bin weights
\[
\widehat\pi_k
:=
\frac1n\sum_{i=1}^n
\mathbf 1\{\widehat b(Z_i)\in B_k\},
\]
and the arm-specific empirical laws
\[
\widehat\nu_{t,k}
:=
\Law_n\!\bigl(Y\mid T=t,\ \widehat b(Z)\in B_k\bigr).
\]
The binwise topological contrast is
\[
\widehat\Delta_{\Psi,k}
:=
\norm{
	\Psi(\widehat\nu_{1,k})-\Psi(\widehat\nu_{0,k})
}_E,
\]
and the balancing-bin estimator is
\[
\widehat\tau_{\Psi}^{(K)}
:=
\sum_{k=1}^K
\widehat\pi_k\,\widehat\Delta_{\Psi,k}.
\]

At the population level, the corresponding bin approximation is
\[
\tau_{\Psi}^{(K)}
:=
\sum_{k=1}^K
\pi_k
\norm{
	\Psi(\nu_{1,k})-\Psi(\nu_{0,k})
}_E,
\]
where
\[
\pi_k:=\Prob(b(Z)\in B_k),
\qquad
\nu_{t,k}:=\Law(Y\mid T=t,\ b(Z)\in B_k).
\]
This is not exactly \(\tau_\Psi\), because exact conditioning on \(Z\) has
been replaced by pooling within bins of \(b(Z)\) before applying \(\Psi\).
The approximation is accurate when the selected conditional topological
signatures are stable within each bin, so that this pooling does not change
the relevant topology. Thus \(\widehat\tau_{\Psi}^{(K)}\) should be interpreted as an estimator of
\(\tau_{\Psi}^{(K)}\), with an additional binning approximation error relative
to \(\tau_\Psi\).

\begin{center}
	\setlength{\fboxsep}{8pt}
	\fbox{\begin{minipage}{0.94\textwidth}
			\textbf{Algorithmic template under conditional topological ignorability.}
			\begin{enumerate}[leftmargin=1.4em]
				\item Estimate a low-dimensional balancing summary \(\widehat b(Z)\), such as a propensity score or a learned balancing representation.
				\item Partition the sample into bins or neighborhoods with treated and control observations in each subset.
				\item Within each subset, estimate the treated and control factual laws and compute their topological signatures.
				\item Compute the within-subset contrast \(\widehat\Delta_{\Psi,k}\).
				\item Average those contrasts over the empirical bin distribution to obtain \(\widehat\tau_{\Psi}^{(K)}\).
			\end{enumerate}
			This procedure estimates a binwise approximation to the causal topological
			effect identified under conditional topological ignorability; the approximation
			improves when the chosen signatures are stable within bins.
	\end{minipage}}
\end{center}

The estimator \(\widehat\tau_{\Psi}^{(K)}\) encodes the correct causal order of
operations for the standardized target in the noninjective setting: condition
or balance first, apply \(\Psi\) second, and average the resulting topological
contrasts third. Reversing that order generally targets a different quantity
because of Remark~\ref{rem:noncommute}.

\subsection{Finite-signature approximation}

Practical topological procedures rarely work with an infinite-dimensional transform. One computes a finite vector of Euler scan values, a finite persistence landscape grid, a persistence image, or another discretized signature. The next result separates this numerical approximation from causal identification itself.

\begin{proposition}[Finite-signature approximation error]
\label{prop:finite-grid}
Let \(\mathcal F\) denote the class of laws on which \(\Psi\) is defined, and 
let  $\Psi_m:\mathcal F\to E_m$ be a finite-rank or discretized approximation to $\Psi:\mathcal F\to E$. Suppose there exists an isometric embedding $\iota_m:E_m\to E$ and a model class $\mathcal G\subseteq\mathcal F$ such that
\[
\sup_{\nu\in\mathcal G}\norm{\iota_m\Psi_m(\nu)-\Psi(\nu)}_E\le a_m.
\]
Then for every $z$ with $\nu_{0,z},\nu_{1,z}\in\mathcal G$,
\[
\left|\norm{\Psi_m(\nu_{1,z})-\Psi_m(\nu_{0,z})}_{E_m}-\theta_{\Psi}(z)\right|\le 2a_m.
\]
Consequently,
\[
\left|\tau_{\Psi_m}-\tau_{\Psi}\right|\le 2a_m,
\qquad
\tau_{\Psi_m}:=\E\!\left[\norm{\Psi_m(\nu_{1,Z})-\Psi_m(\nu_{0,Z})}_{E_m}\right].
\]
\end{proposition}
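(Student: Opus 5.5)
The plan is to transfer the whole comparison into the ambient Banach space $E$ via the isometry $\iota_m$, and then use two triangle inequalities. Everything will be done pointwise in $z$ first. After that, the pointwise bound will be integrated over $P_Z$ to get the bound on $\tau$.

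Fix $z$ with $\nu_{0,z},\nu_{1,z}\in\mathcal G$. The first step rewrites the finite-signature norm in $E$. Since $\iota_m$ is an isometric embedding, and hence linear and norm-preserving (as is standard for normed spaces), we have
\[
\norm{\Psi_m(\nu_{1,z})-\Psi_m(\nu_{0,z})}_{E_m}=\norm{\iota_m\Psi_m(\nu_{1,z})-\iota_m\Psi_m(\nu_{0,z})}_E .
\]
If one only wants to assume a metric isometry, this identity still holds directly, because the map preserves distances.

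By Proposition~\ref{prop:tau-ident}, and as used in Proposition~\ref{prop:plugin-tau}, the identity $\Psi(P_{t,z})=\Psi(\nu_{t,z})$ holds for a.e.\ $z$. Therefore
\[
\theta_\Psi(z)=\norm{\Psi(\nu_{1,z})-\Psi(\nu_{0,z})}_E .
\]
This requires conditional topological ignorability. Alternatively, it can be read as the definition of $\theta_\Psi$ on factual laws, which is how $\tau_{\Psi_m}$ is written in the statement.

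The second step applies the reverse triangle inequality $\bigl|\norm{a}-\norm{b}\bigr|\le\norm{a-b}$ with
\[
a=\iota_m\Psi_m(\nu_{1,z})-\iota_m\Psi_m(\nu_{0,z}),\qquad b=\Psi(\nu_{1,z})-\Psi(\nu_{0,z}).
\]
Then
\[
\norm{a-b}_E\le \norm{\iota_m\Psi_m(\nu_{1,z})-\Psi(\nu_{1,z})}_E+\norm{\iota_m\Psi_m(\nu_{0,z})-\Psi(\nu_{0,z})}_E .
\]
Each term on the right is at most $a_m$ by the uniform hypothesis over $\mathcal G$. This gives the pointwise bound $2a_m$.

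The third step integrates. We have
\[
|\tau_{\Psi_m}-\tau_\Psi|\le \E\bigl|\norm{\Psi_m(\nu_{1,Z})-\Psi_m(\nu_{0,Z})}_{E_m}-\theta_\Psi(Z)\bigr|\le 2a_m .
\]
This step implicitly requires two things. First, $\nu_{t,Z}\in\mathcal G$ must hold $P_Z$-a.s.; I would state this as the intended reading of ``consequently''. Second, both expectations must be finite so that the difference is meaningful. Finiteness of one expectation together with the bounded difference gives finiteness of the other.

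The only real obstacle is bookkeeping rather than mathematics. Measurability of $z\mapsto\norm{\Psi_m(\nu_{1,z})-\Psi_m(\nu_{0,z})}$ is inherited from the measurability assumptions on $\Phi$, and the a.s.\ membership in $\mathcal G$ must be made explicit.
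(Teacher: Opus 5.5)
Your proposal is correct and follows the paper's proof: you transfer the finite-signature norm into $E$ via the isometry $\iota_m$, apply the reverse triangle inequality together with the uniform bound $a_m$ on $\mathcal G$ for each arm, and then integrate over $P_Z$. You also make explicit several hypotheses the paper leaves tacit: rewriting $\theta_\Psi(z)$ in terms of the factual laws $\nu_{t,z}$ uses conditional topological ignorability (Proposition~\ref{prop:tau-ident}), and the integrated bound needs $\nu_{t,Z}\in\mathcal G$ almost surely and finite expectations.
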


\begin{proof}
Because $\iota_m$ is isometric,
\begin{align*}
\Bigl|\norm{\Psi_m(\nu_{1,z})-\Psi_m(\nu_{0,z})}_{E_m}-\theta_{\Psi}(z)\Bigr|
&=\Bigl|\norm{\iota_m\Psi_m(\nu_{1,z})-\iota_m\Psi_m(\nu_{0,z})}_E-\norm{\Psi(\nu_{1,z})-\Psi(\nu_{0,z})}_E\Bigr|\\
&\le \norm{\iota_m\Psi_m(\nu_{1,z})-\Psi(\nu_{1,z})}_E+\norm{\iota_m\Psi_m(\nu_{0,z})-\Psi(\nu_{0,z})}_E\\
&\le 2a_m.
\end{align*}
Taking expectations over $Z$ gives the final bound.
\end{proof}

Proposition~\ref{prop:finite-grid} controls only the numerical error caused by replacing the ideal topological transform \(\Psi\) with a finite signature \(\Psi_m\). This is distinct from the causal identification step and from the statistical estimation of the conditional factual laws \(\nu_{t,z}\). Conditional topological ignorability justifies replacing \(\Psi(P_{t,z})\) by \(\Psi(\nu_{t,z})\); nuisance estimation approximates \(\nu_{t,z}\) by \(\widehat\nu_{t,z}\); and the finite-signature approximation replaces \(\Psi\) by \(\Psi_m\). These three layers should be analyzed separately.

\subsection{Persistent homology in practice}

The estimation scheme above applies directly to persistent homology. Suppose $Y$ is multivariate and, within each treated or control subset, one computes a filtration from the observed outcome cloud, a smoothed density level set, or a distance-to-measure transform. Let $\Psi_{\mathrm{PH}}$ be a vectorized persistence summary such as a persistence landscape or persistence image. Then the binwise persistent-homology estimator is
\[
\widehat\tau_{\Psi_{\mathrm{PH}}}^{(K)}=
\sum_{k=1}^K \widehat\pi_k
\,
\norm{
\Psi_{\mathrm{PH}}\!\bigl(\widehat\nu_{1,k}\bigr)
-
\Psi_{\mathrm{PH}}\!\bigl(\widehat\nu_{0,k}\bigr)
}_{E}.
\]
This quantity estimates the binwise causal topological contrast under conditional topological ignorability relative to \(\Psi_{\mathrm{PH}}\), even though the chosen persistent-homology summary is generally noninjective and therefore does not identify the full interventional laws.

The usual stability theory for persistent homology provides the link between
statistical estimation of the underlying filtration and estimation of the
topological signature.  For example, if the filtration functions used to build
the diagrams are close in sup-norm, then the associated persistence diagrams
are close in bottleneck distance
\cite{CohenSteinerEdelsbrunnerHarer2007}. Persistence stability does not imply that Betti numbers at every fixed level
are exactly unchanged. Rather, it implies that birth and death coordinates can be matched within the perturbation scale in bottleneck distance. Consequently, Betti numbers at a
chosen level are stable provided that level is separated from nearby
birth/death values by a sufficient margin. This is why the experiments below report finite-grid summaries together with
robustness diagnostics, rather than claiming exact equality of the entire
continuum of persistence modules.

Two practical remarks matter. First, the choice of filtration is part of the estimand. When \(\widehat\nu_{t,k}\) is represented by a finite point cloud, the resulting persistent homology is a sample-based approximation to the intended population-level filtration. A persistence landscape built from a kernel density level-set filtration targets a different scientific feature from one built from a Vietoris--Rips filtration on the raw point cloud. Second, the vectorization is also part of the estimand. Replacing a persistence landscape by a persistence image or a Betti curve changes the causal target, not merely the numerical implementation. The theory above is therefore intentionally written for a generic transform $\Psi$ so that the causal interpretation stays attached to the actual summary used in the data analysis.

\subsection{Overlap diagnostics and sensitivity}

Causal estimation should be accompanied by the same diagnostic discipline used
for ATE/CATE analyses.  Accordingly, an implementation should report the fitted
propensity-score range, the fraction of observations outside \([0.1,0.9]\),
minimum treated and control counts within score bins, and arm-specific
effective sample sizes.  For weights \(w_i\), the effective sample size in arm
\(t\) is
\[
\ESS_t
=
\frac{\bigl(\sum_{i:T_i=t} w_i\bigr)^2}
{\sum_{i:T_i=t} w_i^2}.
\]
In our implementation, \(w_i\) denotes the stabilized ATE weight based on the
estimated propensity score \(\widehat e(Z_i)\):
\[
w_i
=
\begin{cases}
	\widehat\pi/\widehat e(Z_i), & T_i=1,\\[1mm]
	(1-\widehat\pi)/(1-\widehat e(Z_i)), & T_i=0,
\end{cases}
\qquad
\widehat\pi=\frac1n\sum_{i=1}^n T_i .
\]
The same diagnostics can also be computed for unstabilized IPW weights.  These
overlap diagnostics do not test hidden ignorability, but they indicate whether
the observed-covariate weighting problem is numerically well posed or driven by
extreme weights and poor common support.

To calibrate sensitivity to hidden bias, we use an odds-tilt family.  For
\(\Gamma\ge1\), define perturbed propensity scores by
\[
\logit\bigl(e_{\Gamma}^{\pm}(z)\bigr)
=
\logit\bigl(e(z)\bigr)\pm\log\Gamma .
\]
Equivalently, the treatment odds are multiplied or divided by \(\Gamma\):
\[
\frac{e_{\Gamma}^{+}(z)}{1-e_{\Gamma}^{+}(z)}
=
\Gamma\frac{e(z)}{1-e(z)},
\qquad
\frac{e_{\Gamma}^{-}(z)}{1-e_{\Gamma}^{-}(z)}
=
\Gamma^{-1}\frac{e(z)}{1-e(z)}.
\]
For each value of \(\Gamma\) and each sign, we recompute the stabilized IPW
weights, the coordinate-mean contrast, and the selected finite superlevel
topological contrasts.  The two tilted values, or their worst-case range,
summarize how sensitive the reported contrast is to controlled perturbations of
the treatment odds.

This procedure is not a sharp identified set for arbitrary hidden confounding.
Rather, it is a transparent robustness diagnostic: if the selected
\(\beta_0\), \(\beta_1\), or Euler summaries remain stable across a range of
odds tilts, while the coordinate-mean target remains sensitive, this supports
the interpretation that the reported topological contrast is not merely an
artifact of one fitted propensity model.

\FloatBarrier

\section{Finite Superlevel-Set Experiments}
\label{sec:superlevel-experiments}

We now report three experiments based on finite density-superlevel summaries.  The first two are
hidden-confounding structural benchmarks: one fully synthetic and one
semi-synthetic.  The third is a randomized JOBS II application used to
illustrate the same finite-superlevel summaries on real experimental data.

In all experiments, topology is computed from smoothed empirical density grids on the native two-dimensional outcome space.  For a finite grid of density levels \(\Lambda\), we compute the Betti curves
\[
\bigl(\beta_0(\{\widehat f\ge c\})\bigr)_{c\in\Lambda},
\qquad
\bigl(\beta_1(\{\widehat f\ge c\})\bigr)_{c\in\Lambda},
\]
together with the Euler curve
\[
\bigl(\chi(\{\widehat f\ge c\})\bigr)_{c\in\Lambda}.
\]
A finite ECT vector is also computed from Euler characteristics of directional slices of the same superlevel sets.  However, ECT is treated as a descriptive location-sensitive diagnostic unless the slice-wise chamber-invariance check passes exactly.

\subsection{Hidden-confounding synthetic and semi-synthetic benchmarks}
The goal of these experiments is not to show that topological summaries are
universally superior to mean targets.  Rather, the goal is to illustrate the
identification logic developed above: a noninjective structural target can
remain stable under hidden-confounding reweighting even when a coordinate-mean
target remains biased.

The first benchmark is fully synthetic.  The second uses the 30 observed
covariates from the Wisconsin breast-cancer data, while keeping the treatment
mechanism and two-dimensional outcomes semi-synthetic.  In both
hidden-confounding benchmarks, treatment depends on both an observed score and
a hidden binary variable \(U\).
Thus weak ignorability fails by construction.  
The outcome maps are designed so that hidden \(U\) changes the sampling density within the outcome regions while preserving the selected finite superlevel topology: the control outcome retains one connected high-density region, whereas the treated outcome retains two connected high-density regions, with no one-dimensional hole over the selected level grid.

Figure~\ref{fig:superlevel-clouds} shows the observed factual clouds and the
oracle interventional clouds for the two hidden-confounding benchmarks. The figure
displays the native two-dimensional outcome geometry; the balance diagnostics,
mean-bias calculations, and finite-superlevel topology summaries are reported
numerically in Table~\ref{tab:superlevel-main}.

\begin{figure}[htbp]
	\centering
	
	\begin{minipage}{0.48\textwidth}
		\centering
		\includegraphics[width=\textwidth]{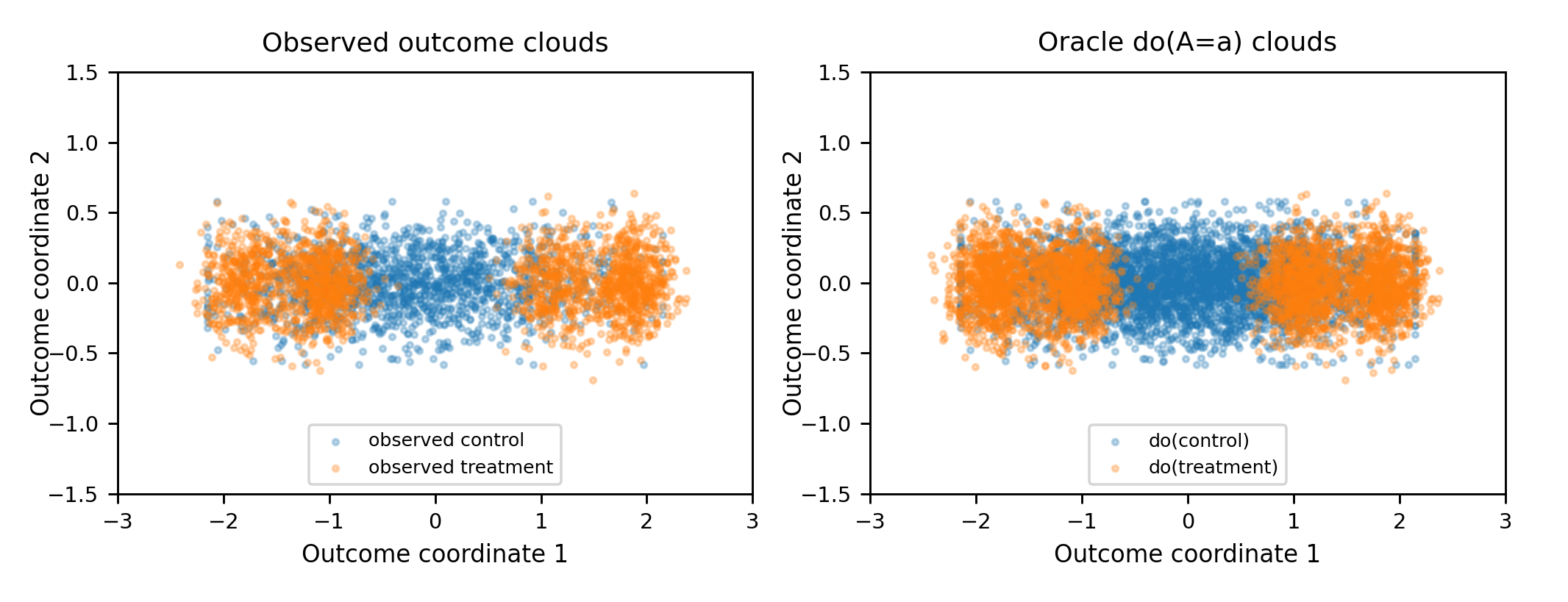}
		\par\smallskip
		\textbf{(a)} Synthetic exact benchmark.
	\end{minipage}
	\hfill
	\begin{minipage}{0.48\textwidth}
		\centering
		\includegraphics[width=\textwidth]{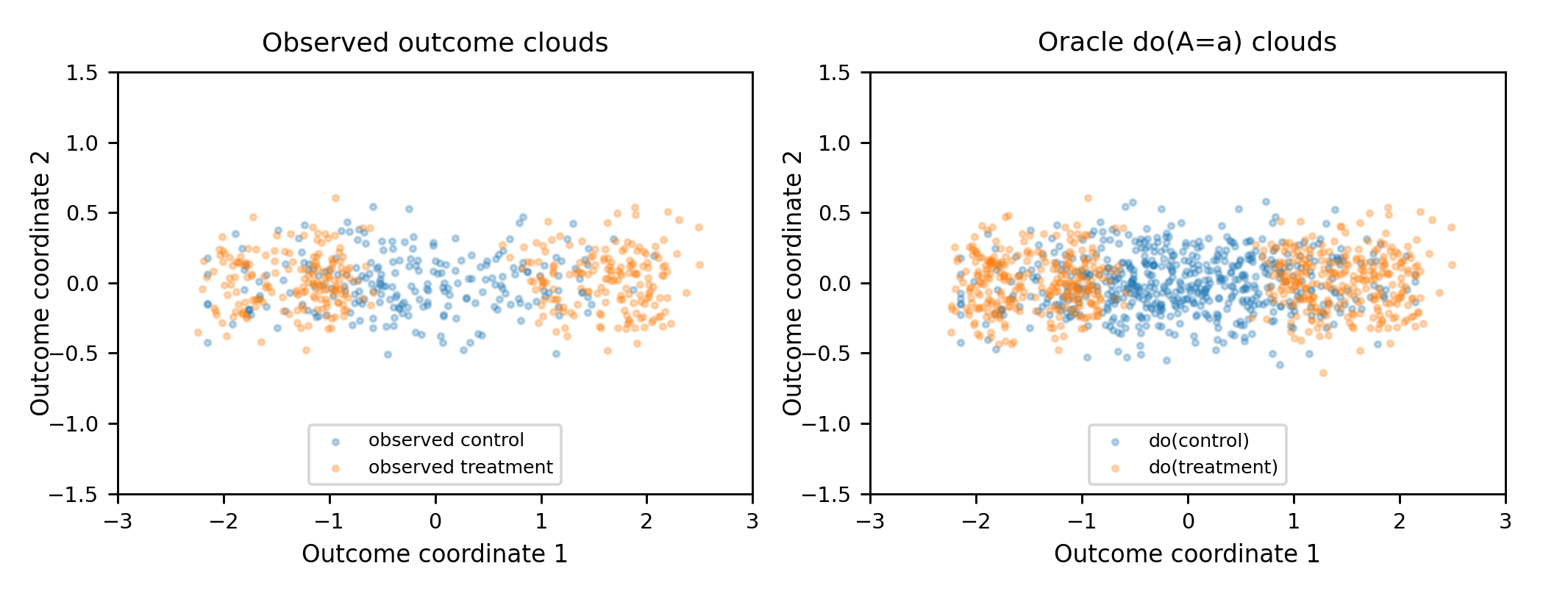}
		\par\smallskip
		\textbf{(b)} Breast-cancer semi-synthetic benchmark.
	\end{minipage}
	
	\caption{
		Observed factual outcome clouds and oracle interventional outcome clouds
		for the two hidden-confounding  benchmarks. The point
		clouds visualize the geometry underlying the finite-superlevel summaries:
		the control distribution is designed to have one connected high-density
		region, while the treated distribution is designed to have two. The
		numerical balance, mean-bias, and finite-superlevel topology summaries
		are reported in Table~\ref{tab:superlevel-main}.
	}
	\label{fig:superlevel-clouds}
\end{figure}

Table~\ref{tab:superlevel-main} summarizes the main numerical results.  Observed-covariate weighting nearly eliminates observed imbalance: the maximum absolute SMD decreases from \(0.728\) to \(0.001\) in the synthetic benchmark and from \(0.430\) to \(0.006\) in the breast-cancer semi-synthetic benchmark.  Nevertheless, the coordinate-mean target remains biased after IPW adjustment, with IPW biases \(0.221\) and \(0.236\), respectively.  This confirms the intended hidden-confounding failure mode: observed balance in \(Z\) does not remove the effect of hidden \(U\).

\begin{table}[htbp]
	\centering
	\small
	\caption{Finite superlevel-set hidden-confounding benchmarks.  The mean target is the ATE on the first native outcome coordinate.  IPW uses only observed covariates.}
	\label{tab:superlevel-main}
	\begin{tabular}{lccccccc}
		\toprule
		Dataset & Raw SMD & Weighted SMD & True ATE$_x$ & Naive bias & IPW bias & \(\Delta\beta_0\) ref. & \(\Delta\chi\) ref. \\
		\midrule
		Synthetic exact
		& 0.728 & 0.001 & -0.029 & 0.257 & 0.221 & 1.000 & 1.000 \\
		\shortstack[l]{Breast-cancer\\semi-synthetic}
		& 0.430 & 0.006 & 0.053 & 0.275 & 0.236 & 1.000 & 1.000 \\
		\bottomrule
	\end{tabular}
\end{table}

The balancing-bin standardized topology confirms the same conclusion.  In both hidden-confounding benchmarks, the observed and IPW finite-superlevel summaries
agree with the oracle effect.  For the reported finite-grid summaries,
\[
\Delta\beta_0=1,
\qquad
\Delta\beta_1=0,
\qquad
\Delta\chi=1.
\]
Thus the selected noninjective topological contrast is stable even though the
full-law equality required by weak ignorability fails and the coordinate-mean
target remains biased.

The finite ECT summaries are more sensitive.  In the synthetic benchmark, only
\(85.6\%\) of global ECT slice coordinates agree exactly with the oracle effect,
and in the breast-cancer semi-synthetic benchmark this fraction is \(72.2\%\).
Therefore, ECT is reported as a descriptive shape/location diagnostic, not as
part of the formal empirical CTI claim in these two benchmarks. 

Finally, we performed several robustness diagnostics for the two
hidden-confounding benchmarks.  These include overlap diagnostics, odds-tilt
perturbations of the fitted propensity score, nonparametric bootstrap
resampling, and finite-signature tuning over grid size, smoothing, level range,
ECT directions and slices, and small-component thresholds.  In both benchmarks,
the reference \(\Delta\beta_0\) and \(\Delta\chi\) effects remained equal to
\(1\) under the odds-tilt stress test up to \(\Gamma=5\).  The finite-signature
tuning also preserved the observed and oracle \(\Delta\beta_0\) reference
effects over the reported grid.  These diagnostics support the stability of
the selected finite Betti and Euler summaries, while ECT remains a more
sensitive descriptive signature.  They should be interpreted as robustness
diagnostics, not as a sharp sensitivity bound for arbitrary hidden confounding.
\subsection{Randomized JOBS II application}

The third experiment uses the JOBS II randomized intervention data.  This
analysis is not a hidden-confounding proof benchmark, because unit-level oracle
potential outcomes are not observed.  Instead, it is a real randomized
application showing how finite superlevel-set summaries can complement
conventional mean contrasts.

The analysis retains \(899\) randomized participants, with \(600\) assigned to
the intervention and \(299\) to control.  The two-dimensional outcome consists of standardized job-search self-efficacy
and standardized negative depression score, so larger values correspond to
stronger job-search self-efficacy and lower depression. The mean contrasts are modest and positive: the randomized
difference is \(0.093\) for job-search self-efficacy and \(0.097\) for negative
depression.  IPW and overlap adjustments give similar values.
Figure~\ref{fig:jobs-ii-superlevel} visualizes the observed joint outcome cloud
and the corresponding finite ECT contrast.

\begin{figure}[htbp]
	\centering
	
	\begin{minipage}{0.48\textwidth}
		\centering
		\includegraphics[width=\textwidth]{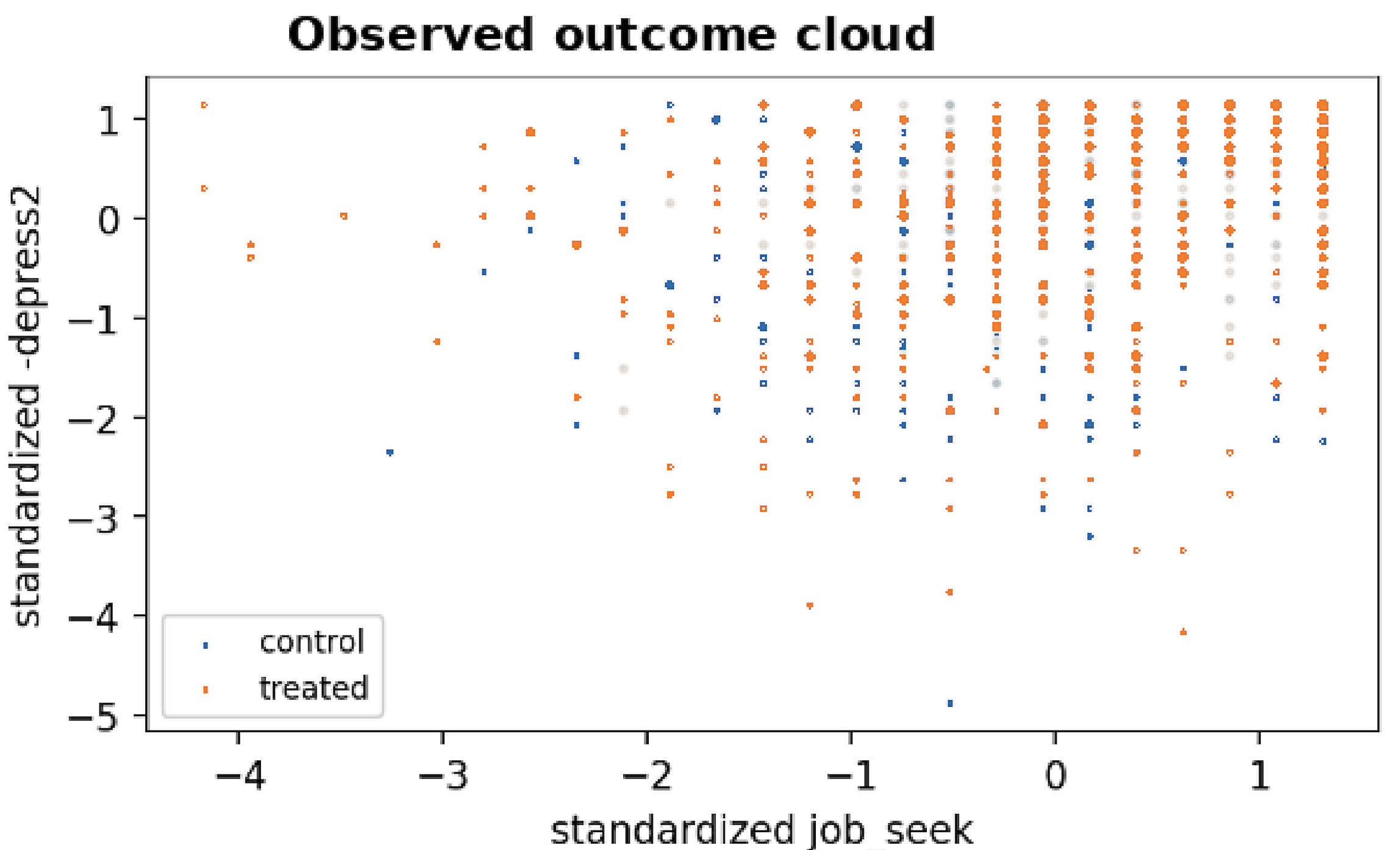}
		\par\smallskip
		\textbf{(a)} Observed outcome cloud.
	\end{minipage}
	\hfill
	\begin{minipage}{0.48\textwidth}
		\centering
		\includegraphics[width=\textwidth]{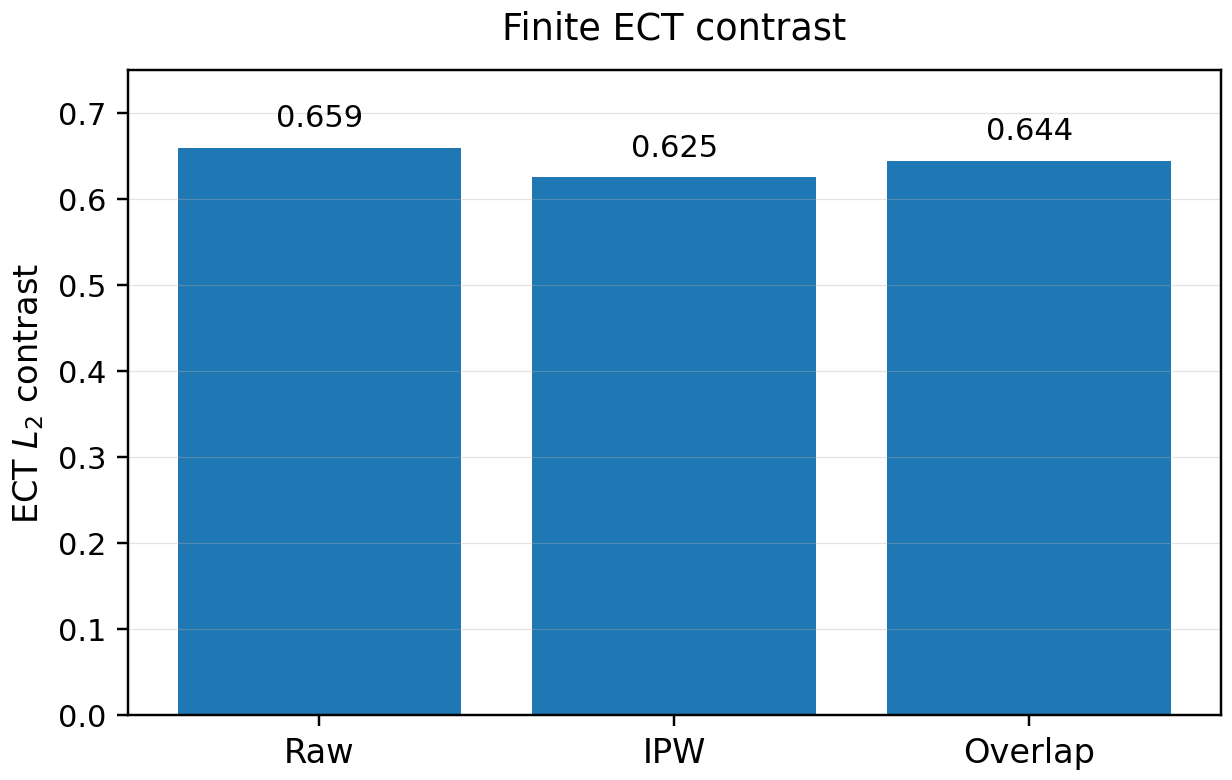}
		\par\smallskip
		\textbf{(b)} Finite ECT contrast.
	\end{minipage}
	
\caption{
	JOBS II randomized application.  Panel (a) shows the observed joint
	outcome cloud, with standardized job-search self-efficacy on the
	horizontal axis and standardized negative depression score on the vertical
	axis.  Panel (b) reports the finite ECT contrast under raw, IPW, and
	overlap analyses.  The ECT signal suggests a distributional
	shape/location difference between intervention and control groups,
	complementing the modest mean contrasts reported in the text.
}
	\label{fig:jobs-ii-superlevel}
\end{figure}

Consistent with Figure~\ref{fig:jobs-ii-superlevel}, the finite superlevel
summaries show additional distributional structure. 
Globally, the reference-level Betti-0 and Euler effects are both \(1.000\),
while the finite ECT contrast remains nonzero under raw, IPW, and overlap
analyses.  However, the score-bin results show heterogeneity: one bin has a
negative reference Betti-0 effect, while the other bins have no reference-level
component change.  Therefore, in JOBS II, the safest interpretation is not a
uniform component-count effect, but rather a randomized distributional
difference in the joint outcome density, especially visible through the
directional ECT signature.

Overall, the synthetic and breast-cancer semi-synthetic benchmarks support the
main theoretical point: weak ignorability can fail and mean targets can remain
biased even after strong observed-covariate balance, while selected finite
density-superlevel summaries remain invariant under hidden-confounding
reweighting.  The JOBS II analysis shows how the same finite-superlevel methodology can be
used as a descriptive tool in a real randomized application.

\FloatBarrier

\section{Discussion}

The main conclusion is sharp.  Topological ignorability is not a single
universal relaxation of weak ignorability.  It is a \emph{target-dependent}
concept.  If the transform \(\Psi\) is injective on the relevant model class,
topological ignorability and weak ignorability coincide.  In that regime,
topology changes the causal \emph{estimand}, not the identifying assumptions.
If \(\Psi\) is noninjective but scientifically well aligned, conditional
topological ignorability can be strictly weaker.  The worked hidden-confounding
example and the finite superlevel-set experiments make this point explicit:
 weak ignorability fails and mean targets can remain biased, yet selected density-superlevel Betti and Euler summaries can remain identifiable
 under the target-dependent topological ignorability condition, because the
 chosen functional forgets density changes that do not alter the selected
 superlevel-set topology.

The paper also clarifies what should be estimated in the noninjective regime.
One should generally not claim identification of the transformed marginal
interventional laws.  The causal quantity directly justified by conditional
topological ignorability is the covariate-standardized effect
\[
\tau_{\Psi}
=
\E\!\left[
\norm{\Psi(P_{1,Z})-\Psi(P_{0,Z})}_E
\right],
\]
which is estimated by balancing on covariates, computing factual arm-specific
topological signatures within strata, score bins, or neighborhoods, and
averaging those contrasts.  This is the natural template for
persistent-homology-type summaries, Betti curves, Euler curves, and other
finite topological signatures that are meaningful but noninjective.

This distinction matters in applications because many structural questions are
coarse by nature.  In a cell-state application, one may care about whether a
perturbation creates a new branch or disconnected high-density regime, not
about the exact density on each branch.  In a social-policy application, one
may care about whether treatment reorganizes the joint response cloud into
structurally different outcome regions, not only about a mean depression score.
In such settings, insisting on identification of the full law can be stronger
than the scientific question requires.

At the same time, topology does not suspend ordinary causal discipline.  One
still needs overlap, careful estimation of conditional laws, finite-sample
diagnostics, and robustness checks for hidden bias and numerical tuning.  The
experiments illustrate this distinction: observed-covariate weighting can
nearly eliminate standardized mean differences while hidden confounding still
biases the coordinate-mean target; nevertheless, selected finite Betti and
Euler summaries may remain stable.  More sensitive signatures such as finite
ECT can reveal additional shape or location differences, but they should be
included in a formal topological-ignorability claim only when the corresponding
slice-wise invariance checks are satisfied.

A limitation is that conditional topological ignorability is generally not
testable from a single observational distribution, just as weak ignorability is
not testable without additional design or structural assumptions.  The SEM
criteria and experiments should therefore be interpreted as mechanisms and
diagnostics, not as automatic tests of the assumption.  In applications, the
choice of \(\Psi\) must be justified scientifically, and robustness should be
assessed over density estimation, level choices, smoothing, weighting, and
hidden-bias perturbations.

The proposed framework therefore does not replace weak ignorability.  It
refines the question: for a scientifically chosen noninjective structural
summary, one may identify a target-level causal contrast without identifying
the full interventional distribution.  This is the sense in which topological
ignorability is useful.

\appendix
\section{Nerve criterion for loops and higher holes}
\label{app:nerve-criterion}

For loops and higher-dimensional holes, connected-component information is not enough. 
One must keep track of higher-order overlaps between pieces.  The nerve of a good cover provides a combinatorial model for this information.

\begin{definition}[Good cover]
	A finite open cover
	\[
	\mathcal U=\{U_i\}_{i\in I}
	\]
	of a topological space \(X\) is called a \emph{good cover} if every nonempty finite intersection
	\[
	U_{i_1}\cap\cdots\cap U_{i_r}
	\]
	is contractible.
\end{definition}

\begin{definition}[Nerve of a cover]
	Given a finite cover
	\[
	\mathcal U=\{U_i\}_{i\in I},
	\]
	its nerve \(N(\mathcal U)\) is the simplicial complex whose vertices correspond to the sets \(U_i\), and for which
	\[
	[i_0,\ldots,i_r]
	\]
	forms an \(r\)-simplex whenever
	\[
	U_{i_0}\cap\cdots\cap U_{i_r}\neq\emptyset.
	\]
\end{definition}

\begin{theorem}[Nerve theorem]
	If a paracompact space \(X\) admits a finite good cover \(\mathcal U\), then \(X\) is homotopy equivalent to the nerve \(N(\mathcal U)\); see, e.g., Björner~\cite{Bjorner}.
\end{theorem}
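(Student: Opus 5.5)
The plan is the classical argument via a partition of unity. Paracompactness gives a partition of unity subordinate to $\mathcal U=\{U_i\}_{i\in I}$. This is the only place the hypothesis enters.

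\textbf{Step 1: maps in both directions.} Choose $\{\phi_i\}_{i\in I}$ subordinate to $\mathcal U$. Define $\Phi:X\to |N(\mathcal U)|$ by $\Phi(x)=\sum_i \phi_i(x)\,e_i$, where $e_i$ is the vertex for $U_i$. This is well defined and continuous, because the indices with $\phi_i(x)>0$ have $x$ in the corresponding intersection, so they span a simplex of the nerve.

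For the reverse map $\Gamma:|N(\mathcal U)|\to X$, build by induction on skeleta. Send each vertex $e_i$ to a chosen point of $U_i$. For a simplex $\sigma=[i_0,\ldots,i_r]$, assume the boundary of $\sigma$ has been mapped into $U_{i_0}\cap\cdots\cap U_{i_r}$. Extend over $\sigma$ using contractibility of that intersection: a contractible space has trivial homotopy groups, so every map from a sphere extends over the disk. To make the induction go through, require that each face $\tau\subseteq\sigma$ is mapped into $U_\tau:=\bigcap_{j\in\tau}U_j$, which is a larger set than $U_\sigma$.

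\textbf{Step 2: homotopies to the identity.} Show that $\Phi\circ\Gamma\simeq \mathrm{id}$ and $\Gamma\circ\Phi\simeq\mathrm{id}$.

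For $\Phi\circ\Gamma$, take a point $p$ in the open simplex $\sigma$. Then $\Gamma(p)\in U_\sigma$, so $\Phi(\Gamma(p))$ lies in a simplex containing $\sigma$ as a face. The straight-line homotopy inside that simplex therefore stays in $|N|$ and is continuous.

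For $\Gamma\circ\Phi$, the point $x$ and the point $\Gamma(\Phi(x))$ both lie in $U_{\sigma(x)}$, where $\sigma(x)$ is the support simplex of $\Phi(x)$. These sets are contractible, but they vary with $x$, so no single contraction can be used directly.

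\textbf{Step 3: the main obstacle.} Making the homotopy $\Gamma\circ\Phi\simeq \mathrm{id}$ continuous in $x$ is the hard part. I would try two routes.

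The first route passes through the Segal/Weil "blow-up" space
\[
X_{\mathcal U}=\bigcup_{\sigma\in N(\mathcal U)} U_\sigma\times|\sigma|\subseteq X\times|N(\mathcal U)|,
\]
with its two projections. The projection to $X$ is a homotopy equivalence, with section $x\mapsto (x,\Phi(x))$ and a fiberwise straight-line homotopy, since each fiber is a simplex. The projection to $|N|$ has contractible fibers $U_\sigma$ over open simplices. By Dold's or Quillen's fiber lemma, or induction on skeleta with the gluing lemma for homotopy equivalences, it is also a homotopy equivalence. Composing gives $X\simeq|N(\mathcal U)|$.

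The second route is a fallback that suffices for what the paper needs. The appendix only uses homology for Betti curves, so a Mayer--Vietoris spectral sequence argument gives $H_*(X)\cong H_*(N(\mathcal U))$. Contractible intersections collapse the $E^1$ page to the simplicial chain complex of the nerve.

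Since the paper cites Björner, I would present the blow-up argument and defer the fiber-lemma step to that reference.
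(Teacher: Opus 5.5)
\textbf{Verdict: genuine gap.} The paper gives no argument beyond the citation, so everything rests on your Step~3, and that is where the gap lies.

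\textbf{The key step in Step~3.} Your treatment of the projection $X_{\mathcal U}\to X$ is correct. But $\mathrm{pr}_N\colon X_{\mathcal U}\to|N(\mathcal U)|$ carries the whole content of the theorem, and none of the tools you name applies as stated.

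Quillen's fiber lemma concerns order-preserving maps of posets (the combinatorial setting of the cited reference), not continuous maps of spaces. Contractible point-preimages alone do not give a homotopy equivalence: $t\mapsto e^{2\pi i t}$ is a continuous bijection $[0,1)\to S^1$.

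Dold's results concern fibrations, and $\mathrm{pr}_N$ is not one. A path from a point of a face $\tau$ into the open simplex $\sigma$ cannot be lifted from $(x,p)$ when $x\in U_\tau\setminus\overline{U_\sigma}$. Proving that it is a quasifibration would still give only a weak equivalence. You cannot upgrade that, because $X$ is not known in advance to have the homotopy type of a CW complex.

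The skeletal gluing-lemma induction is the right idea, but it does not apply to the subspace topology of $X\times|N|$. There $U_\sigma\times|\sigma|$ is in general not closed: its closure contains points $(x,p)$ with $p$ in a face $\tau$ and $x\in(U_\tau\cap\overline{U_\sigma})\setminus U_\sigma$. So $\mathrm{pr}_N^{-1}|N^{(k)}|$ is not the pushout of $\mathrm{pr}_N^{-1}|N^{(k-1)}|\leftarrow U_\sigma\times\partial|\sigma|\hookrightarrow U_\sigma\times|\sigma|$.

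\textbf{How to repair it.} Give $E=\coprod_\sigma U_\sigma\times|\sigma|/\!\sim$ the quotient topology.
\begin{itemize}
\item Each skeletal stage is then a pushout along the cofibration $U_\sigma\times\partial|\sigma|\hookrightarrow U_\sigma\times|\sigma|$. The gluing lemma, together with contractibility of each $U_\sigma$, shows by induction on $k$ that $E\to|N|$ is a homotopy equivalence.
\item Your partition-of-unity argument still gives $E\simeq X$, but continuity must now be checked locally. For $x_0\in X$, put $\rho=\{i:x_0\in\supp\phi_i\}$ and $W=U_\rho\setminus\bigcup_{i\notin\rho}\supp\phi_i$. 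On $W$ the section $x\mapsto[(x,\Phi(x))]$ factors through the single piece $U_\rho\times|\rho|$. On $(U_\sigma\cap W)\times|\sigma|\times[0,1]$ the straight-line homotopy factors through $U_{\sigma\cup\rho}\times|\sigma\cup\rho|$.
\end{itemize}
Repaired this way, your route is the Segal/Hatcher proof (\S4.G of \cite{Hatcher2002}). It is the standard argument for open covers of paracompact spaces, and it differs from the combinatorial argument behind the paper's citation.

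\textbf{Steps~1--2 are also wrong as written,} even though you later abandon that route. Requiring each face $\tau$ to land in $U_\tau$ gives no reason for $\partial\sigma$ to land in the smaller set $U_\sigma$. Worse, the property Step~2 uses, $\Gamma(p)\in U_\sigma$ for $p$ in the open simplex $\sigma$, can fail for every continuous $\Gamma$.

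Take $X=(0,3)$ with $U_1=(\tfrac12,\tfrac52)$, $U_2=(0,1)$, $U_3=(2,3)$. This is a finite good cover whose nerve is the path with edges $\{1,2\}$ and $\{1,3\}$. Continuity at $e_1$ would force $\Gamma(e_1)\in\overline{U_1\cap U_2}\cap\overline{U_1\cap U_3}=[\tfrac12,1]\cap[2,\tfrac52]=\emptyset$.

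The same computation shows that $\mathrm{pr}_N$ has no section over the open star of $e_1$. So it is not even locally shrinkable, which is another reason Dold's local-to-global theorems are unavailable.

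A map $|N|\to X$ must instead be built on the barycentric subdivision, sending the simplex spanned by the barycenters of a chain $\sigma_0\subsetneq\cdots\subsetneq\sigma_k$ into $U_{\sigma_0}$. But then $\Phi(\Gamma(p))$ need not share a simplex with $p$, so the one-line homotopy $\Phi\circ\Gamma\simeq\mathrm{id}$ is lost too.

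\textbf{The Mayer--Vietoris fallback.} It is correct and suffices for the homology statements of Theorem~\ref{thm:sem-nerve-criterion}. However, it proves only $H_*(X)\cong H_*(N(\mathcal U))$, not the homotopy equivalence claimed.
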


\begin{theorem}[Nerve criterion for loops and higher holes]
	\label{thm:sem-nerve-criterion}
	Fix \((t,z)\), a finite level set \(\Lambda=\{c_1>\cdots>c_m\}\), and homological dimensions \(\mathcal D\subseteq\{1,2,\ldots\}\).  For each \(c\in\Lambda\), suppose that
	\[
	A_c=A^{t,z}_{c}(\Pi_z),
	\qquad
	A_c^{\mathrm{obs}}=A^{t,z}_{c}(\Pi^{\mathrm{obs}}_{t,z})
	\]
	admit finite good covers with nerves \(N_c\) and \(N_c^{\mathrm{obs}}\).
	Assume moreover that the chosen good covers are compatible with the
	inclusions
	\[
	A_{c_i}\subseteq A_{c_j},
	\qquad
	A^{\mathrm{obs}}_{c_i}\subseteq A^{\mathrm{obs}}_{c_j}
	\qquad
	\text{whenever } c_i\ge c_j,
	\]
	so that these inclusions induce simplicial maps between the corresponding
	nerves.
	
	Suppose that, for each \(k\in\mathcal D\), the homology groups
	\[
	H_k(N_c^{\mathrm{obs}};\mathbb F)
	\quad\text{and}\quad
	H_k(N_c;\mathbb F)
	\]
	are isomorphic at every \(c\in\Lambda\), and that these isomorphisms commute
	with the inclusion-induced maps across levels.  Then the factual and interventional persistent \(H_k\) modules are isomorphic for every
	\(k\in\mathcal D\).  In particular, the Betti curves and finite barcodes in
	those dimensions agree.
\end{theorem}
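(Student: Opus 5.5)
The plan is to pass from the level sets to their nerves using the Nerve theorem, carry out the homological comparison on the nerves (where the hypotheses live), and then transport the result back to the level sets.

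\emph{Step 1: replace level sets by nerves.} For each \(c\in\Lambda\), the Nerve theorem applied to the finite good covers gives homotopy equivalences
\[
\phi_c:A_c\to N_c,
\qquad
\phi_c^{\mathrm{obs}}:A_c^{\mathrm{obs}}\to N_c^{\mathrm{obs}}.
\]
These induce isomorphisms
\[
H_k(A_c;\mathbb F)\cong H_k(N_c;\mathbb F),
\qquad
H_k(A_c^{\mathrm{obs}};\mathbb F)\cong H_k(N_c^{\mathrm{obs}};\mathbb F)
\]
for every \(k\). Paracompactness is automatic here, since the superlevel sets are subsets of \(\mathbb R^d\) and hence metrizable.

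\emph{Step 2: naturality of the nerve equivalence.} Fix \(c_i\ge c_j\). We need the square formed by the inclusion \(A_{c_i}\hookrightarrow A_{c_j}\), the induced simplicial map \(N_{c_i}\to N_{c_j}\), and the maps \(\phi_{c_i},\phi_{c_j}\) to commute up to homotopy. I would construct \(\phi_c\) from partitions of unity subordinate to the cover. When each cover element at level \(c_i\) is contained in its assigned element at level \(c_j\), which is what the stated compatibility provides, the two composites around the square send every point into a common closed simplex. The straight-line homotopy then shows the square commutes up to homotopy. The same argument applies to the factual filtration.

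This naturality is the main obstacle. The Nerve theorem as cited is only an existence statement, and it does not by itself provide naturality. If the partition-of-unity construction proves delicate, I would instead use the functorial version of the Nerve theorem for covers indexed compatibly along a filtration (Björner; Chazal--Oudot persistent nerve lemma) and state that this is the version being invoked.

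\emph{Step 3: assemble the persistence modules.} Fix \(k\in\mathcal D\). Steps 1 and 2 give isomorphisms of persistence modules over \(\Lambda\):
\[
\bigl(H_k(A_c)\bigr)_{c}\cong\bigl(H_k(N_c)\bigr)_{c},
\qquad
\bigl(H_k(A_c^{\mathrm{obs}})\bigr)_{c}\cong\bigl(H_k(N_c^{\mathrm{obs}})\bigr)_{c}.
\]
The hypothesis supplies levelwise isomorphisms \(H_k(N_c^{\mathrm{obs}})\cong H_k(N_c)\) that commute with the inclusion-induced maps, which is exactly an isomorphism of persistence modules \(\bigl(H_k(N_c^{\mathrm{obs}})\bigr)_c\cong\bigl(H_k(N_c)\bigr)_c\). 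Composing the three isomorphisms yields \(\bigl(H_k(A^{\mathrm{obs}}_c)\bigr)_c\cong\bigl(H_k(A_c)\bigr)_c\).

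\emph{Step 4: consequences.} The Betti curves in dimension \(k\) agree because they are the levelwise dimensions of isomorphic modules. The finite barcodes agree because, over the finite totally ordered set \(\Lambda\), a module of finite type decomposes uniquely into interval modules, so isomorphic modules have identical barcodes. Finiteness of the dimensions follows from the nerves being finite simplicial complexes.
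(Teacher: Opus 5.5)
Your proposal is correct and follows the same route as the paper: levelwise nerve-theorem equivalences, compatibility of those equivalences with the cover-induced simplicial maps, and composition with the hypothesized commuting isomorphisms on nerve homology. The paper only asserts the naturality step (``the compatibility of the covers across levels ensures that the inclusion maps of the filtrations correspond to the induced simplicial maps between nerves''). Your partition-of-unity argument, which puts both composites in a common closed simplex and joins them by a straight-line homotopy, supplies that missing detail, and your interval-decomposition remark justifies the barcode conclusion, which the paper also leaves implicit.
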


\begin{proof}
	By the nerve theorem, each level set is homotopy equivalent to the nerve of its finite good cover. Therefore, for each \(c\in\Lambda\) and \(k\in\mathcal D\),
	\[
	H_k(A_c;\mathbb F)\cong H_k(N_c;\mathbb F),
	\qquad
	H_k(A_c^{\mathrm{obs}};\mathbb F)\cong H_k(N_c^{\mathrm{obs}};\mathbb F).
	\]
	The compatibility of the covers across levels ensures that the inclusion maps of the filtrations correspond to the induced simplicial maps between nerves. The assumed commuting isomorphisms of nerve homology therefore give an isomorphism of the corresponding persistence modules. Hence the persistent \(H_k\) modules agree, and so the associated Betti curves and finite barcodes agree.
\end{proof}

\begin{corollary}[Loop-based conditional topological ignorability]
	\label{cor:sem-h1-ignorability}
	Taking \(\mathcal D=\{1\}\), if Theorem~\ref{thm:sem-nerve-criterion} holds for \(t=0,1\) and for \(P_Z\)-almost every \(z\), then conditional topological ignorability holds for every \(\Psi\) depending only on the selected persistent \(H_1\) summary. Geometrically, reweighting may change density along the loop, but it must not break the loop, fill it, or create an additional loop at the levels used by \(\Psi\).
\end{corollary}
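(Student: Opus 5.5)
The corollary follows by specializing Theorem~\ref{thm:sem-nerve-criterion} to $\mathcal D=\{1\}$ and then feeding the result into the definition of conditional topological ignorability. Fix $t\in\{0,1\}$ and a covariate value $z$ outside the $P_Z$-null exceptional set on which the hypotheses may fail.

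Theorem~\ref{thm:sem-nerve-criterion} with $\mathcal D=\{1\}$ gives an isomorphism of persistence modules over $\Lambda$:
\[
\bigl(H_1(A^{t,z}_c(\Pi_z);\mathbb F)\bigr)_{c\in\Lambda}\cong\bigl(H_1(A^{t,z}_c(\Pi^{\mathrm{obs}}_{t,z});\mathbb F)\bigr)_{c\in\Lambda}.
\]
By Lemma~\ref{lem:sem-mixture-representation}, these two filtrations are exactly the superlevel filtrations of the densities $f_{t,z}$ and $f_{t,t,z}$. So the persistent $H_1$ module of $P_{t,z}$ is isomorphic to that of $P_{t,t,z}$.

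Next I use the hypothesis on $\Psi$: it depends only on the selected persistent $H_1$ summary. Concretely, $\Psi(\mu)=G(\mathcal M_1(\mu))$, where $\mathcal M_1(\mu)$ is the persistent $H_1$ module over $\Lambda$ and $G$ is constant on isomorphism classes. This holds for the $H_1$ Betti curve, for the $H_1$ barcode (the interval decomposition of a finite persistence module over a totally ordered finite set is an isomorphism invariant), and for any vectorization of the barcode. It follows that $\Psi(P_{t,z})=\Psi(P_{t,t,z})$.

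Doing this for $t=0,1$ and $P_Z$-a.e.\ $z$ gives exactly the defining equality of conditional topological ignorability. The only step needing care is making precise that ``depending only on the persistent $H_1$ summary'' means invariance under module isomorphism, together with the observation that $\Lambda$ is finite so barcodes are well defined. The geometric sentence about not breaking, filling, or creating loops is interpretive: it restates that the levelwise $H_1$ ranks and the induced maps are preserved, and needs no separate proof.
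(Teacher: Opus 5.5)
Your proposal is correct and matches the argument the paper intends: the corollary has no separate proof in the paper, but it follows exactly as in the proof of Theorem~\ref{thm:sem-topo}, by specializing the nerve criterion to $\mathcal D=\{1\}$, identifying the two filtrations with the superlevel filtrations of $f_{t,z}$ and $f_{t,t,z}$ via Lemma~\ref{lem:sem-mixture-representation}, and using that $\Psi$ is invariant under isomorphism of the persistent $H_1$ module. Your explicit reading of ``depends only on the selected persistent $H_1$ summary'' as invariance under module isomorphism is a useful clarification that the paper leaves implicit.
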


\begin{example}[A loop that is preserved or destroyed]
	\label{ex:h1-loop}
	Place four contractible high-density pieces around the corners of a square in \(\mathbb R^2\). Suppose adjacent pieces overlap, opposite pieces do not, and no triple intersection occurs. The nerve is a four-cycle, so the union has one \(H_1\) loop. Reweighting that keeps all four pieces active and preserves the same overlap pattern leaves the nerve unchanged and preserves the loop. If treatment selection suppresses one piece below the chosen density level, the nerve becomes a path and the loop disappears. Thus the same SEM may violate weak ignorability in both cases, but \(H_1\)-based topological ignorability holds only in the first case; see Figure~\ref{nerve-loop}.
\end{example}

\begin{figure}
	\centering
	\includegraphics[scale=0.2]{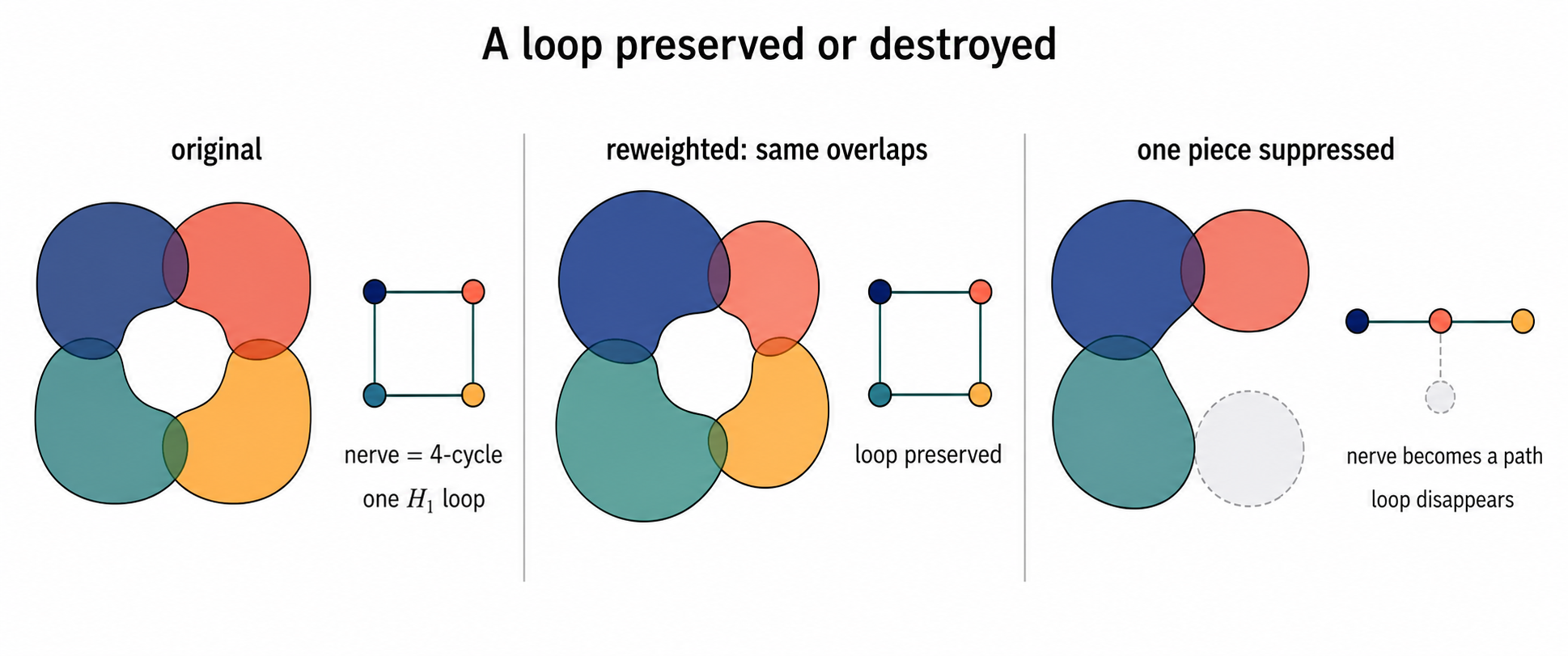}
	\caption{
		A four-cycle nerve preserves one \(H_1\) loop under reweightings that keep the same active pieces and overlaps; suppressing one piece turns the nerve into a path and destroys the loop.
	}
	\label{nerve-loop}
\end{figure}

\section{Full construction of the worked SEM example}
\label{app:worked-sem}

This appendix gives the full structural construction behind the worked example in the main text.  The purpose is to show that the densities used there arise from an explicit hidden-confounding SEM.

Let \(U\in\{0,1\}\) be an unobserved latent variable with
\[
\Prob(U=0)=\Prob(U=1)=\frac12,
\]
and let \(V,E_0,E_1\overset{ind}{\sim}\mathrm{Unif}(0,1)\), independent of \(U\). Treatment is assigned by
\[
T=\mathbf 1\{V\le p(U)\},
\qquad
p(0)=\frac15,
\qquad
p(1)=\frac45.
\]
Thus \(U\) affects treatment assignment.

Potential outcomes are generated by monotone piecewise-affine maps:
\[
Y(0)=Q_{0,U}(E_0),
\qquad
Y(1)=Q_{1,U}(E_1),
\qquad
Y=(1-T)Y(0)+TY(1).
\]
For \(a,b\in\{0,1\}\), write
\[
\mu_{a,b}:=\Law(Y(a)\mid U=b),
\qquad
f_{a,b}:=\frac{d\mu_{a,b}}{dy}.
\]
Also write
\[
\mu_a:=\Law(Y(a)),
\qquad
f_a:=\frac{d\mu_a}{dy},
\qquad
\nu_a:=\Law(Y\mid T=a),
\qquad
h_a:=\frac{d\nu_a}{dy}.
\]
All displayed density equalities are understood up to equality almost everywhere.

The structural maps are
\[
Q_{0,0}(e)=3e,
\qquad 0\le e\le 1,
\]
\[
Q_{0,1}(e)=
\begin{cases}
	5e, & 0\le e\le \frac15,\\[2mm]
	1+\frac{10}{3}\left(e-\frac15\right), & \frac15< e\le \frac12,\\[2mm]
	2+2\left(e-\frac12\right), & \frac12< e\le 1,
\end{cases}
\]
\[
Q_{1,0}(e)=
\begin{cases}
	2e, & 0\le e\le \frac12,\\[2mm]
	2+2\left(e-\frac12\right), & \frac12< e\le 1,
\end{cases}
\]
and
\[
Q_{1,1}(e)=
\begin{cases}
	5e, & 0\le e\le \frac15,\\[2mm]
	2+\frac54\left(e-\frac15\right), & \frac15< e\le 1.
\end{cases}
\]

Since \(Q_{a,0}\neq Q_{a,1}\) for both \(a=0,1\), the law of \(Y(a)\) depends on \(U\) in both arms. Therefore \(U\) affects both treatment assignment and potential outcomes, so it is a genuine hidden confounder.

We use the following elementary change-of-variables fact. If \(E\sim \mathrm{Unif}(0,1)\) and \(Q\) is affine with slope \(s>0\) on an interval whose image is \(J\), then that branch contributes
\[
\frac1s\mathbf 1_J(y)
\]
to the density of \(Q(E)\). Applying this branchwise gives the following latent-state-specific densities.

\begin{table}[htbp]
	\centering
	\small
	\renewcommand{\arraystretch}{1.15}
	\caption{Latent-state-specific densities generated by the structural maps in the worked SEM.}
	\label{tab:app-example-densities}
	\begin{tabular}{cccc}
		\toprule
		Arm \(a\) & Latent state \(b\) & Density \(f_{a,b}(y)\) & \(\E[Y(a)\mid U=b]\) \\
		\midrule
		0 & 0
		&
		\(\frac13\mathbf 1_{[0,3]}(y)\)
		&
		\(\frac32\)
		\\[1mm]
		0 & 1
		&
		\(\frac15\mathbf 1_{[0,1]}(y)
		+\frac{3}{10}\mathbf 1_{[1,2]}(y)
		+\frac12\mathbf 1_{[2,3]}(y)\)
		&
		\(\frac95\)
		\\[1mm]
		1 & 0
		&
		\(\frac12\mathbf 1_{[0,1]}(y)
		+\frac12\mathbf 1_{[2,3]}(y)\)
		&
		\(\frac32\)
		\\[1mm]
		1 & 1
		&
		\(\frac15\mathbf 1_{[0,1]}(y)
		+\frac45\mathbf 1_{[2,3]}(y)\)
		&
		\(\frac{21}{10}\)
		\\
		\bottomrule
	\end{tabular}
\end{table}

Since
\[
\Prob(T=1)
=
\frac12\cdot\frac15+\frac12\cdot\frac45
=
\frac12,
\]
we also have \(\Prob(T=0)=1/2\). The interventional densities are therefore
\[
f_0(y)
=
\frac12 f_{0,0}(y)+\frac12 f_{0,1}(y)
=
\frac{4}{15}\mathbf 1_{[0,1]}(y)
+
\frac{19}{60}\mathbf 1_{[1,2]}(y)
+
\frac{5}{12}\mathbf 1_{[2,3]}(y),
\]
and
\[
f_1(y)
=
\frac12 f_{1,0}(y)+\frac12 f_{1,1}(y)
=
\frac{7}{20}\mathbf 1_{[0,1]}(y)
+
\frac{13}{20}\mathbf 1_{[2,3]}(y).
\]
By consistency, on the event \(\{T=a\}\) we have \(Y=Y(a)\). Therefore
\(\nu_a=\Law(Y\mid T=a)=\Law(Y(a)\mid T=a)\).
Bayes' rule gives the factual latent weights
\[
\Prob(U=0\mid T=1)
=
\frac{\Prob(T=1\mid U=0)\Prob(U=0)}{\Prob(T=1)}
=
\frac{(1/5)(1/2)}{1/2}
=
\frac15,
\]
\[
\Prob(U=1\mid T=1)=\frac45,
\]
and similarly
\[
\Prob(U=0\mid T=0)=\frac45,
\qquad
\Prob(U=1\mid T=0)=\frac15.
\]
Hence the factual densities are
\[
h_0(y)
=
\frac45 f_{0,0}(y)+\frac15 f_{0,1}(y)
=
\frac{23}{75}\mathbf 1_{[0,1]}(y)
+
\frac{49}{150}\mathbf 1_{[1,2]}(y)
+
\frac{11}{30}\mathbf 1_{[2,3]}(y),
\]
and
\[
h_1(y)
=
\frac15 f_{1,0}(y)+\frac45 f_{1,1}(y)
=
\frac{13}{50}\mathbf 1_{[0,1]}(y)
+
\frac{37}{50}\mathbf 1_{[2,3]}(y).
\]

These are precisely the densities used in the worked example in the main text. They show explicitly how conditioning on factual treatment changes the latent weights and hence changes the density values, while leaving the selected superlevel-set topology unchanged at the density level used in the example.

\section{A second SEM with the same observational law and a different ATE}
\label{app:same-observational-different-ate}

This appendix gives a second structural model \(M'\) with the same observational law of \((T,Y)\) as the worked SEM \(M\), but with a different average treatment effect.  This shows that the observational law does not identify the ATE on this confounded model class, even though the selected topological contrast in the worked example can remain identified.

The purpose of \(M'\) is not to change the main topological example, but to show that the same observed data can be compatible with a different mean causal effect. We therefore use \(M'\) only to demonstrate nonidentification of the ATE, not as a second example of preserved superlevel topology.

Model \(M'\) uses the same latent variable \(U\), the same exogenous variables \(V,E_0,E_1\), and the same treatment rule
\[
T=\mathbf 1\{V\le p(U)\},
\qquad
p(0)=\frac15,
\qquad
p(1)=\frac45.
\]
Only the potential-outcome maps are changed. Define
\[
Q'_{1,0}(e)=2+e,
\qquad 0\le e\le 1,
\]
\[
Q'_{1,1}(e)=
\begin{cases}
	\frac{40}{13}e, & 0\le e\le \frac{13}{40},\\[2mm]
	2+\frac{40}{27}\left(e-\frac{13}{40}\right), & \frac{13}{40}<e\le 1,
\end{cases}
\]
\[
Q'_{0,0}(e)=
\begin{cases}
	\frac{50}{19}e, & 0\le e\le \frac{19}{50},\\[2mm]
	1+5\left(e-\frac{19}{50}\right), & \frac{19}{50}<e\le \frac{29}{50},\\[2mm]
	2+\frac{50}{21}\left(e-\frac{29}{50}\right), & \frac{29}{50}<e\le 1,
\end{cases}
\]
and
\[
Q'_{0,1}(e)=
\begin{cases}
	75e, & 0\le e\le \frac{1}{75},\\[2mm]
	1+\frac65\left(e-\frac{1}{75}\right), & \frac{1}{75}<e\le \frac{127}{150},\\[2mm]
	2+\frac{150}{23}\left(e-\frac{127}{150}\right), & \frac{127}{150}<e\le 1.
\end{cases}
\]

The induced latent-state-specific densities are
\[
g_{1,0}(y)=\mathbf 1_{[2,3]}(y),
\]
\[
g_{1,1}(y)
=
\frac{13}{40}\mathbf 1_{[0,1]}(y)
+
\frac{27}{40}\mathbf 1_{[2,3]}(y),
\]
\[
g_{0,0}(y)
=
\frac{19}{50}\mathbf 1_{[0,1]}(y)
+
\frac15\mathbf 1_{[1,2]}(y)
+
\frac{21}{50}\mathbf 1_{[2,3]}(y),
\]
and
\[
g_{0,1}(y)
=
\frac{1}{75}\mathbf 1_{[0,1]}(y)
+
\frac56\mathbf 1_{[1,2]}(y)
+
\frac{23}{150}\mathbf 1_{[2,3]}(y).
\]

Because the treatment mechanism is unchanged, the posterior weights
\[
\Prob(U=b\mid T=a)
\]
are the same as under \(M\). Thus
\[
\Prob(U=0\mid T=1)=\frac15,
\qquad
\Prob(U=1\mid T=1)=\frac45,
\]
and
\[
\Prob(U=0\mid T=0)=\frac45,
\qquad
\Prob(U=1\mid T=0)=\frac15.
\]

\begin{proposition}[Same observational law but different ATE]
	\label{prop:second-sem}
	Under model \(M'\),
	\[
	\Law_{M'}(T,Y)=\Law_M(T,Y),
	\qquad
	\ATE(M')=\frac{117}{200}\neq \frac{3}{20}=\ATE(M).
	\]
	Consequently, the observational law of \((T,Y)\) does not identify the ATE on the model class containing \(M\) and \(M'\).
\end{proposition}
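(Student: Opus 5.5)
The proof reduces to explicit bookkeeping with piecewise-constant densities, in three steps: confirm the latent-state densities $g_{a,b}$, show that the factual laws of $M'$ coincide with those of $M$, and compare the interventional means.

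\textbf{Step 1: the densities $g_{a,b}$.} I would first confirm that the displayed $g_{a,b}$ are the densities induced by the maps $Q'_{a,b}$. I will use the same branchwise change-of-variables fact as in Appendix~\ref{app:worked-sem}: an affine branch with slope $s>0$ and image $J$ contributes $s^{-1}\mathbf 1_J$ to the density. For example, $Q'_{0,1}$ has slopes $75$, $6/5$ and $150/23$ on branches whose images are $[0,1]$, $[1,2]$ and $[2,3]$. This gives coefficients $1/75$, $5/6$ and $23/150$. The other three maps are handled the same way, and I will check that each map is monotone and continuous at its breakpoints.

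\textbf{Step 2: same observational law.} The joint law of $(T,Y)$ is determined by $\Prob(T=1)$ and the two conditional laws $\Law(Y\mid T=a)$.
\begin{itemize}
\item The treatment mechanism, the law of $U$, and the exogenous noises are unchanged. So $\Prob(T=1)=1/2$ and the posterior weights $\Prob(U=b\mid T=a)$ are the same as in $M$.
\item $E_a$ is independent of $(U,V)$. By consistency and the exogenous-noise structure, as in Lemma~\ref{lem:sem-mixture-representation}, this gives
\[
h'_1=\tfrac15 g_{1,0}+\tfrac45 g_{1,1},
\qquad
h'_0=\tfrac45 g_{0,0}+\tfrac15 g_{0,1}.
\]
\item I then compare coefficients interval by interval with $h_1$ and $h_0$. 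For instance, on $[0,1]$,
\[
\tfrac45\cdot\tfrac{13}{40}=\tfrac{13}{50},
\qquad
\tfrac45\cdot\tfrac{19}{50}+\tfrac15\cdot\tfrac1{75}=\tfrac{23}{75}.
\]
The same check on $[1,2]$ and $[2,3]$ completes this step.
\end{itemize}
Equal marginals of $T$ and equal conditionals give $\Law_{M'}(T,Y)=\Law_M(T,Y)$.

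\textbf{Step 3: different ATEs.} Here $\Prob(U=b)=1/2$, so $\E[Y(a)]=\tfrac12\bigl(\E[Y(a)\mid U=0]+\E[Y(a)\mid U=1]\bigr)$. I compute each conditional mean from the piecewise-constant density, using interval midpoints $1/2$, $3/2$ and $5/2$.
\begin{itemize}
\item For $M$, Table~\ref{tab:app-example-densities} gives $\E[Y(1)]=\tfrac12\bigl(\tfrac32+\tfrac{21}{10}\bigr)=\tfrac95$ and $\E[Y(0)]=\tfrac12\bigl(\tfrac32+\tfrac95\bigr)=\tfrac{33}{20}$. Hence $\ATE(M)=\tfrac3{20}$.
\item For $M'$, the conditional means are $\tfrac52$ and $\tfrac{37}{20}$ in the treated arm, and $\tfrac{77}{50}$ and $\tfrac{41}{25}$ in the control arm. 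Hence $\E[Y(1)]=\tfrac{87}{40}$ and $\E[Y(0)]=\tfrac{159}{100}$, so $\ATE(M')=\tfrac{117}{200}$.
\end{itemize}
Since $\tfrac{117}{200}\neq\tfrac{3}{20}$, two models in the class share an observational law but have different ATEs. Therefore the ATE is not identified.

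\textbf{Main obstacle.} There is no conceptual difficulty. The risk is arithmetic, specifically matching the nine factual interval coefficients exactly. I would do these checks in exact rational arithmetic and confirm that each $g_{a,b}$ integrates to one as a sanity check.
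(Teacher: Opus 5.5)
Your proposal is correct and follows the paper's proof essentially line by line. It uses the same posterior weights, the same interval-by-interval matching of $h'_a$ with $h_a$, and the same latent-state conditional means giving $117/200$ versus $3/20$; your Step~1 merely adds a verification of the densities $g_{a,b}$, which the paper states without proof. One small correction to that step: $Q'_{1,1}$ is deliberately discontinuous at $e=13/40$, jumping from $1$ to $2$ to leave the gap $(1,2)$ in the support, so drop the continuity check, since only strict monotonicity and positive branch slopes are needed for the change-of-variables formula.
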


\begin{proof}
	Using the same posterior weights as under \(M\), we obtain
	\[
	h_1'(y)
	=
	\frac15 g_{1,0}(y)+\frac45 g_{1,1}(y)
	=
	\frac{13}{50}\mathbf 1_{[0,1]}(y)
	+
	\frac{37}{50}\mathbf 1_{[2,3]}(y)
	=
	h_1(y),
	\]
	and
	\[
	h_0'(y)
	=
	\frac45 g_{0,0}(y)+\frac15 g_{0,1}(y)
	=
	\frac{23}{75}\mathbf 1_{[0,1]}(y)
	+
	\frac{49}{150}\mathbf 1_{[1,2]}(y)
	+
	\frac{11}{30}\mathbf 1_{[2,3]}(y)
	=
	h_0(y).
	\]
	Since \(\Prob(T=1)=1/2\) in both models, the joint observational laws of \((T,Y)\) coincide.
	
	It remains to compute the ATE under \(M'\). The means of the latent-state-specific treated laws are
	\[
	\E_{M'}[Y(1)\mid U=0]=\frac52,
	\qquad
	\E_{M'}[Y(1)\mid U=1]=\frac{37}{20}.
	\]
	Therefore
	\[
	\E_{M'}[Y(1)]
	=
	\frac12\cdot\frac52+\frac12\cdot\frac{37}{20}
	=
	\frac{87}{40}.
	\]
	Similarly, the means of the latent-state-specific control laws are
	\[
	\E_{M'}[Y(0)\mid U=0]=\frac{77}{50},
	\qquad
	\E_{M'}[Y(0)\mid U=1]=\frac{41}{25}.
	\]
	Thus
	\[
	\E_{M'}[Y(0)]
	=
	\frac12\cdot\frac{77}{50}
	+
	\frac12\cdot\frac{41}{25}
	=
	\frac{159}{100}.
	\]
	Hence
	\[
	\ATE(M')
	=
	\frac{87}{40}-\frac{159}{100}
	=
	\frac{117}{200}.
	\]
	
	For the original model \(M\), the worked example gives
	\[
	\E_M[Y(1)]
	=
	\frac12\cdot\frac32
	+
	\frac12\cdot\frac{21}{10}
	=
	\frac95,
	\]
	and
	\[
	\E_M[Y(0)]
	=
	\frac12\cdot\frac32
	+
	\frac12\cdot\frac95
	=
	\frac{33}{20}.
	\]
	Therefore
	\[
	\ATE(M)=\frac95-\frac{33}{20}=\frac{3}{20}.
	\]
	Thus \(M\) and \(M'\) have the same observational law but different ATEs.
\end{proof}

This construction shows that the mean effect is not identified from the observational law on this confounded model class. The contrast with the main worked example is that the selected superlevel-set topological contrast in model \(M\) is stable under factual reweighting, even though mean-based causal targets are not identified on the larger confounded model class containing \(M\) and \(M'\).
\end{document}